\newtheorem{theorem}{Theorem} %[section]
\newtheorem{lemma}{Lemma}
\newtheorem{corollary}{Corollary}
\newtheorem{definition}{Definition}
\newtheorem{proposition}{Proposition}
\newtheorem{fact}{Observation}
\newtheorem{problem}{Problem}
\newtheorem{inner-claim}{Claim}
\newenvironment{claim-proof}%
{\begin{description}[leftmargin = 0.2cm, labelsep = 0.2cm]
    \item \emph{Proof of Claim:}}{\hfill$\diamond$\end{description}}
\newcommand{\G}{G}
\DeclareMathOperator{\symdiff}{\triangle}
\newcommand{\Black}{\mathrm{black}}
\newcommand{\White}{\mathrm{white}}
\newcommand{\AX}[1]{\textnormal{#1}}
\DeclareMathOperator{\lca}{lca}
\DeclareMathOperator{\LRT}{LRT}
\DeclareMathOperator{\BRT}{BRT}
\DeclareMathOperator{\res}{res}
\newcommand{\child}{\mathsf{child}}
\DeclareMathOperator{\Aho}{Aho}
\newcommand{\PROBLEM}[1]{\textsc{#1}}
\newcommand{\hourglass}{\mathrel{\text{\ooalign{$\searrow$\cr$\nearrow$}}}}
\DeclareMathOperator{\Rbin}{\mathscr{R}^{\textrm{B}}}
\DeclareMathOperator{\cl}{cl}
\newcommand{\Rspan}[1]{\langle #1\rangle}
\newcommand{\SPEC}{\newmoon}
\newcommand{\DUPL}{\square}
\newcommand{\cupdot}{\charfusion[\mathbin]{\cup}{\cdot}}
\def\moverlay{\mathpalette\mov@rlay}
\def\mov@rlay#1#2{\leavevmode\vtop{%
    \baselineskip\z@skip \lineskiplimit-\maxdimen
    \ialign{\hfil$\m@th#1##$\hfil\cr#2\crcr}}}
\newcommand{\charfusion}[3][\mathord]{
  #1{\ifx#1\mathop\vphantom{#2}\fi
    \mathpalette\mov@rlay{#2\cr#3}
  }
  \ifx#1\mathop\expandafter\displaylimits\fi}
\DeclareRobustCommand\bigop[1]{%
  \mathop{\vphantom{\sum}\mathpalette\bigop@{#1}}\slimits@
}
\newcommand{\bigop@}[2]{%
  \vcenter{%
    \sbox\z@{$#1\sum$}%
    
\hbox{\resizebox{\ifx#1\displaystyle.9\fi\dimexpr\ht\z@+\dp\z@}{!}{$\m@th#2$}}%
  }%
}
\providecommand{\keywords}[1]{\textbf{\textit{Keywords: }} #1}
\title{Best Match Graphs with Binary Trees}
\author[1]{David Schaller}
\author[2]{Manuela Gei{\ss}}
\author[3]{Marc Hellmuth}
\author[1,4-7]{Peter F.\ Stadler}
\affil[1]{Max Planck Institute for Mathematics in the Sciences,
  Inselstra{\ss}e 22, D-04103 Leipzig, Germany}
\affil[2]{Software Competence Center Hagenberg GmbH, Hagenberg, Austria}
\affil[3]{Department of Mathematics, Faculty of Science,
  Stockholm University, SE-10691 Stockholm, Sweden}
\affil[4]{Bioinformatics Group, Department of Computer Science, and
  Interdisciplinary Center for Bioinformatics, Universit{\"a}t Leipzig,
  H{\"a}rtelstrasse 16-18, D-04107 Leipzig, Germany,
  \texttt{studla@bioinf.uni-leipzig.de}}
\affil[5]{Institute for Theoretical Chemistry, University of Vienna,
  W{\"a}hringerstrasse 17, A-1090 Wien, Austria}
\affil[6]{Facultad de Ciencias, Universidad National de Colombia, Sede
  Bogot{\'a}, Colombia}
\affil[7]{Santa Fe Insitute, 1399 Hyde Park Rd., Santa Fe NM 87501,
  USA}
\date{\ }
\begin{document}

\maketitle 

\abstract{  
  Best match graphs (BMG) are a key intermediate in graph-based orthology
  detection and contain a large amount of information on the gene tree. We
  provide a near-cubic algorithm to determine whether a BMG is
  binary-explainable, i.e., whether it can be explained by a fully resolved
  gene tree and, if so, to construct such a tree. Moreover, we show that
  all such binary trees are refinements of the unique binary-resolvable
  tree (BRT), which in general is a substantial refinement of the also
  unique least resolved tree of a BMG.  Finally, we show that the problem
  of editing an arbitrary vertex-colored graph to a binary-explainable BMG
  is NP-complete and provide an integer linear program formulation for this
  task.
}

\bigskip
\noindent
\keywords{
  Best match graphs,
  Binary trees,
  Rooted triple consistency,
  Polynomial-time algorithm,
  NP-hardness,
  Integer Linear Program}

\sloppy

\section{Introduction}

The evolutionary history of a gene family can be described by a gene tree
$T$, a species tree $S$, and an embedding of the gene tree into the species
tree (Fig.~\ref{fig:bmg-example}A). The latter is usually formalized as a
reconciliation map $\mu$ that locates gene duplication events along the
edges of the species tree, identifies speciation events in $T$ as those
that map to vertices in $S$, and encodes horizontal gene transfer as edges
in $T$ that cross from one branch of $S$ to another. Detailed gene family
histories are a prerequisite for studying associations between genetic and
phenotypic innovations. They also encode orthology, i.e., the notion that
two genes in distinct species arose from a speciation event, which is a
concept of key importance in genome annotation and phylogenomics.  Two
conceptually distinct approaches have been developed to infer orthology
and/or complete gene family histories from sequence data. Tree-based
methods explicitly construct the gene tree $T$ and the species tree $S$,
and then determine the reconciliation map $\mu$ as an optimization
problem. Graph-based methods, on the other hand, start from \emph{best
  matches}, i.e., by identifying for each gene its closest relative or
relatives in every other species. Due to the page limits, we only refer to
a few key reviews and the references therein
\cite{Nichio:17,Rusin:14,Setubal:18a}.

\begin{figure}[t]
  \begin{center}
    \includegraphics[width=0.8\textwidth]{./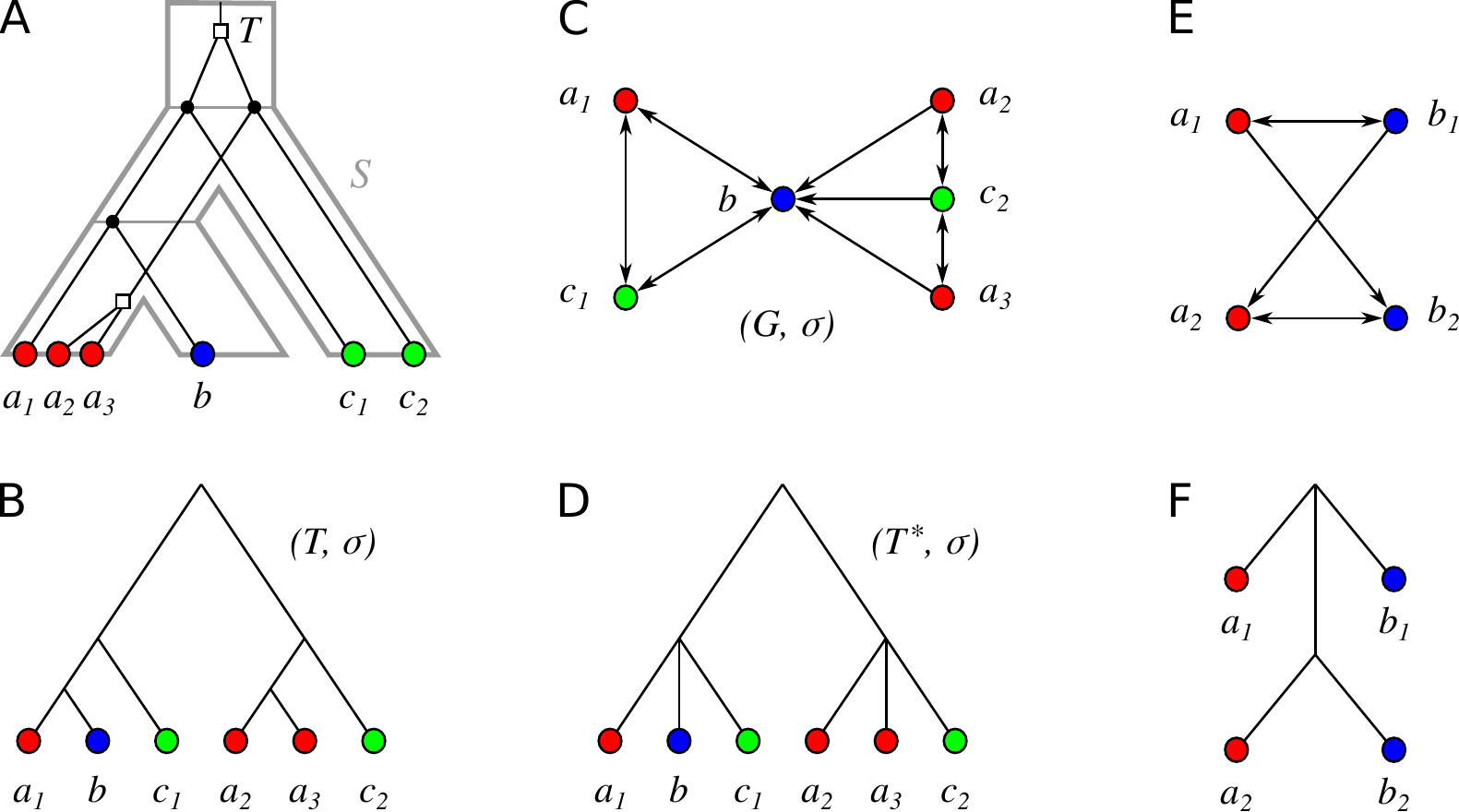}
  \end{center}
  \caption{(A) An evolutionary scenario consisting of a gene tree
    $(T,\sigma)$ (whose topology is again shown in (B)) together with an
    embedding into a species tree $S$.  The coloring $\sigma$ of the leaves
    of $T$ represents the species in which the genes reside.  Speciation
    vertices ($\SPEC$) of the gene tree coincide with the vertices of the
    species tree, whereas gene duplications ($\DUPL$) are mapped to the
    edges of $S$.  (C) The best match graph $(\G,\sigma)$ explained by
    $(T,\sigma)$.  (D) The unique least resolved tree $(T^*,\sigma)$
    explaining $(\G,\sigma)$.  (E) An hourglass, i.e. the smallest example
    for a BMG that is not binary-explainable.  (F) The (unique) tree that
    explains the hourglass.}
  \label{fig:bmg-example}
\end{figure}

Best match graphs (BMGs) have only very recently been introduced as
mathematical objects to formalize the idea of pairs of evolutionarily most
closely related genes in two species \cite{Geiss:19a}. The gene tree is
modeled as a rooted, leaf-colored phylogenetic tree $(T,\sigma)$
(Fig.~\ref{fig:bmg-example}B).  Its leaf set $L(T)$ denotes the extant
genes, and each gene $x\in L(T)$ is colored by the species $\sigma(x)$ in
whose genome it resides. Given a tree $T$, we denote the \emph{ancestor
  order} on its vertex set by $\preceq_T$. That is, we have $v \preceq_T u$
if $u$ lies along the unique path connecting $v$ to the root $\rho_T$ of
$T$, in which case we call $u$ an \emph{ancestor of $v$}. The \emph{least
  common ancestor} $\lca_{T}(A)$ is the unique $\preceq_T$-smallest vertex
that is an ancestor of all genes in $A$. Writing
$\lca_{T}(x,y)\coloneqq\lca_{T}(\{x,y\})$, we have
\begin{definition}
  Let $(T,\sigma)$ be a leaf-colored tree. A leaf $y\in L(T)$ is a
  \emph{best match} of the leaf $x\in L(T)$ if $\sigma(x)\neq\sigma(y)$ and
  $\lca_{T}(x,y)\preceq_T \lca_{T}(x,y')$ holds for all leaves $y'$ of
  color $\sigma(y')=\sigma(y)$.
  \label{def:BMG}
\end{definition}
The \emph{best match graph (BMG)} of a leaf-colored tree $(T,\sigma)$,
denoted by $\G(T,\sigma)$, is a vertex-colored digraph with vertex set
$L(T)$ and arcs $(x,y)$ if and only if $y$ is a best match of $x$
(Fig.~\ref{fig:bmg-example}C).  An arbitrary vertex-colored graph
$(\G,\sigma)$ is a best match graph if there exists a leaf-colored tree
$(T,\sigma)$ such that $(\G,\sigma) = \G(T,\sigma)$. In this case, we say
that $(T,\sigma)$ \emph{explains} $(\G,\sigma)$.

In practice, best matches (in the sense of Def.~\ref{def:BMG}) are
approximated by \emph{best hits}, that is, pairs of genes with the largest
sequence similarity. The two concepts are equivalent if a strict molecular
clock is assumed. In the presence of rate variations, however, both best
matches that are not best hits and best hits that not best matches may
occur. We refer to \cite{Stadler:20a} for a more detailed discussion and
for methods to infer best matches in the presence of rate variations.

A best match $(x,y)$ is reciprocal if $(y,x)$ is also a best match. We will
call a pair of reciprocal arcs $(x,y)$ and $(y,x)$ in a graph $(\G,\sigma)$
an \emph{edge}, denoted by $xy$.  In the absence of horizontal gene
transfer, all pairs of orthologous genes form reciprocal best matches. That
is, the undirected orthology graph is always a subgraph of the best match
graph \cite{Geiss:20b}. This simple observation has stimulated the search
for computational methods to identify the ``false-positive'' edges in a
BMG, i.e., edges that do not correspond to a pair of orthologous genes
\cite{Schaller:20x}. This requires a better understanding of the set
of trees that explain a given BMG.

In this contribution, we derive an efficient algorithm for the construction
of a binary tree that explains a BMG if and only if such a tree
exists. Such BMGs will be called \emph{binary-explainable}. This problem
can be expressed as a consistency problem involving certain sets of both
required and forbidden triples. It is therefore related to the
\PROBLEM{Most Resolved Compatible Tree} and \PROBLEM{Forbidden Triples
  (restricted to binary trees)} problems, both of which are NP-complete
\cite{Bryant:97}. However, binary-explainable BMGs are characterized in
\cite{Schaller:20x} as those BMGs that do not contain a certain colored
graph on four vertices, termed \emph{hourglass}, as induced subgraph
(Fig.~\ref{fig:bmg-example}E,F). The presence of an induced hourglass in an
arbitrary vertex-colored graph $(\G=(V,E),\sigma)$ can be checked in
$O(|E|^2)$ \cite{Schaller:20x}. This characterization, however, is not
constructive and it remained an open problem how to construct a binary tree
that explains a BMG.

This contribution is organized as follows. In Sec.~\ref{sect:bmg}, we
introduce some notation and review key properties of BMGs that are needed
later on. In Sec.~\ref{sect:algo}, we derive a constructive algorithm for
this problem that runs in near-cubic time $\tilde{O}(|V|^3)$. It produces a
unique tree, the \emph{binary-refinable tree} (BRT) of a BMG show in
Fig.~\ref{fig:bmg-lrt-brt}. The BRT has several interesting properties that
are studied in detail in Sec.~\ref{sect:BRT}. Simulated data are used in
Sec.~\ref{sect:sim} to show that BRTs are much better resolved than the
least resolved trees of BMGs.  Finally, we show in
Sec.~\ref{sec:editing-complexity} that the problem of editing an arbitrary
vertex-colored digraph to a binary-explainable BMG is NP-complete by using
a modified version of the reduction in \cite{Schaller:20y} for the general
\PROBLEM{BMG Editing} problem, and that it can be formulated and solved
exactly as an integer linear program (ILP).

\section{Best Match Graphs}
\label{sect:bmg}

By construction, no vertex $x$ of a BMG $(\G,\sigma)$ has a neighbor with
the same color, i.e., the coloring $\sigma$ is proper. Furthermore, every
vertex $x$ has at least one out-neighbor (i.e., a best match) $y$ of every
color $\sigma(y)\ne\sigma(x)$. We call such a proper coloring
\emph{sink-free} and say that $(\G,\sigma)$ is \emph{sf-colored}.
Moreover, we call a vertex-colored graph $(\G,\sigma)$ with
$\ell=|\sigma(V(\G))|$ colors $\ell$-colored and, if $(\G,\sigma)$ is a
BMG, an $\ell$-BMG.

We write $v \prec_T u$ for $v \preceq_T u$ and $v \ne u$ and use the
convention that the vertices in an edge $uv\in E(T)$ are ordered such that
$v\prec_T u$. Thus $u$ is the unique \emph{parent} of $v$, and $v$ is a
\emph{child} of $u$. The set of all children of a vertex $u$ is denoted by
$\child_{T}(u)$. The subtree of a tree $T$ rooted at a vertex $u$ is
induced by the set of vertices $\{x\in V(T) \,|\, x\preceq_T u\}$ and will be
denoted by $T(u)$.

A triple $ab|c$ is a rooted tree $t$ on three pairwise distinct vertices
$\{a,b,c\}$ such that
$\lca_{t}(a,b)\prec_t\lca_{t}(a,c)=\lca_{t}(b,c)=\rho$, where $\rho$
denotes the root of $t$.  A tree $T'$ is \emph{displayed} by a tree $T$, in
symbols $T'\le T$, if $T'$ can be obtained from a subtree of $T$ by
contraction of edges \cite{Semple:03}.  Conversely, a tree $T$ is a
\emph{refinement} of $T'$ if $T'\le T$ and additionally $L(T)=L(T')$.  We
denote by $r(T)$ the set of all triples that are displayed by a tree $T$,
and write
$\mathscr{R}_{|L} \coloneqq\left\{ xy|z \in \mathscr{R} \,\colon x,y,z\in L
\right\}$ for the restriction of a triple set $\mathscr{R}$ to a set $L$ of
leaves.  A triple set $\mathscr{R}$ on a set of leaves $L$ is
\emph{consistent} if there is a tree that displays all triples in
$\mathscr{R}$.  Moreover, $\mathscr{R}$ is called \emph{(strictly) dense}
if, for all distinct $x,y,z\in L$, there is (exactly) one triple
$t\in \mathscr{R}$ such that $L(t)=\{x,y,z\}$.  For later reference, we
will need
\begin{lemma}{\cite[SI Appendix, Lemma~7]{Hellmuth:15}}
  \label{lem:strictdense}
  Let $\mathscr{R}$ be a consistent set of triples on $L$.
  Then there is a strictly dense consistent triple set $\mathscr{R'}$ on $L$ 
  such that $\mathscr{R}\subseteq \mathscr{R'}$.
\end{lemma}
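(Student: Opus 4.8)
The plan is to reduce the statement to the standard correspondence between triple consistency and displaying trees, and then to remove all remaining ambiguity by fully resolving the tree into a binary one. First I would use the defining property of consistency: since $\mathscr{R}$ is consistent, there is a tree $T$ with $L(T)=L$ that displays every triple in $\mathscr{R}$, so that $\mathscr{R}\subseteq r(T)$. Operationally such a $T$ is produced by Aho's $\Aho$/BUILD construction, but for the present lemma only its existence is needed.

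Next I would choose an arbitrary binary refinement $T'$ of $T$, i.e.\ a binary tree with $L(T')=L(T)=L$ obtained by resolving each non-binary vertex of $T$ in any fixed way. The key point to verify is that refinement never destroys a displayed triple: if $ab|c\in r(T)$, witnessed by $\lca_T(a,b)\prec_T\lca_T(a,c)=\lca_T(b,c)$, then the analogous ancestor relations persist in $T'$, because refining only inserts new vertices below existing branch points and never merges two vertices. Hence $r(T)\subseteq r(T')$, and therefore $\mathscr{R}\subseteq r(T)\subseteq r(T')$.

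Finally I would set $\mathscr{R}'\coloneqq r(T')$. This set is consistent by construction, being displayed by the single tree $T'$. It remains to show that $r(T')$ is strictly dense, i.e.\ that for every triple of distinct leaves $\{x,y,z\}\subseteq L$ exactly one of $xy|z$, $xz|y$, $yz|x$ lies in $r(T')$. The three vertices $\lca_{T'}(x,y)$, $\lca_{T'}(x,z)$, $\lca_{T'}(y,z)$ are pairwise $\preceq_{T'}$-comparable, and their largest value is attained by at least two of them; which pair coincides selects precisely one of the three triples, giving existence. The point where I expect the work to concentrate is the uniqueness part: one must rule out that all three of these vertices coincide, i.e.\ that $x,y,z$ all diverge at a single vertex $v$. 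This would force $v$ to have three distinct children, each an ancestor of one of $x,y,z$, contradicting that $T'$ is binary and hence $|\child_{T'}(v)|\le 2$. This lca-trichotomy, together with the preservation of triples under refinement, is the only nontrivial ingredient; the remainder is bookkeeping.
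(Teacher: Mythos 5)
Your proof is correct. The paper does not prove this lemma itself but cites it from \cite[SI Appendix, Lemma~7]{Hellmuth:15}, and your argument --- pass to a tree displaying $\mathscr{R}$, take an arbitrary binary refinement $T'$, observe that refinement preserves displayed triples and that $r(T')$ is strictly dense for binary $T'$ by the lca-trichotomy, then set $\mathscr{R}'=r(T')$ --- is precisely the standard argument used in that source.
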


A leaf-colored tree $(T,\sigma)$ explaining a BMG $(\G,\sigma)$ is least
resolved if every tree $T'$ obtained from $T$ by edge contractions no
longer explains $(\G,\sigma)$. Thus, $(T,\sigma)$ does not display a tree
with fewer edges that explains $(\G,\sigma)$. As shown in \cite{Geiss:19a},
every BMG is explained by a unique least resolved tree.

Denoting by $\G[L']$ the subgraph of $\G$ induced by a subset $L'$ of
vertices and by $\sigma_{|L'}$ the color map restricted to $L'$, the
following technical result relates induced subgraphs of BMGs to subtrees of
their explaining trees:
\begin{lemma}{\textnormal{\cite[Lemma~22]{BMG-corrigendum}}}
  \label{lem:subgraph}
  Let $(\G,\sigma)$ be a BMG explained by a tree $(T,\sigma)$.
  Then, for every $u\in V(T)$, it holds $\G(T(u), \sigma_{|L(T(u))}) = 
  (\G[L(T(u))], \sigma_{|L(T(u))})$.
\end{lemma}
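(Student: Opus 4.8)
The plan is to show that the two vertex-colored digraphs in the claim have identical arc sets; they already share the vertex set $L(T(u))$ and, by construction, the same coloring $\sigma_{|L(T(u))}$. Write $L_u \coloneqq L(T(u))$. The crucial preliminary observation is that $T(u)$ is the subtree induced by all descendants of $u$, so the ancestor order of $T$ restricted to $V(T(u))$ coincides with the ancestor order of $T(u)$; in particular, for all $x,y\in L_u$ one has $\lca_{T(u)}(x,y)=\lca_{T}(x,y)$, and this common vertex satisfies $\lca_T(x,y)\preceq_T u$ because both $x$ and $y$ are descendants of $u$. I would establish this first, since every subsequent comparison of best-match conditions rests on it.

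For the inclusion $\G[L_u]\subseteq\G(T(u),\sigma_{|L_u})$, I would simply restrict the quantifier in Definition~\ref{def:BMG}. If $(x,y)$ is an arc of $\G[L_u]$, then $\sigma(x)\neq\sigma(y)$ and $\lca_T(x,y)\preceq_T\lca_T(x,y')$ holds for \emph{all} $y'\in L(T)$ of color $\sigma(y)$, hence a fortiori for all such $y'\in L_u$. Translating via the identification of least common ancestors and orders above, $y$ is a best match of $x$ in $(T(u),\sigma_{|L_u})$, so $(x,y)$ is an arc of $\G(T(u),\sigma_{|L_u})$.

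The reverse inclusion is where the real content lies, and the key step is to rule out interference from leaves outside the subtree. Suppose $(x,y)$ is an arc of $\G(T(u),\sigma_{|L_u})$; I must upgrade ``best match within $T(u)$'' to ``best match within $T$''. The only extra constraints in $T$ come from leaves $y''\in L(T)\setminus L_u$ with $\sigma(y'')=\sigma(y)$. For such a $y''$ I would argue that $u\preceq_T\lca_T(x,y'')$: indeed $\lca_T(x,y'')$ is an ancestor of $x$ and hence $\preceq_T$-comparable with the ancestor $u$ of $x$; if it were a \emph{proper} descendant of $u$ it would lie in $T(u)$, forcing its descendant $y''$ into $L_u$, a contradiction. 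Combining this with $\lca_T(x,y)\preceq_T u$ from the preliminary observation gives $\lca_T(x,y)\preceq_T u\preceq_T\lca_T(x,y'')$, so no external leaf of color $\sigma(y)$ beats $y$. Together with the best-match property of $y$ already known among the leaves of $L_u$, this shows $y$ is a best match of $x$ in $(T,\sigma)$, i.e.\ $(x,y)$ is an arc of $\G$ and therefore of $\G[L_u]$.

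I expect the main obstacle to be nothing more than the external-leaf argument in the last paragraph---precisely the positional claim $u\preceq_T\lca_T(x,y'')$---since the forward inclusion is a pure weakening of a universal quantifier and the lca/order identification is routine once $T(u)$ is seen as a descendant-closed subtree. With both inclusions in hand, the arc sets coincide, and the two colored digraphs are equal.
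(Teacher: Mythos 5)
Your proof is correct: the identification of $\preceq_{T(u)}$ and $\lca_{T(u)}$ with their restrictions from $T$, the weakening of the universal quantifier for one inclusion, and the observation that $\lca_T(x,y'')\succeq_T u$ for any $y''\notin L(T(u))$ (because that lca is $\preceq_T$-comparable with $u$ and cannot be a proper descendant of $u$) together give both inclusions. Note that the paper does not prove this lemma itself but imports it by citation from \cite[Lemma~22]{BMG-corrigendum}; your argument is the standard one for this statement, so there is no divergence to report.
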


BMGs can be characterized in terms of informative and forbidden triples
\cite{Geiss:19a,BMG-corrigendum,Schaller:20y}. Given a vertex-colored graph
$(\G,\sigma)$, we define
\begin{equation}
  \begin{split}
    \mathscr{R}(\G,\sigma) &\coloneqq
    \left\{ab|b' \colon
    \sigma(a)\neq\sigma(b)=\sigma(b'),\,
    (a,b)\in E(\G); %\text{ and }
    (a,b')\notin E(\G) \right\},\\
    \mathscr{F}(\G,\sigma) &\coloneqq 
    \left\{ab|b' \colon
    \sigma(a)\neq\sigma(b)=\sigma(b'),\,
    b\ne b';\, %\text{ and }
    (a,b),(a,b')\in E(\G) \right\}.
  \end{split}
  \label{eq:informative-triples}
\end{equation}
We refer to $\mathscr{R}(\G,\sigma)$ as the \emph{informative triples} and
to $\mathscr{F}(\G,\sigma)$ as the \emph{forbidden triples} of
$(\G,\sigma)$.  We will regularly make use of
the observation that, as a direct consequence of their definition,
forbidden triples always come in pairs:
\begin{fact}\label{obs:forb-triple-pairs}
  Let $(\G,\sigma)$ be a vertex-colored digraph. Then 
  $ab|b'\in\mathscr{F}(\G,\sigma)$ with $\sigma(b)=\sigma(b')$ if and only 
  if $ab'|b\in\mathscr{F}(\G,\sigma)$.
\end{fact}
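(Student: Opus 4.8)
The plan is to observe that membership in $\mathscr{F}(\G,\sigma)$ is controlled entirely by conditions that are symmetric in $b$ and $b'$, so that interchanging these two leaves preserves membership. First I would simply unfold the definition given in Eq.~\eqref{eq:informative-triples}: by that equation, the triple $ab|b'$ lies in $\mathscr{F}(\G,\sigma)$ precisely when three requirements hold, namely the color constraint $\sigma(a)\neq\sigma(b)=\sigma(b')$, the distinctness $b\neq b'$, and the presence of both arcs $(a,b),(a,b')\in E(\G)$.

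Next I would check that swapping the roles of $b$ and $b'$ turns the triple $ab|b'$ into $ab'|b$ while leaving each of these three requirements unchanged. The color condition $\sigma(a)\neq\sigma(b)=\sigma(b')$ is plainly symmetric in $b$ and $b'$, the distinctness $b\neq b'$ is symmetric, and the arc condition asks for \emph{both} $(a,b)$ and $(a,b')$ to be present, which is again symmetric. Hence the triples $ab|b'$ and $ab'|b$ satisfy exactly the same admission criterion, and therefore $ab|b'\in\mathscr{F}(\G,\sigma)$ if and only if $ab'|b\in\mathscr{F}(\G,\sigma)$, giving both directions at once.

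I do not expect any genuine obstacle here, since the statement follows directly from the defining equation and needs no appeal to an explaining tree. The qualifier $\sigma(b)=\sigma(b')$ in the statement is not an extra hypothesis but merely makes explicit the common-color condition already built into $\mathscr{F}$; it ensures that $ab'|b$ is even a well-typed candidate (its outer leaf $b$ must share the color of $b'$). The one point worth stressing is that $ab|b'$ and $ab'|b$ are \emph{distinct} rooted triples, pairing $a$ with different leaves, so the observation records a real symmetry of the set $\mathscr{F}(\G,\sigma)$ rather than a tautology about a single object.
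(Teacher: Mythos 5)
Your proof is correct and matches the paper's intent exactly: the paper offers no separate proof, merely noting that the observation is ``a direct consequence of their definition,'' which is precisely the symmetry of the three defining conditions in $b$ and $b'$ that you spell out. Nothing is missing.
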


We consider the following generalization of consistency for pairs of triple
set:
\begin{definition}
  A pair of triple sets $(\mathscr{R},\mathscr{F})$ is \emph{consistent} if
  there is a tree $T$ that displays all triples in $\mathscr{R}$ but none
  of the triples in $\mathscr{F}$.  In this case, we say that $T$
  \emph{agrees with} $(\mathscr{R},\mathscr{F})$.
\end{definition}
For $\mathscr{F}=\emptyset$ this definition reduces to the usual notion of
consistency of $\mathscr{R}$ \cite{Semple:03}. In general, consistency
of $(\mathscr{R},\mathscr{F})$ can be checked in polynomial time.  The
algorithm \texttt{MTT}, named for \emph{mixed triplets problem restricted
  to trees}, constructs a tree $T$ that agrees with
$(\mathscr{R},\mathscr{F})$ or determines that no such tree exists
\cite{He:06}. It can be seen as a generalization of \texttt{BUILD}, which
solves the corresponding problem for $\mathscr{F}=\emptyset$ \cite{Aho:81}.
Given a consistent triple set $\mathscr{R}$ on a set of leaves $L$,
\texttt{BUILD} constructs a deterministic tree on $L$ known as the
\emph{Aho tree}, and denoted here as $\Aho(\mathscr{R},L)$.

Two characterizations of BMGs given in \cite[Thm.~15]{BMG-corrigendum} and
\cite[Lemma~3.4 and Thm.~3.5]{Schaller:20y} can be summarized as follows:
\begin{proposition}\label{prop:BMG-charac}
  Let $(\G,\sigma)$ be a properly colored digraph with vertex set $L$.
  Then the following three statements are equivalent:
  \begin{enumerate}
    \item $(\G,\sigma)$ is a BMG.
    \item $\mathscr{R}(\G,\sigma)$ is consistent and
    $\G(\Aho(\mathscr{R}(\G,\sigma),L), \sigma) = (\G,\sigma)$.
    \item $(\G,\sigma)$ is sf-colored and
    $(\mathscr{R}(\G,\sigma),\mathscr{F}(\G,\sigma))$ is consistent.
  \end{enumerate}
  In this case, $(\Aho(\mathscr{R}(\G,\sigma),L),\sigma)$ is the unique
  least resolved tree for $(\G,\sigma)$, and a leaf-colored tree
  $(T,\sigma)$ on $L$ explains $(\G,\sigma)$ if and only if it agrees with
  $(\mathscr{R}(\G,\sigma), \mathscr{F}(\G,\sigma))$.
\end{proposition}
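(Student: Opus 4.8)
The plan is to isolate one characterization lemma and derive the entire proposition from it: for an sf-colored graph $(\G,\sigma)$ on $L$ and \emph{any} leaf-colored tree $(T,\sigma)$ on $L$, I claim that $\G(T,\sigma)=(\G,\sigma)$ if and only if $T$ agrees with $(\mathscr{R}(\G,\sigma),\mathscr{F}(\G,\sigma))$. I would prove both directions directly from Definition~\ref{def:BMG}, writing $\mathscr{R}=\mathscr{R}(\G,\sigma)$ and $\mathscr{F}=\mathscr{F}(\G,\sigma)$. For the forward direction, suppose $T$ explains $(\G,\sigma)$. If $ab|b'\in\mathscr{R}$, then $(a,b)\in E(\G)$ makes $b$ a best match of $a$ while $(a,b')\notin E(\G)$ makes $b'$ not one, so some leaf $z$ of color $\sigma(b)$ satisfies $\lca_T(a,b)\preceq_T\lca_T(a,z)\prec_T\lca_T(a,b')$, whence $T$ displays $ab|b'$. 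If $ab|b'\in\mathscr{F}$, then $b$ and $b'$ are both best matches of $a$, which forces $\lca_T(a,b)=\lca_T(a,b')$, so $T$ displays neither $ab|b'$ nor $ab'|b$. Thus every explaining tree agrees with $(\mathscr{R},\mathscr{F})$.

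For the converse I would assume $T$ agrees with $(\mathscr{R},\mathscr{F})$ and verify $\G(T,\sigma)=(\G,\sigma)$ arc by arc. If $(x,y)\in E(\G)$ but $y$ were not a best match of $x$ in $T$, then some $y'$ of color $\sigma(y)$ would give $\lca_T(x,y')\prec_T\lca_T(x,y)$, i.e.\ $T$ displays $xy'|y$; here, if $(x,y')\in E(\G)$ then $xy'|y\in\mathscr{F}$ (cf.\ Observation~\ref{obs:forb-triple-pairs}), contradicting agreement with $\mathscr{F}$, while if $(x,y')\notin E(\G)$ then $xy|y'\in\mathscr{R}$ must be displayed, contradicting $\lca_T(x,y')\prec_T\lca_T(x,y)$. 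Conversely, if $(x,y)\notin E(\G)$ with $\sigma(x)\neq\sigma(y)$, then sink-freeness yields some $y^{*}\neq y$ of color $\sigma(y)$ with $(x,y^{*})\in E(\G)$, so $xy^{*}|y\in\mathscr{R}$ is displayed by $T$ and $y$ is not a best match of $x$. This proves the lemma and, in particular, the closing assertion that a tree explains $(\G,\sigma)$ exactly when it agrees with $(\mathscr{R},\mathscr{F})$.

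The three equivalences then follow quickly. Statement~(1) gives, as noted at the beginning of Section~\ref{sect:bmg}, that $(\G,\sigma)$ is sf-colored, and it is explained by some tree, which by the forward direction agrees with $(\mathscr{R},\mathscr{F})$; hence $\mathscr{R}$ is consistent and so is $(\mathscr{R},\mathscr{F})$, i.e.\ (3) holds. Conversely, if (3) holds then some tree agrees with $(\mathscr{R},\mathscr{F})$ and, by the converse direction---which requires only sink-freeness, not that $(\G,\sigma)$ be known in advance to be a BMG---explains $(\G,\sigma)$, giving (1). The implication (2)$\Rightarrow$(1) is immediate, since (2) states outright that $\Aho(\mathscr{R},L)$ explains $(\G,\sigma)$.

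The one remaining implication, (1)$\Rightarrow$(2), is where I expect the real obstacle. By the characterization lemma it suffices to show that $\Aho(\mathscr{R},L)$, which displays $\mathscr{R}$ by construction, displays \emph{none} of the forbidden triples. It is tempting to infer this from the fact that \emph{some} explaining tree avoids $\mathscr{F}$ by claiming that every tree displaying $\mathscr{R}$ refines $\Aho(\mathscr{R},L)$; but this refinement property is false in general (a tree may display $\mathscr{R}$ while breaking a cherry that \texttt{BUILD} creates), so the argument must instead exploit the specific shape of the informative triples of a genuine BMG together with the recursion of \texttt{BUILD}. This is exactly the content of \cite[Thm.~15]{BMG-corrigendum}, which I would invoke to obtain both $\G(\Aho(\mathscr{R},L),\sigma)=(\G,\sigma)$ and the identification of $(\Aho(\mathscr{R},L),\sigma)$ with the least resolved tree, whose uniqueness is guaranteed by \cite{Geiss:19a}.
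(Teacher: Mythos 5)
Your proposal is correct, but note that the paper does not actually prove this proposition: it is introduced verbatim as a summary of \cite[Thm.~15]{BMG-corrigendum} and \cite[Lemma~3.4 and Thm.~3.5]{Schaller:20y}, so the paper's entire ``proof'' is a citation. What you do differently is re-derive most of that cited material. Your characterization lemma (for an sf-colored graph, a tree on $L$ explains $(\G,\sigma)$ if and only if it agrees with $(\mathscr{R}(\G,\sigma),\mathscr{F}(\G,\sigma))$) is essentially \cite[Lemma~3.4]{Schaller:20y}, and your argument for it is sound in both directions: the forward direction correctly translates best-match membership and non-membership into displayed informative triples and avoided forbidden triples (using that $\lca_T(a,b)$ and $\lca_T(a,b')$ are always $\preceq_T$-comparable), and the converse correctly uses sink-freeness to produce the informative triple $xy^{*}|y$ that excludes spurious best matches. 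From this lemma, the equivalence of (1) and (3), together with the closing ``agrees with'' assertion, follows exactly as you describe. You are also right that $(1)\Rightarrow(2)$ cannot be obtained from the claim that every tree displaying $\mathscr{R}(\G,\sigma)$ refines $\Aho(\mathscr{R}(\G,\sigma),L)$ --- that claim is genuinely false, as the paper itself illustrates in Fig.~\ref{fig:LRT-cl-counterexample} (via Prop.~\ref{prop:cl-iff-identifies}, $r(T)\ne\cl(\mathscr{R})$ for the LRT means precisely that $\mathscr{R}$ does not identify it); the analogous identification property is established in this paper only for $\Rbin$ (Lemma~\ref{lem:BRT-closure}), not for $\mathscr{R}$. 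Your fallback to \cite[Thm.~15]{BMG-corrigendum} for that implication and for the identification and uniqueness of the LRT is exactly the external input the paper relies on. In short, your route buys a largely self-contained verification where the paper offers none, at the cost of still importing the one genuinely hard step from the literature.
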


\section{Binary Trees Explaining a BMG in Near Cubic Time}
\label{sect:algo}

We start with a few technical results on the structure of the triples sets
$\mathscr{R}(\G,\sigma)$ and $\mathscr{F}(\G,\sigma)$. 
\begin{lemma}\label{lem:infer-bba-triple}
  Let $(\G,\sigma)$ be explained by a binary tree $(T,\sigma)$. If $ab|b'\in 
  \mathscr{F}(\G,\sigma)$ with $\sigma(b)=\sigma(b')$, then $(T,\sigma)$ 
  displays the triple $bb'|a$.
\end{lemma}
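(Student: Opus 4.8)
The plan is to translate the combinatorial hypothesis $ab|b'\in\mathscr{F}(\G,\sigma)$ into a statement about least common ancestors in $T$, and then to exploit binarity in order to locate $b$ and $b'$ together in a subtree that sits strictly below $\lca_T(a,b)$.

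First I would unpack the definition of the forbidden triples. Since $ab|b'\in\mathscr{F}(\G,\sigma)$, we have $\sigma(a)\neq\sigma(b)=\sigma(b')$, $b\neq b'$, and both $(a,b)$ and $(a,b')$ are arcs of $\G=\G(T,\sigma)$; that is, $b$ and $b'$ are both best matches of $a$. By Definition~\ref{def:BMG}, $b$ being a best match of $a$ gives $\lca_T(a,b)\preceq_T\lca_T(a,y')$ for every leaf $y'$ with $\sigma(y')=\sigma(b)$, and applying this with $y'=b'$ yields $\lca_T(a,b)\preceq_T\lca_T(a,b')$. The symmetric argument, using that $b'$ is a best match of $a$ and that $\sigma(b')=\sigma(b)$, gives the reverse inequality. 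Hence $\lca_T(a,b)=\lca_T(a,b')=:v$.

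The key step is then to use that $T$ is binary. Since $\lca_T(a,b)=v$, the leaves $a$ and $b$ lie in distinct subtrees rooted at children of $v$; likewise $a$ and $b'$ lie in distinct subtrees rooted at children of $v$. As $v$ has exactly two children, say $v_1$ and $v_2$ with $a\in L(T(v_1))$, both $b$ and $b'$ are forced into $L(T(v_2))$. Consequently $\lca_T(b,b')\preceq_T v_2\prec_T v=\lca_T(a,b)=\lca_T(a,b')$, which is precisely the condition for $(T,\sigma)$ to display the triple $bb'|a$.

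The argument is short, and its only real content is this last step: binarity of $T$ is exactly what forces $b$ and $b'$ into a common child subtree of $v$. Without it, $b$ and $b'$ could sit in two distinct child subtrees of $v$, both different from the one containing $a$, so that $\lca_T(b,b')=v$ and $T$ would display neither $bb'|a$ nor $ab|b'$. Thus the main (and essentially only) obstacle is recognising that the defining property of best matches forces $\lca_T(a,b)=\lca_T(a,b')$, after which the two-children constraint closes the argument immediately.
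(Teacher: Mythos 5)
Your proof is correct, and it takes a genuinely different route from the paper's. The paper argues at the level of triple sets: it invokes Observation~\ref{obs:forb-triple-pairs} to obtain the companion forbidden triple $ab'|b$, then appeals to Proposition~\ref{prop:BMG-charac} (an explaining tree agrees with $(\mathscr{R},\mathscr{F})$, hence displays neither $ab|b'$ nor $ab'|b$), and finally uses that a binary tree displays exactly one triple on any three leaves, leaving $bb'|a$. You instead unpack Definition~\ref{def:BMG} directly: the two arcs $(a,b)$ and $(a,b')$ force $\lca_T(a,b)=\lca_T(a,b')=v$ by the minimality condition applied in both directions, and binarity of $v$ then pins $b$ and $b'$ into the single child subtree not containing $a$, giving $\lca_T(b,b')\prec_T v$. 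What your approach buys is self-containedness: it needs only the definition of best matches, not the imported characterization of explaining trees via $(\mathscr{R},\mathscr{F})$-consistency, and it exposes the geometric reason the lemma holds. What the paper's approach buys is brevity given the machinery already on the table, and it makes the forbidden-triple pairing explicit --- the same pairing that motivates the definition of $\Rbin$ in Eq.~\eqref{eq:Rbin} and is reused in Lemmas~\ref{lem:bin-displays-Rbin} and~\ref{lem:Rbin-explains-BMG}. Your closing remark about why binarity is indispensable (three or more children at $v$ would allow $\lca_T(b,b')=v$, so no triple on $\{a,b,b'\}$ is displayed) is accurate and matches the role the hourglass plays as the obstruction to binary explainability.
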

\begin{proof}
  Suppose that $ab|b'\in \mathscr{F}(\G,\sigma)$ with $\sigma(b)=\sigma(b')$, 
  and recall that $a,b,b'$ must be pairwise distinct.
  By Obs.~\ref{obs:forb-triple-pairs}, we have $ab'|b\in 
  \mathscr{F}(\G,\sigma)$.
  By Prop.~\ref{prop:BMG-charac} and since $(T,\sigma)$ explains $(\G,\sigma)$, 
  $(T,\sigma)$ displays none of the two forbidden triples $ab|b'$ and $ab'|b$.
  However, the fact that $(T,\sigma)$ is binary implies that exactly one triple 
  on $\{a,b,b'\}$ must be displayed, of which only $bb'|a$ remains.
  %\qed
\end{proof}
Lemma~\ref{lem:infer-bba-triple} implies that we can infer a set of
additional triples that would be required for a binary tree to explain a
vertex-colored graph $(\G,\sigma)$.  This motivates the definition of an
extended informative triple set
\begin{equation}\label{eq:Rbin}
  \Rbin(\G,\sigma)\coloneqq\;
  \mathscr{R}(\G,\sigma) \cup \{bb'|a\colon ab|b'\in \mathscr{F}(\G,\sigma)
  \text{ and }\sigma(b)=\sigma(b')\}.
\end{equation}
Since informative and forbidden triples are defined by the presence and
absence of certain arcs in a vertex-colored digraph, this leads to the
following
\begin{fact}\label{obs:R-restriction}
  Let $(\G,\sigma)$ be a vertex-colored digraph and $L'\subseteq V(\G)$.
  Then $R(\G,\sigma)_{|L'}=R(\G[L'],\sigma_{|L'})$ holds for any
  $R\in\{\mathscr{R},\mathscr{F},\Rbin\}$.
\end{fact}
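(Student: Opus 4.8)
The plan is to exploit the fact that each of the three triple sets is \emph{locally defined}: whether a triple $xy|z$ on three distinct leaves belongs to it depends only on the colors $\sigma(x),\sigma(y),\sigma(z)$ and on the presence or absence of the arcs among $x,y,z$. Since passing to an induced subgraph changes neither the colors of the retained vertices nor the arcs between them, the defining predicate must give the same answer whether evaluated in $(\G,\sigma)$ or in $(\G[L'],\sigma_{|L'})$, as long as all three leaves lie in $L'$. I would first establish the statement for $R=\mathscr{R}$ and $R=\mathscr{F}$ directly, and then derive it for $R=\Rbin$.

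For $R=\mathscr{R}$, fix a triple $xy|z$ with $x,y,z\in L'$. By the definition of the induced subgraph, $\sigma_{|L'}$ agrees with $\sigma$ on each of $x,y,z$, and for any $u,v\in L'$ we have $(u,v)\in E(\G[L'])$ if and only if $(u,v)\in E(\G)$. Hence the three defining conditions of \eqref{eq:informative-triples}, namely $\sigma(x)\neq\sigma(y)=\sigma(z)$, $(x,y)\in E(\G)$, and $(x,z)\notin E(\G)$, hold in $(\G,\sigma)$ exactly when the corresponding conditions hold in $(\G[L'],\sigma_{|L'})$. This yields that $xy|z\in\mathscr{R}(\G,\sigma)$ with $x,y,z\in L'$ if and only if $xy|z\in\mathscr{R}(\G[L'],\sigma_{|L'})$, which is precisely the asserted equality $\mathscr{R}(\G,\sigma)_{|L'}=\mathscr{R}(\G[L'],\sigma_{|L'})$. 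The same argument, now reading off the conditions of $\mathscr{F}$, settles the case $R=\mathscr{F}$.

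For $R=\Rbin$, I would use that restriction to $L'$ distributes over the union in \eqref{eq:Rbin}, so that it suffices to treat $\mathscr{R}(\G,\sigma)$ (already done) and the set $\{bb'|a\colon ab|b'\in\mathscr{F}(\G,\sigma),\,\sigma(b)=\sigma(b')\}$ separately. The key point is that the relabeling $ab|b'\mapsto bb'|a$ leaves the underlying leaf set $\{a,b,b'\}$ unchanged, so $bb'|a$ has all three leaves in $L'$ precisely when $ab|b'$ does. Combining this with the already established identity $\mathscr{F}(\G,\sigma)_{|L'}=\mathscr{F}(\G[L'],\sigma_{|L'})$ shows that the restricted extra triples coincide with those generated from $\mathscr{F}(\G[L'],\sigma_{|L'})$, and the equality for $\Rbin$ follows. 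I expect the only mild obstacle to be this last bookkeeping step, i.e.\ verifying that restriction commutes with the leaf-set-preserving relabeling used in the definition of $\Rbin$; everything else is an immediate consequence of the locality of the defining conditions.
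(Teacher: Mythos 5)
Your proposal is correct and matches the paper's reasoning exactly: the paper states this as an Observation without a formal proof, justifying it in one sentence by noting that informative and forbidden triples are defined purely by the presence and absence of arcs (and the colors) among the three vertices involved, which is precisely the locality argument you spell out. Your additional bookkeeping for $\Rbin$ (restriction distributes over the union, and the relabeling $ab|b'\mapsto bb'|a$ preserves the leaf set) is a correct elaboration of the same idea.
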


\begin{lemma}\label{lem:bin-displays-Rbin}
  If $(T,\sigma)$ is a binary tree explaining the BMG $(\G,\sigma)$,
  then $(T,\sigma)$ displays $\Rbin(\G,\sigma)$.
\end{lemma}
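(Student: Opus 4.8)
The plan is to show that a binary tree $(T,\sigma)$ explaining $(\G,\sigma)$ displays every triple in $\Rbin(\G,\sigma)$, which by the definition in Eq.~\eqref{eq:Rbin} splits into two kinds of triples: those in $\mathscr{R}(\G,\sigma)$ and those of the form $bb'|a$ arising from forbidden triples $ab|b'\in\mathscr{F}(\G,\sigma)$ with $\sigma(b)=\sigma(b')$. First I would handle the first kind: since $(T,\sigma)$ explains $(\G,\sigma)$, Prop.~\ref{prop:BMG-charac} tells us that $(T,\sigma)$ agrees with $(\mathscr{R}(\G,\sigma),\mathscr{F}(\G,\sigma))$, and in particular $(T,\sigma)$ displays all of $\mathscr{R}(\G,\sigma)$. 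This disposes of the informative triples immediately.

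For the second kind, I would invoke Lemma~\ref{lem:infer-bba-triple} directly. That lemma states precisely that, under the hypothesis that $(\G,\sigma)$ is explained by a \emph{binary} tree $(T,\sigma)$, every forbidden triple $ab|b'\in\mathscr{F}(\G,\sigma)$ with $\sigma(b)=\sigma(b')$ forces $(T,\sigma)$ to display $bb'|a$. Since the additional triples added to form $\Rbin(\G,\sigma)$ are exactly these $bb'|a$, this shows $(T,\sigma)$ displays the second kind as well. Combining the two cases, $(T,\sigma)$ displays every triple in $\Rbin(\G,\sigma)=\mathscr{R}(\G,\sigma)\cup\{bb'|a\colon ab|b'\in\mathscr{F}(\G,\sigma),\ \sigma(b)=\sigma(b')\}$.

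The statement is essentially an immediate corollary of Lemma~\ref{lem:infer-bba-triple} together with Prop.~\ref{prop:BMG-charac}, so there is no genuine obstacle here; the only point requiring care is that the hypothesis of Lemma~\ref{lem:infer-bba-triple}---namely that the explaining tree is binary---is satisfied, which it is by assumption. I would simply note that the binarity is what guarantees that, on the triple of leaves $\{a,b,b'\}$, exactly one resolved triple is displayed, and since the two forbidden triples $ab|b'$ and $ab'|b$ are excluded (using Obs.~\ref{obs:forb-triple-pairs}), only $bb'|a$ survives. Thus the proof is a short assembly of the preceding results rather than a new argument.
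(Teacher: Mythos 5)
Your proof is correct and follows essentially the same route as the paper: split $\Rbin(\G,\sigma)$ into the informative triples, handled via Prop.~\ref{prop:BMG-charac}, and the added triples $bb'|a$, handled via Obs.~\ref{obs:forb-triple-pairs} and Lemma~\ref{lem:infer-bba-triple} using binarity. No gaps.
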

\begin{proof}
  Let $(T,\sigma)$ be a binary tree that explains $(\G,\sigma)$.  By
  Prop.~\ref{prop:BMG-charac}, $(\G,\sigma)$ displays all informative
  triples $\mathscr{R}(\G,\sigma)$.  Now let
  $bb'|a\in \Rbin(\G,\sigma)\setminus \mathscr{R}(\G,\sigma)$.  Hence, by
  definition and Obs.~\ref{obs:forb-triple-pairs}, $ab|b'$ and $ab'|b$ are
  forbidden triples for $(\G,\sigma)$. This together with
  Lemma~\ref{lem:infer-bba-triple} and the fact that $(T,\sigma)$ is binary
  implies that $bb'|a$ is displayed by $(T,\sigma)$.  In summary,
  therefore, $(T,\sigma)$ displays all triples in $\Rbin(\G,\sigma)$.  %\qed
\end{proof}

\begin{lemma}\label{lem:Rbin-explains-BMG}
  Let $(\G,\sigma)$ be an sf-colored digraph with vertex set $L$. Every
  tree on $L$ that displays $\Rbin(\G,\sigma)$ explains $(\G,\sigma)$.
\end{lemma}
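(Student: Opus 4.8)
The plan is to verify directly that $\G(T,\sigma)$ and $(\G,\sigma)$ have the same arc set; since they share the vertex set $L$ and the coloring $\sigma$, and since arcs in both graphs run only between vertices of distinct colors, this suffices. I would fix an ordered pair $(x,y)$ with $\sigma(x)\neq\sigma(y)$ and establish the two inclusions $E(\G)\subseteq E(\G(T,\sigma))$ and $E(\G(T,\sigma))\subseteq E(\G)$ separately. Throughout, the key translation is that $T$ displays a triple $ab|c$ exactly when $\lca_T(a,b)\prec_T\lca_T(a,c)=\lca_T(b,c)$, and that $(x,y)\in E(\G(T,\sigma))$ means precisely that $y$ is a best match of $x$ in $T$, i.e.\ $\lca_T(x,y)\preceq_T\lca_T(x,y')$ for every $y'$ with $\sigma(y')=\sigma(y)$.

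For the inclusion $E(\G)\subseteq E(\G(T,\sigma))$, I would assume $(x,y)\in E(\G)$ and argue by contradiction that $y$ is a best match of $x$ in $T$. If not, there is some $y'$ with $\sigma(y')=\sigma(y)$ and $\lca_T(x,y')\prec_T\lca_T(x,y)$, so $T$ displays $xy'|y$. I then split on whether $(x,y')\in E(\G)$. If $(x,y')\in E(\G)$, then $xy|y'\in\mathscr{F}(\G,\sigma)$; by Fact~\ref{obs:forb-triple-pairs} also $xy'|y\in\mathscr{F}(\G,\sigma)$, whence $yy'|x\in\Rbin(\G,\sigma)$ by the definition in \eqref{eq:Rbin}. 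Since $T$ displays $\Rbin(\G,\sigma)$, it displays $yy'|x$, forcing $\lca_T(x,y)=\lca_T(x,y')$ and contradicting $xy'|y$. If instead $(x,y')\notin E(\G)$, then $xy|y'\in\mathscr{R}(\G,\sigma)\subseteq\Rbin(\G,\sigma)$ is displayed by $T$, which again contradicts $T$ displaying $xy'|y$. Either way $y$ is a best match of $x$, so $(x,y)\in E(\G(T,\sigma))$.

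For the reverse inclusion I expect the sf-colored hypothesis to do the essential work; this is the only place it is needed, and identifying it as the crux is the main conceptual point. Assume $(x,y)\in E(\G(T,\sigma))$ but $(x,y)\notin E(\G)$. Because $(\G,\sigma)$ is sf-colored and $\sigma(x)\neq\sigma(y)$, the vertex $x$ has an out-neighbor $y'$ of color $\sigma(y)$ in $\G$; as $(x,y)\notin E(\G)$ we have $y'\neq y$, so $xy'|y\in\mathscr{R}(\G,\sigma)\subseteq\Rbin(\G,\sigma)$. Then $T$ displays $xy'|y$, i.e.\ $\lca_T(x,y')\prec_T\lca_T(x,y)$, contradicting that $y$ is a best match of $x$. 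Hence $(x,y)\in E(\G)$. Combining the two inclusions yields $\G(T,\sigma)=(\G,\sigma)$. No delicate step remains; the only thing to watch is the bookkeeping of which triples the hypothesis ``$T$ displays $\Rbin(\G,\sigma)$'' actually supplies, namely the informative triples outright and, for each reciprocal pair of forbidden triples, the resolving triple $yy'|x$ that pins down the ambiguous case above.
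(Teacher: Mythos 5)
Your proof is correct, but it takes a different route from the paper. The paper's proof stays entirely at the level of triple sets: since $\mathscr{R}(\G,\sigma)\subseteq\Rbin(\G,\sigma)$ is displayed, and since for every forbidden triple $ab|b'$ the triple $bb'|a\in\Rbin(\G,\sigma)$ is displayed (so that $ab|b'$ cannot be, a tree displaying at most one triple per leaf set), the tree agrees with $(\mathscr{R}(\G,\sigma),\mathscr{F}(\G,\sigma))$; the conclusion then follows by citing Prop.~\ref{prop:BMG-charac}, which says that an sf-colored graph with a consistent pair $(\mathscr{R},\mathscr{F})$ is a BMG explained by exactly the trees agreeing with that pair. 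You instead verify $\G(T,\sigma)=(\G,\sigma)$ arc by arc from the definition of best matches, which amounts to re-proving inline the relevant direction of Prop.~\ref{prop:BMG-charac} for this special case. Both arguments are sound. The paper's version is shorter and consistent with its triple-set machinery; yours is self-contained (no reliance on the cited characterization theorem) and has the merit of isolating precisely where sf-coloredness enters, namely only in the inclusion $E(\G(T,\sigma))\subseteq E(\G)$, where the out-neighbor $y'$ of color $\sigma(y)$ supplies the informative triple $xy'|y$ that rules out the spurious best match. One small point of care in your first inclusion: the step from ``$y$ is not a best match of $x$'' to ``$T$ displays $xy'|y$'' uses that $\lca_T(x,y)$ and $\lca_T(x,y')$ are both ancestors of $x$ and hence comparable; this is fine but worth stating.
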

\begin{proof}
  Suppose that a tree $(T,\sigma)$ on $L$ displays $\Rbin(\G,\sigma)$ and
  thus, in particular, $\mathscr{R}(\G,\sigma)$.  Now suppose
  $ab|b'\in\mathscr{F}(\G,\sigma)$ with $\sigma(b)=\sigma(b')$ is a
  forbidden triple for $(\G,\sigma)$ and hence,
  $bb'|a \in \Rbin(\G,\sigma)$.  Clearly, $(T,\sigma)$ displays at most one
  of the three possible triples on $\{a,b,b'\}$. Taken together, the latter
  arguments imply that $(T,\sigma)$ does not display $ab|b'$.  In summary,
  $(T,\sigma)$ displays all triples in $\mathscr{R}(\G,\sigma)$ and none of
  the triples in $\mathscr{F}(\G,\sigma)$ and thus,
  $(\mathscr{R}(\G,\sigma),\mathscr{F}(\G,\sigma))$ is consistent.
  Therefore and since $(\G,\sigma)$ is sf-colored by assumption, we can
  apply Prop.~\ref{prop:BMG-charac} to conclude that the tree $(T,\sigma)$
  on $L$ explains the BMG $(\G,\sigma)$.  %\qed
\end{proof}
Using Lemmas~\ref{lem:bin-displays-Rbin} and~\ref{lem:Rbin-explains-BMG}, it can
be shown that consistency of $\Rbin(\G,\sigma)$ is sufficient for an
sf-colored graph $(\G,\sigma)$ to be binary-explainable.

\begin{theorem}\label{thm:Rbin-MAIN}
  A properly vertex-colored graph $(\G,\sigma)$ with vertex set $L$ is
  binary-explainable if and only if (i) $(\G,\sigma)$ is sf-colored, and
  (ii) $\Rbin\coloneqq\Rbin(\G,\sigma)$ is consistent.  In this case, the BMG
  $(\G,\sigma)$ is explained by every refinement of the tree
  $(\Aho(\Rbin, L), \sigma)$.
\end{theorem}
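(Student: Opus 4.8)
The plan is to prove the two directions of the equivalence separately, harvesting the ``in this case'' assertion from the sufficiency argument; Lemmas~\ref{lem:bin-displays-Rbin} and~\ref{lem:Rbin-explains-BMG} are nearly complementary, so the real content lies only in bridging them. For necessity I would assume $(\G,\sigma)$ is binary-explainable, so that some binary tree $(T,\sigma)$ explains it. Then $(\G,\sigma)$ is a BMG and hence sf-colored by construction (as recorded at the start of Sec.~\ref{sect:bmg}), which gives~(i). Lemma~\ref{lem:bin-displays-Rbin} shows that $(T,\sigma)$ displays $\Rbin(\G,\sigma)$, and the existence of a single tree displaying all of $\Rbin$ is exactly the consistency of $\Rbin$, giving~(ii).

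For sufficiency together with the final claim, I would assume (i) and (ii). Consistency of $\Rbin$ means that \texttt{BUILD} succeeds, so the Aho tree $(\Aho(\Rbin,L),\sigma)$ is defined on the leaf set $L$ and displays $\Rbin$. Let $(T',\sigma)$ be an arbitrary refinement of $(\Aho(\Rbin,L),\sigma)$. By the definition of refinement, $\Aho(\Rbin,L)\le T'$ and $L(T')=L$, whence $r(\Aho(\Rbin,L))\subseteq r(T')$ and $T'$ still displays every triple of $\Rbin$. Since $(\G,\sigma)$ is sf-colored, Lemma~\ref{lem:Rbin-explains-BMG} then shows that $(T',\sigma)$ explains $(\G,\sigma)$, which is exactly the final assertion. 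Binary-explainability follows by taking $(T',\sigma)$ to be a binary refinement of $(\Aho(\Rbin,L),\sigma)$; such a refinement always exists, since any internal vertex of out-degree exceeding two can be repeatedly split until the tree is fully resolved.

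I expect no genuine obstacle, as the argument is essentially an assembly of the two preceding lemmas. The one step worth isolating is the monotonicity of the displayed-triple relation under refinement---passing from $\Aho(\Rbin,L)$ to any refinement can only enlarge the set of displayed triples---which is precisely what upgrades ``there exists a tree displaying $\Rbin$'' to ``every refinement of the canonical Aho tree explains $(\G,\sigma)$''.
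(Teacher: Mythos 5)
Your proposal is correct and follows essentially the same route as the paper's own proof: necessity via Lemma~\ref{lem:bin-displays-Rbin} together with the fact that BMGs are sf-colored, and sufficiency plus the ``in this case'' clause via the Aho tree, monotonicity of displayed triples under refinement, and Lemma~\ref{lem:Rbin-explains-BMG} applied to an arbitrary (in particular, a binary) refinement. The only cosmetic difference is that the paper cites Prop.~\ref{prop:BMG-charac} for sf-coloredness where you invoke the observation at the start of Sec.~\ref{sect:bmg}; both are valid.
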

\begin{proof}
  First suppose that $(\G,\sigma)$ is sf-colored and that $\Rbin$ is
  consistent. Therefore, the tree $T\coloneqq \Aho(\Rbin, L)$ exists.  By
  correctness of \texttt{BUILD} \cite{Aho:81}, $T$ displays all triples in
  $\Rbin$.  Clearly, every refinement $T'$ of $T$ also displays $\Rbin$.
  Hence, for every refinement $T'$ of $T$ (including $T$ itself), we can
  apply Lemma~\ref{lem:Rbin-explains-BMG} to conclude that $(T',\sigma)$
  explains $(\G,\sigma)$. In particular, $(\G,\sigma)$ is a BMG.  Since
  there always exists a binary refinement of $T$, the latter arguments
  imply that $(\G,\sigma)$ is binary-explainable.
  
  Now suppose that $(\G,\sigma)$ can be explained by a binary tree 
  $(T,\sigma)$, and note that $(\G,\sigma)$ is a BMG in this case.
  By Prop.~\ref{prop:BMG-charac}, $(\G,\sigma)$ is sf-colored.
  Moreover, the binary tree $(T,\sigma)$ displays $\Rbin$ as a 
  consequence of Lemma~\ref{lem:bin-displays-Rbin}.
  Therefore, $\Rbin$ must be consistent.
  %\qed
\end{proof}

\begin{algorithm}[t]
  \caption{Construction of a binary tree explaining $(\G,\sigma)$.}
  \label{alg:binary-expl-tree-short}
  \DontPrintSemicolon
  \SetKwFunction{FRecurs}{void FnRecursive}%
  \SetKwFunction{FRecurs}{BinaryTree}
  \SetKwProg{Fn}{Function}{}{}
  
  \KwIn{A properly vertex-colored graph $(\G,\sigma)$ with vertex set $L$.}
  \KwOut{Binary tree $(T,\sigma)$ explaining $(\G,\sigma)$ if one exists.}
  
  \uIf{$(\G,\sigma)$ is not sf-colored}{
    \textbf{exit false}
  }
  
  construct the extended triple set $\Rbin\coloneqq \Rbin(\G,\sigma)$\;  
  $T\leftarrow \Aho(\Rbin,L)$\;
  
  \uIf{$T$ is a tree}{
    construct an arbitrary binary refinement $T'$ of $T$\;
    \Return $(T',\sigma)$\;}
  \Else{
    \textbf{exit false}
  }
\end{algorithm}

Thm.~\ref{thm:Rbin-MAIN} implies that the problem of determining whether an
sf-colored graph $(\G,\sigma)$ is binary-explainable can be reduced to a
triple consistency problem. More precisely, it establishes the correctness
of Alg.~\ref{alg:binary-expl-tree-short}, which in turn relies on the
construction of $\Aho(\Rbin,L)$. The latter can be achieved in polynomial
time \cite{Aho:81}. Making use of the improvements achievable by using
dynamic graph data structures \cite{Deng:18,Henzinger:99}, we obtain the
following performance bound:
\begin{corollary}
  \label{cor:BRT-constr-complexity}
  There exists an $O(|L|^3 \log^2 |L|)$-time algorithm that constructs a
  binary tree explaining a vertex-colored digraph $(\G,\sigma)$ with vertex
  set $L$, if and only if such a tree exists.
\end{corollary}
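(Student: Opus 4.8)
Correctness of Algorithm~\ref{alg:binary-expl-tree-short}---that it returns a binary tree explaining $(\G,\sigma)$ precisely when such a tree exists---is already supplied by Theorem~\ref{thm:Rbin-MAIN}: the two \textbf{exit false} branches are taken exactly when $(\G,\sigma)$ is not sf-colored or when $\Rbin(\G,\sigma)$ is inconsistent, so that $\Aho(\Rbin,L)$ fails to exist. The plan is therefore only to bound the running time, by accounting separately for each of the four phases of the algorithm; I write $n\coloneqq|L|$ throughout.

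Three of the phases are cheap. Deciding whether $(\G,\sigma)$ is sf-colored reduces to checking, for each vertex, the set of colors occurring among its out-neighbors (the own color must be absent and every other color present), which costs $O(n^2)$. Building $\Rbin(\G,\sigma)$ is done by iterating over all ordered triples of pairwise distinct vertices and deciding in $O(1)$ per triple, from the presence or absence of the relevant arcs, whether it contributes an informative triple, a forbidden triple, or a triple $bb'|a$ of~\eqref{eq:Rbin}; this takes $O(n^3)$ time and---crucially for the next phase---produces a triple set of size $|\Rbin|=O(n^3)$. Finally, turning the output tree into a binary refinement, by resolving each vertex of degree greater than three arbitrarily, takes $O(n)$ time.

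The remaining and decisive phase is the construction of $T\coloneqq\Aho(\Rbin,L)$ by \texttt{BUILD}, whose textbook implementation recomputes the connected components of the Aho graph at every node of the recursion and would be too slow. Here I would invoke the fast implementations of \texttt{BUILD} built on dynamic graph-connectivity data structures \cite{Deng:18,Henzinger:99}: the Aho graph is maintained under the edge deletions induced by descending through the recursion, which amount to $O(|\Rbin|)$ updates in total, each costing $O(\log^2 n)$ (amortized), so the whole construction runs in $O(|\Rbin|\log^2 n)$ time. Since $\log|\Rbin|=\Theta(\log n)$, substituting $|\Rbin|=O(n^3)$ yields $O(n^3\log^2 n)$, which dominates the other three phases and gives the claimed bound. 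The only genuine obstacle is thus to verify that this near-linear-in-$|\Rbin|$ guarantee for \texttt{BUILD} applies verbatim to our input, even though $\Rbin$ can be as dense as $\Theta(n^3)$; everything else is a routine summation of the per-phase costs.
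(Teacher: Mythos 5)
Your proposal is correct and follows essentially the same route as the paper: check sf-coloredness in $O(|L|^2)$, build $\Rbin$ in $O(|L|^3)$, run the dynamic-connectivity-based \texttt{BUILD} of \cite{Deng:18}, and binarize in $O(|L|)$. The one point you flag as an "obstacle" is already settled by the cited result (Thm.~3 of \cite{Deng:18}), which states the $O(M\log^2 M)$ bound directly in terms of the total size $M$ of the input triple set, so it applies verbatim even when $|\Rbin|=\Theta(|L|^3)$.
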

\begin{proof}
  For a vertex-colored digraph $(\G,\sigma)$ with vertex set $L$ it can be
  decided in $O(|L|^2)$ whether it is sf-colored, i.e., whether it is
  properly colored and every vertex has an out-neighbor with every other
  color.  The set $\Rbin\coloneqq \Rbin(\G,\sigma)$ can easily be
  constructed in $O(|L|^3)$ using Eqs.~\ref{eq:informative-triples}
  and~\ref{eq:Rbin} and the number of triples in $\Rbin$ is bounded by
  $O(|L|^3)$.  Note that every triple in $\Rbin$ is a tree with a constant
  number of vertices and edges. Thus, the total number $M$ of vertices and
  edges in $\Rbin$ is also in $O(|L|^3)$. The algorithm \texttt{BuildST}
  \cite{Deng:18} solves the consistency problem for $\Rbin$ and constructs
  a corresponding (not necessarily binary) tree $T$ in
  $O(M \log^2 M)=O(|L|^3 \log^2 |L|)$ time \cite[Thm.~3]{Deng:18}.
  Finally, we can obtain an arbitrary binary refinement $T'$ of $T$ in
  $O(|L|)$. Thus there exists a version of
  Alg.~\ref{alg:binary-expl-tree-short} that solves the problem in
  $O(|L|^3 \log^2 |L|)$ time.  %\qed
\end{proof}

\section{The Binary-Refinable Tree of a BMG}
\label{sect:BRT}

If a graph $(\G,\sigma)$ with vertex set $L$ is binary-explainable,
Thm.~\ref{thm:Rbin-MAIN} implies that $\Rbin\coloneqq\Rbin(\G,\sigma)$ is 
consistent and every
refinement of $(\Aho(\Rbin, L), \sigma)$ explains $(\G,\sigma)$.  In this
section, we investigate the properties of this tree in more detail.
\begin{definition}
  The \emph{binary-refinable tree (BRT)} of a binary-explainable BMG
  $(\G,\sigma)$ with vertex set $L$ is the leaf-colored tree
  $(\Aho(\Rbin(\G,\sigma), L), \sigma)$.
\end{definition}
The BRT is not necessarily a binary tree.  However,
Thm.~\ref{thm:Rbin-MAIN} implies that the BRT as well as each of its binary
refinements explains $(\G,\sigma)$.  Note that the tree $\Aho(\Rbin, L)$
and thus the BRT are well-defined because Thm.~\ref{thm:Rbin-MAIN} ensures
consistency of $\Rbin$ for binary-explainable graphs, and the Aho tree as
produced by \texttt{BUILD} is uniquely determined by the set of input
triples \cite{Aho:81}.
\begin{corollary}
  If $(\G,\sigma)$ is a binary explainable BMG, then its BRT is a
  refinement of the LRT.
\end{corollary}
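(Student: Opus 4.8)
The plan is to show that the LRT and the BRT are built from comparable triple sets, and that the BRT, being built from a superset of triples, must refine the LRT. By Proposition~\ref{prop:BMG-charac}, the LRT of $(\G,\sigma)$ is exactly $\Aho(\mathscr{R}(\G,\sigma), L)$, while by definition the BRT is $\Aho(\Rbin(\G,\sigma), L)$. Since $\mathscr{R}(\G,\sigma)\subseteq\Rbin(\G,\sigma)$ by~\eqref{eq:Rbin}, both Aho trees are defined on the same leaf set $L$, and the BRT is built from at least as many triples. So the first step is to reduce the corollary to the following monotonicity principle for the Aho construction: if $\mathscr{R}\subseteq\mathscr{R}'$ are both consistent triple sets on a common leaf set $L$, then $\Aho(\mathscr{R}',L)$ is a refinement of $\Aho(\mathscr{R},L)$.

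To establish this monotonicity, I would argue that every triple displayed by the smaller Aho tree is also displayed by the larger one. Concretely, $T_1\coloneqq\Aho(\mathscr{R},L)$ displays $\mathscr{R}$ by correctness of \texttt{BUILD}, and $T_2\coloneqq\Aho(\mathscr{R}',L)$ displays $\mathscr{R}'\supseteq\mathscr{R}$, hence also $\mathscr{R}$. The key fact I would invoke is that the Aho tree is the \emph{least resolved} tree displaying its input triples: that is, $\Aho(\mathscr{R},L)$ displays a triple $ab|c$ if and only if that triple is \emph{inferred} by (in the transitive closure of) $\mathscr{R}$. Since every triple displayed by $T_1$ is inferred from $\mathscr{R}\subseteq\mathscr{R}'$, it is likewise inferred from $\mathscr{R}'$ and therefore displayed by $T_2$. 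Thus $r(T_1)\subseteq r(T_2)$, and since both trees share the leaf set $L$, this is precisely the statement $T_1\le T_2$ with $L(T_1)=L(T_2)$, i.e., $T_2$ is a refinement of $T_1$.

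The main obstacle is pinning down the precise characterization of which triples an Aho tree displays, since the refinement conclusion depends on it. One must be careful that $\Aho(\mathscr{R},L)$ may display triples not literally in $\mathscr{R}$ (those forced by the recursive partition in \texttt{BUILD}), so the clean inclusion $r(T_1)\subseteq r(T_2)$ needs the characterization via the closure operation rather than via the raw input sets. I would use the closure operator $\cl$ already hinted at in the preamble: the set of triples displayed by $\Aho(\mathscr{R},L)$ equals $\cl(\mathscr{R})$ restricted appropriately, closure is monotone, and $\mathscr{R}\subseteq\mathscr{R}'$ gives $\cl(\mathscr{R})\subseteq\cl(\mathscr{R}')$. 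If the exact monotonicity of the Aho construction is available as a cited lemma, the proof is essentially immediate; otherwise the short argument above via displayed-triple inclusion suffices. Finally I would note that the LRT and BRT trivially have the same leaf set $L$, completing the verification that the BRT refines the LRT.
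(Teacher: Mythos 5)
Your reduction to a general monotonicity principle for the Aho construction --- ``if $\mathscr{R}\subseteq\mathscr{R}'$ are consistent on the same leaf set $L$, then $\Aho(\mathscr{R}',L)$ refines $\Aho(\mathscr{R},L)$'' --- is the crux of your argument, and it is false. Take $L=\{a,b,c,d\}$, $\mathscr{R}=\{ab|c\}$ and $\mathscr{R}'=\{ab|c,\,ad|c,\,bd|c\}$; both are consistent. Then $\Aho(\mathscr{R},L)$ is the tree with the single non-trivial cluster $\{a,b\}$ and displays $ab|d$, whereas $\Aho(\mathscr{R}',L)$ is the tree with the single non-trivial cluster $\{a,b,d\}$, which does not display $ab|d$ and hence is not a refinement of $\Aho(\mathscr{R},L)$. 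The underlying error is the claim that $\Aho(\mathscr{R},L)$ displays a triple if and only if it lies in $\cl(\mathscr{R})$: the Aho tree is minimally resolved in the sense that no edge can be contracted while still displaying $\mathscr{R}$, but it may nevertheless display triples outside $\cl(\mathscr{R})$ (in the example, $ab|d\notin\cl(\{ab|c\})$). The paper itself makes this point: Fig.~\ref{fig:LRT-cl-counterexample} exhibits an LRT --- which \emph{is} an Aho tree of $\mathscr{R}(\G,\sigma)$ by Prop.~\ref{prop:BMG-charac} --- with $r(T)\ne\cl(\mathscr{R}(\G,\sigma))$; and the whole point of Lemma~\ref{lem:BRT-closure} is that for the BRT this equality does hold, which requires the substantial work in Lemmas~\ref{lem:no-vertex-lost} and~\ref{lem:Rbin-triple-components} rather than being a generic property of \texttt{BUILD}.

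The paper's actual proof of the corollary sidesteps triple sets entirely: by Thm.~\ref{thm:Rbin-MAIN} the BRT explains $(\G,\sigma)$, and any tree explaining a BMG can be contracted down to a least resolved tree explaining it, which by the uniqueness of the LRT \cite[Thm.~8]{Geiss:19a} must be \emph{the} LRT; hence every tree explaining $(\G,\sigma)$ --- in particular the BRT --- is a refinement of the LRT. If you want to keep a triple-based argument, you would have to prove that the BRT displays $\mathscr{R}(\G,\sigma)$ \emph{and} that $\mathscr{R}(\G,\sigma)$ identifies the LRT in the sense of Prop.~\ref{prop:cl-iff-identifies}; the latter is exactly what Fig.~\ref{fig:LRT-cl-counterexample} shows can fail, so that route cannot work as stated.
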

\begin{proof}
  Since each BMG has a unique LRT \cite[Thm.~8]{Geiss:19a}, the BRT of a
  binary explainable BMG is necessarily a refinement of the LRT.  %\qed
\end{proof}

Clearly, the BRT is least resolved among the trees that display $\Rbin$,
i.e., contraction of an arbitrary edge results in a tree that no longer
displays all triples in $\Rbin$ \cite[Prop.~4.1]{Semple:03a}.  Now,
we tackle the question whether the BRT is the unique least resolved tree in
this sense.  In other words, we ask whether every tree that displays
$\Rbin$ is a refinement of the BRT.  As we shall see, this question can be
answered in the affirmative.

In order to show this, we first introduce some additional notation and
concepts for sets of triples.  Following \cite{Bryant:95,Seemann:18}, we
call the \emph{span} of $\mathscr{R}$, denoted by $\Rspan{\mathscr{R}}$,
the set of all trees with leaf set
$L_{\mathscr{R}}\coloneqq\bigcup_{t\in \mathscr{R}} L(t)$ that display
$\mathscr{R}$.  With this notion, we define the \emph{closure operator}
for consistent triple sets by
\begin{equation}
  \cl(\mathscr{R}) = \bigcap_{T\in\Rspan{\mathscr{R}}} r(T),
\end{equation}
i.e., a triple $t$ is contained in $\cl(\mathscr{R})$ if all trees that
display $\mathscr{R}$ also display $t$.  In particular, $\cl(\mathscr{R})$ 
is again consistent. The map $\cl$ is a closure in the usual sense on the
set of consistent triple sets, i.e., it is extensive
[$\mathscr{R} \subseteq \cl(\mathscr{R})$], monotonic
[$\mathscr{R'}\subseteq \mathscr{R} \implies \cl(\mathscr{R'}) \subseteq
\cl(\mathscr{R})$], and idempotent
[$\cl(\mathscr{R}) = \cl(\cl(\mathscr{R}))$] \cite[Prop.~4]{Bryant:95}.  A
consistent set of triples $\mathscr{R}$ is \emph{closed} if
$\mathscr{R}=\cl(\mathscr{R})$.

Interesting properties of a triple set $\mathscr{R}$ and of the Aho tree
$\Aho(\mathscr{R},L)$ can be understood by considering the \emph{Aho graph}
$[\mathscr{R},L]$ with vertex set $L$ and edges $xy$ iff there is a triple
$xy|z\in \mathscr{R}$ with $x,y,z\in L$ \cite{Aho:81}.
It has been shown in \cite{Bryant:95} that a triple set
$\mathscr{R}$ on $L$ is consistent if and only if $[\mathscr{R}_{|L'},L']$
is disconnected for every subset $L'\subseteq L$ with $|L'|>1$. The root
$\rho$ of the Aho tree $\Aho(\mathscr{R},L)$ corresponds to the Aho graph
$[\mathscr{R},L]$ in such a way that there is a one-to-one correspondence
between the children $v$ of $\rho$ and the connected components $C$ of
$[\mathscr{R},L]$ given by $L(T(v))=V(C)$. The \texttt{BUILD} algorithm
constructs $\Aho(\mathscr{R},L)$ by recursing top-down over the connected
components (with vertex sets $L'$) of the Aho graphs. It fails if and only
if $|L'|>1$ and $[\mathscr{R}_{|L'},L']$ is connected at some recursion
step. For a more detailed description we refer to \cite{Aho:81}. Since the
decomposition of the Aho graphs into their connected components is unique,
the Aho tree is also uniquely defined.

The following characterization of triples that are contained in the closure
also relies on Aho graphs:
\begin{proposition}{\textup{\cite[Cor.~3.9]{Bryant:97}}}
  \label{prop:triple-in-closure}
  Let $\mathscr{R}$ be a consistent set of triples and
  $L_{\mathscr{R}}\coloneqq\bigcup_{t\in \mathscr{R}} L(t)$ the union of
  their leaves.  Then $ab|c\in\cl(\mathscr{R})$ if and only if there is a
  subset $L'\subseteq L_{\mathscr{R}}$ such that the Aho graph
  $[\mathscr{R}_{|L'}, L']$ has exactly two connected components, one
  containing both $a$ and $b$, and the other containing $c$.
\end{proposition}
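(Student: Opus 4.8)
My plan is to prove the two implications separately, unfolding the definition $\cl(\mathscr{R})=\bigcap_{T\in\Rspan{\mathscr{R}}}r(T)$, so that $ab|c\in\cl(\mathscr{R})$ means precisely that every tree with leaf set $L_{\mathscr{R}}$ displaying $\mathscr{R}$ also displays $ab|c$. The single tool I would use throughout is the correspondence recalled above between the connected components of an Aho graph $[\mathscr{R}_{|L'},L']$ and the children of the root of any tree displaying $\mathscr{R}_{|L'}$, together with the elementary observation that, since an Aho-graph edge $xy$ is witnessed by a triple $xy|z\in\mathscr{R}_{|L'}$ and hence by $\lca_{T}(x,y)\prec_{T}\lca_{T}(x,z)$, the two endpoints of every edge lie below a common child of the root in any displaying tree. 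In particular every triple of $\mathscr{R}$ has both of its cherry-leaves in a single connected component of $[\mathscr{R},L_{\mathscr{R}}]$.

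For the ``if'' direction I would take an arbitrary tree $T$ displaying $\mathscr{R}$ and set $\rho'\coloneqq\lca_{T}(L')$. Its children partition $L'$ into at least two nonempty blocks, and by the edge observation each connected component of $[\mathscr{R}_{|L'},L']$ is contained in a single block; hence every block is a union of the two components $C_1\ni a,b$ and $C_2\ni c$. Since there are at least two blocks, the block partition must be exactly $\{C_1,C_2\}$, so $a,b$ sit below one child of $\rho'$ and $c$ below another. This gives $\lca_{T}(a,b)\prec_{T}\rho'=\lca_{T}(a,c)=\lca_{T}(b,c)$, so $T$ displays $ab|c$; as $T$ was arbitrary, $ab|c\in\cl(\mathscr{R})$.

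For the ``only if'' direction I would argue by induction on $|L_{\mathscr{R}}|$. Assuming $ab|c\in\cl(\mathscr{R})$, the Aho tree $T^*\coloneqq\Aho(\mathscr{R},L_{\mathscr{R}})$ lies in $\Rspan{\mathscr{R}}$ and thus displays $ab|c$, so $\lca_{T^*}(a,b)\prec_{T^*}\lca_{T^*}(a,c)$ and $a,b$ lie in a common component $K_1$ of $[\mathscr{R},L_{\mathscr{R}}]$. If $c\in K_1$ as well, I would show $ab|c\in\cl(\mathscr{R}_{|K_1})$ and invoke the induction hypothesis on the strictly smaller leaf set; otherwise $c$ lies in a different component $K_j$, and if there are exactly two components we are done with $L'=L_{\mathscr{R}}$, whereas if there are at least three I would show $ab|c\in\cl(\mathscr{R}_{|K_1\cup K_j})$ and again induct. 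Both reductions rest on one tree-completion lemma: because no triple of $\mathscr{R}$ has its cherry split across components, any tree displaying $\mathscr{R}_{|K_1}$ (resp.\ $\mathscr{R}_{|K_1\cup K_j}$) can be completed to a tree displaying all of $\mathscr{R}$ by hanging the remaining components as further child-subtrees of a common root; feeding this completed tree into $ab|c\in\cl(\mathscr{R})$ then forces the partial tree to display $ab|c$.

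The step I expect to be the main obstacle is exactly this reduction in the ``only if'' direction. The naive attempt of simply restricting the Aho graph to $K_1\cup K_j$ can fail, because an edge inside $K_1$ may be witnessed only by a triple whose apex lies in a discarded component, so that $K_1$ falls apart and the desired two-component structure is destroyed. The purpose of carrying out the reduction at the level of \emph{closure membership} (via the tree-completion lemma) rather than at the level of graph connectivity is precisely to sidestep this difficulty: the induction hypothesis then supplies a witness subset $L'$ with the correct two-component structure, and since restriction is transitive, $(\mathscr{R}_{|K_1})_{|L'}=\mathscr{R}_{|L'}$ for $L'\subseteq K_1$, so the witness obtained for the reduced instance is a witness for $\mathscr{R}$, completing the argument.
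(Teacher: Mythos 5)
The paper does not prove this proposition itself; it is imported from \cite[Cor.~3.9]{Bryant:97}, so there is no in-paper argument to compare your route against. Judged on its own merits, your ``if'' direction is correct and complete: every Aho-graph edge of $[\mathscr{R}_{|L'},L']$ forces its two endpoints below a common child of $\lca_{T}(L')$, and the pigeonhole over the (at least two) blocks then yields $ab|c$ in every tree displaying $\mathscr{R}$.

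The ``only if'' direction has a genuine gap. Your reduction asserts $ab|c\in\cl(\mathscr{R}_{|K_1})$ (resp.\ $\cl(\mathscr{R}_{|K_1\cup K_j})$), but with the paper's definition of $\cl$ this claim is only meaningful if $a$, $b$ and $c$ all lie in $L_{\mathscr{R}_{|K_1}}=\bigcup_{t\in\mathscr{R}_{|K_1}}L(t)$, and your tree-completion lemma does not establish this. The very phenomenon you flag for edges --- information about $K_1$ witnessed only by leaves in discarded components --- can also delete one of $a,b,c$ from the reduced instance: a leaf $x\in K_1$ may occur only in triples $xy|z$ whose apex $z$ lies outside $K_1$, and never as an apex over a cherry inside $K_1$; then $x\notin L_{\mathscr{R}_{|K_1}}$, the partial tree $T_1$ does not contain $x$ at all, and ``feeding the completed tree back'' cannot force $T_1$ to display a triple involving $x$. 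The needed membership is in fact always true under your hypotheses, but proving it requires an extra grafting argument that reuses $ab|c\in\cl(\mathscr{R})$: if, say, $c\in K_1\setminus L_{\mathscr{R}_{|K_1}}$, then every triple of $\mathscr{R}$ containing $c$ has a leaf outside $K_1$ and is satisfied by the component structure of the completed tree alone, no matter where $c$ is placed inside the $K_1$-subtree; re-grafting $c$ as a sibling of $a$ therefore produces a tree in $\Rspan{\mathscr{R}}$ that does not display $ab|c$, a contradiction. The analogous argument for $a$, for $b$, and for the case $c\in K_j$ closes the gap. Two smaller points: you should specify that the ``remaining components'' are hung as subtrees displaying their own restrictions $\mathscr{R}_{|K_i}$ (e.g.\ their Aho trees), since an arbitrary shape on $K_i$ need not give a tree displaying all of $\mathscr{R}$; and the leaves of $K_1\setminus L_{\mathscr{R}_{|K_1}}$ must be reattached (say, at the root of the $K_1$-subtree) so that the completed tree really has leaf set $L_{\mathscr{R}}$ and hence lies in $\Rspan{\mathscr{R}}$.
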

The following result shows that Prop.~\ref{prop:triple-in-closure} can be
applied to the triple set $\Rbin(\G,\sigma)$ of an sf-colored graph
$(\G,\sigma)$ with the exception of the two trivial special cases in which
either all vertices of $(\G,\sigma)$ are of the same color or of
pairwise distinct colors.

\begin{lemma}\label{lem:no-vertex-lost}
  Let $(\G,\sigma)$ be an sf-colored graph with vertex set $L\neq \emptyset$,
  $L_{\mathscr{R,F}}\coloneqq\bigcup_{t\,\in\, 
    \mathscr{R}(\G,\sigma)\,\cup\,\mathscr{F}(\G,\sigma)} L(t)$ and 
  $L_{\Rbin}\coloneqq\bigcup_{t\,\in\,\Rbin(\G,\sigma)} L(t)$.
  Then the following statements are equivalent:
  \begin{enumerate}[noitemsep]
    \item $L_{\mathscr{R,F}}=L_{\Rbin}=L$.
    \item $\mathscr{R}(\G,\sigma)\cup\mathscr{F}(\G,\sigma)\ne\emptyset$.
    \item $\Rbin(\G,\sigma)\ne\emptyset$.
    \item $(\G,\sigma)$ is $\ell$-colored with $\ell\geq 2$ and contains
    two vertices of the same color.
  \end{enumerate}
  \noindent
  If these statements are not satisfied, then
  $(\G,\sigma)$ is a BMG that is explained by any tree $(T,\sigma)$ on $L$.
\end{lemma}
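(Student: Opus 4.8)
The plan is to establish the cycle of implications $(1)\Rightarrow(2)\Rightarrow(4)\Rightarrow(1)$, to record the side equivalence $(2)\Leftrightarrow(3)$, and finally to treat the concluding sentence separately. Everything but one step is bookkeeping. The pivotal preliminary observation is that every forbidden triple $ab|b'\in\mathscr{F}(\G,\sigma)$ already satisfies $\sigma(b)=\sigma(b')$, so the triple $bb'|a$ that Eq.~\ref{eq:Rbin} derives from it has exactly the same leaf set $\{a,b,b'\}$. Hence the leaves occurring in $\Rbin(\G,\sigma)$ coincide with those occurring in $\mathscr{R}(\G,\sigma)\cup\mathscr{F}(\G,\sigma)$, so $L_{\Rbin}=L_{\mathscr{R,F}}$ holds unconditionally, and moreover $\Rbin(\G,\sigma)\neq\emptyset$ exactly when $\mathscr{R}(\G,\sigma)\cup\mathscr{F}(\G,\sigma)\neq\emptyset$; the latter is the equivalence $(2)\Leftrightarrow(3)$, and the former reduces the two equalities in~(1) to the single equality $L_{\mathscr{R,F}}=L$. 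With this, $(1)\Rightarrow(2)$ is immediate because $L\neq\emptyset$ forces the union of leaf sets, and hence the triple set itself, to be nonempty. For $(2)\Rightarrow(4)$ I would take any triple $ab|b'$ in $\mathscr{R}(\G,\sigma)\cup\mathscr{F}(\G,\sigma)$ and read off $\sigma(a)\neq\sigma(b)=\sigma(b')$ with $b\neq b'$ (in the informative case $b\neq b'$ is forced by $(a,b)\in E(\G)$ and $(a,b')\notin E(\G)$), which exhibits at least two colors and two equally colored vertices.

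The only real work is $(4)\Rightarrow(1)$: assuming $(\G,\sigma)$ is sf-colored, $\ell\geq 2$, and has two equally colored vertices, I must show that every $x\in L$ is a leaf of some triple in $\mathscr{R}(\G,\sigma)\cup\mathscr{F}(\G,\sigma)$; together with $L_{\mathscr{R,F}}\subseteq L$ and the identity above this yields $L_{\mathscr{R,F}}=L_{\Rbin}=L$. The key tool is sink-freeness, and I would split on whether $x$ shares its color. If some $y\neq x$ has $\sigma(y)=\sigma(x)$, pick a vertex $a$ of a different color (possible as $\ell\geq 2$) and compare the arcs $(a,x)$ and $(a,y)$: when $(a,x)\notin E(\G)$, sink-freeness supplies an out-neighbor $z$ of $a$ with $\sigma(z)=\sigma(x)$ and $z\neq x$, so $az|x\in\mathscr{R}(\G,\sigma)$; when $(a,x)\in E(\G)$, the triple $ax|y$ is informative or forbidden according to whether $(a,y)\notin E(\G)$ or $(a,y)\in E(\G)$. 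If instead $x$ is the unique vertex of its color, then the two equally colored vertices guaranteed by~(4)---which cannot have color $\sigma(x)$, since $x$ is the only such vertex---share some color $c\neq\sigma(x)$; sink-freeness gives $x$ an out-neighbor of color $c$, and inspecting how many of the at least two color-$c$ vertices are out-neighbors of $x$ produces a forbidden triple $xd|d'$ (two out-neighbors) or an informative triple $xd|d'$ (one out-neighbor $d$, with $d'$ a color-$c$ non-neighbor). In every branch $x$ occurs as a leaf. I expect this case analysis to be the main obstacle, the subtle point being that $x$ must play the role of the apex vertex $a$ precisely when its color is unique, which is exactly when sink-freeness has to be invoked to manufacture a suitable pair of same-colored targets.

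For the final sentence, suppose none of (1)--(4) holds; then in particular~(4) fails, and since $L\neq\emptyset$ this means that either all vertices of $(\G,\sigma)$ share a single color or they have pairwise distinct colors. In both situations the defining requirement $\sigma(a)\neq\sigma(b)=\sigma(b')$ with $b\neq b'$ cannot be met, so $\mathscr{R}(\G,\sigma)=\mathscr{F}(\G,\sigma)=\emptyset$. The pair $(\emptyset,\emptyset)$ is trivially consistent, since every tree on $L$ displays all of $\emptyset$ and none of $\emptyset$; as $(\G,\sigma)$ is sf-colored, the third characterization in Prop.~\ref{prop:BMG-charac} then shows that $(\G,\sigma)$ is a BMG, and the final clause of that proposition characterizes the explaining trees as exactly those agreeing with $(\emptyset,\emptyset)$, i.e.\ all leaf-colored trees on $L$.
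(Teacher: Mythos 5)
Your proof is correct, but it takes a more self-contained route than the paper for the central equivalence. The paper disposes of the hard part --- that Statement~(2), Statement~(4), and $L_{\mathscr{R,F}}=L$ are equivalent --- by citing Lemma~3.6 of \cite{Schaller:20y}, and only verifies in-house the two easy observations that a triple on $\{a,b,c\}$ lies in $\mathscr{R}(\G,\sigma)\cup\mathscr{F}(\G,\sigma)$ iff one lies in $\Rbin(\G,\sigma)$ (giving $(2)\Leftrightarrow(3)$ and $L_{\mathscr{R,F}}=L_{\Rbin}$ unconditionally) and that failure of~(4) leaves only the monochromatic or all-distinct-colors cases, in which any tree explains $(\G,\sigma)$. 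You instead prove the cited implication $(4)\Rightarrow L_{\mathscr{R,F}}=L$ directly: your case split on whether $x$ shares its color, with sink-freeness supplying either a witness $z\neq x$ making $az|x$ informative, or (when $x$ is the unique vertex of its color) forcing $x$ into the apex position of an informative or forbidden triple over a repeated color $c\neq\sigma(x)$, is a complete and correct reconstruction of that external lemma. Your handling of the final sentence via consistency of the pair $(\emptyset,\emptyset)$ and Prop.~\ref{prop:BMG-charac} is slightly more formal than the paper's one-line remark but reaches the same conclusion. The trade-off is the usual one: the paper's proof is shorter but leans on prior work, while yours is longer but verifiable from the definitions given here.
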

\begin{proof}
  It was shown in \cite[Lemma~3.6]{Schaller:20y} that Statements~2
  and~4, and $L_{\mathscr{R,F}}=L$ are equivalent. One easily verifies
  using Eqs.~\ref{eq:informative-triples} and~\ref{eq:Rbin} that there is a
  triple on $\{a,b,c\}$ in
  $\mathscr{R}(\G,\sigma) \cup \mathscr{F}(\G,\sigma)$ if and only if there
  is a triple on $\{a,b,c\}$ in $\Rbin(\G,\sigma)$.  Therefore,
  Statements~2 and~3 are equivalent and we always have
  $L_{\mathscr{R,F}}=L_{\Rbin}$.  Thus all statements are
  equivalent. If the statements are not satisfied, i.e., in
  particular, Statement (4) is not satisfied, then the vertices in $L$
  are all either of the same or of different color. In both cases,
  $(\G,\sigma)$ is explained by any tree on~$L$.  %\qed
\end{proof}
Lemma~\ref{lem:no-vertex-lost} holds for BMGs since these are sf-colored by
Prop.~\ref{prop:BMG-charac}. The following result is essential for the
application of Prop.~\ref{prop:triple-in-closure} to a triple set
$\Rbin(\G,\sigma)$.

\begin{lemma}\label{lem:Rbin-triple-components}
  Let $(\G,\sigma)$ be a binary-explainable BMG with vertex set $L$ and
  $\Rbin\coloneqq\Rbin(\G,\sigma)$.  Then, for any two distinct connected
  components $C$ and $C'$ of the Aho graph $H\coloneqq[\Rbin,L]$, the
  subgraph $H[L']$ induced by $L'=V(C)\cupdot V(C')$ satisfies
  $H[L']=[\Rbin_{|L'}, L']=C\cupdot C'$.
\end{lemma}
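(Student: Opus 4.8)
The plan is to reduce the lemma to a single edge-recovery statement and then settle that statement by a case analysis that converts every obstruction into a forbidden triple spanning two components. First I dispose of the easy inclusions. Since $C$ and $C'$ are distinct connected components of $H$, no edge of $H$ runs between $V(C)$ and $V(C')$, so the induced subgraph $H[L']$ consists exactly of the edges inside $C$ together with those inside $C'$; that is, $H[L']=C\cupdot C'$. Moreover every edge of $[\Rbin_{|L'},L']$ comes from a triple $pq|r\in\Rbin$ with $p,q,r\in L'$, and such a triple already witnesses the edge $pq$ of $H$; hence $[\Rbin_{|L'},L']\subseteq H[L']=C\cupdot C'$. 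The whole content therefore lies in the reverse inclusion $C\cupdot C'\subseteq[\Rbin_{|L'},L']$, i.e.\ I must show that restricting the admissible third leaves to $L'$ destroys no edge of $C$ or of $C'$. By symmetry it suffices to prove: if $x,y\in V(C)$ and some triple $xy|z\in\Rbin$ exists with $z\in L$, then some triple $xy|z'\in\Rbin$ exists with $z'\in L'$.

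To reach the tree I work with $T^*\coloneqq\Aho(\Rbin,L)$, which exists and explains $(\G,\sigma)$ by Thm~\ref{thm:Rbin-MAIN}. Via the stated bijection between the connected components of $H$ and the children of the root $\rho$ of $T^*$, the sets $V(C)$ and $V(C')$ are the leaf sets of two distinct subtrees $T^*(v)$ and $T^*(v')$. The one structural fact I need is that leaves lying in different child-subtrees of $\rho$ have $\lca_{T^*}$ equal to $\rho$; hence, for $x\in V(C)$ and a color $s$, a leaf of color $s$ outside $V(C)$ is a best match of $x$ precisely when no leaf of color $s$ lies in $V(C)$. Because $T^*$ explains $(\G,\sigma)$, I may freely translate the arcs of $(\G,\sigma)$ into these $\lca$-comparisons.

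I then split the core claim according to whether $\sigma(x)\neq\sigma(y)$ or $\sigma(x)=\sigma(y)$, recalling from Eq.~\ref{eq:Rbin} that a triple of $\Rbin$ with cherry $\{x,y\}$ is an informative triple in the former case and an extended, forbidden-derived triple in the latter. In both cases the triple $xy|z$ singles out a color $s$ — the common color of $z$ and a cherry-endpoint (which I name $y$) when $\sigma(x)\neq\sigma(y)$, and the common color of the two cherry-endpoints when $\sigma(x)=\sigma(y)$ — and the only situation needing work is $z\notin L'$, i.e.\ $z$ lies in a \emph{third} child-subtree of $\rho$. The argument then bifurcates on whether $s$ occurs in $V(C')$. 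In the benign pattern (namely $s$ present in $V(C')$ when $\sigma(x)\neq\sigma(y)$, and $s$ absent from $V(C')$ when $\sigma(x)=\sigma(y)$) I simply choose an appropriate $z'\in V(C')$ and verify through the $\lca$-comparisons above that $xy|z'$ again belongs to $\Rbin$.

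The main obstacle is the complementary pattern, where no such $z'\in V(C')$ is available. The idea that resolves it is to manufacture a forbidden triple that would illegitimately fuse two components. In each such case the ``wrong'' occurrence of $s$ in $V(C')$ forces some vertex to be a best match of two \emph{distinct, equally colored} leaves lying in two \emph{different} child-subtrees of $\rho$: when $\sigma(x)\neq\sigma(y)$ any vertex of $V(C')$ turns out to be a best match of both $y\in V(C)$ and $z$, while when $\sigma(x)=\sigma(y)$ the vertex $z$ is a best match of both $x\in V(C)$ and a color-$s$ vertex of $V(C')$. By Obs~\ref{obs:forb-triple-pairs} this configuration produces a pair of forbidden triples, and hence, by the definition of $\Rbin$ in Eq.~\ref{eq:Rbin}, an extended triple whose cherry joins a vertex of $V(C)$ to a vertex of another component. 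That cherry would be an edge of $H$ across two distinct connected components, which is impossible. Thus the complementary pattern never occurs, the edge-recovery statement holds in all cases, and combining it with the two easy inclusions yields $H[L']=[\Rbin_{|L'},L']=C\cupdot C'$.
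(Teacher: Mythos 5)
Your proposal is correct and follows essentially the same route as the paper's proof: reduce to recovering each edge of $C\cupdot C'$ via a triple with all three leaves in $L'$, pass to the tree $\Aho(\Rbin,L)$ and the component--children correspondence, split on informative versus forbidden-derived triples and on whether the relevant color occurs in $V(C')$, and kill the bad sub-cases by producing an extended triple whose cherry would be an edge of $H$ spanning two components. One cosmetic caveat: in the obstruction cases you write that a vertex ``is a best match of'' the two equally colored leaves, which reverses the paper's convention --- what you need (and what your appeal to Obs.~\ref{obs:forb-triple-pairs} and Eq.~\eqref{eq:Rbin} actually uses) is that the two equally colored leaves are both best matches \emph{of} that vertex, i.e., it has out-arcs to both.
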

\begin{proof}
  Since $(\G,\sigma)$ is binary-explainable, $\Rbin$ is consistent by
  Thm.~\ref{thm:Rbin-MAIN}. Thus
  $H\coloneqq[\Rbin,L]$ contains at least two connected components.  If $H$
  contains exactly two connected components $C$ and $C'$, the statement
  trivially holds. Hence, assume that $H$ contains at least three connected
  components.  Let $C$ and $C'$ be two distinct connected
  components of $H$, and set $L'\coloneqq V(C)\cup V(C')$ and
  $H'\coloneqq [\Rbin_{|L'}, L']$.  Note, $V(H[L'])=V(H')=L'$, and 
  $H[L']=C\cupdot C'$ is the induced subgraph of $H$ that 
  consists precisely of the two connected components $C$ and
  $C'$. From $\Rbin_{|L'}\subseteq\Rbin$ and the construction of $H$
  we immediately observe that $H'$ is a subgraph of $H[L']$.  Hence, it remains 
  to show that every edge $xy$ in $H[L']$ is also an edge in $H'$.
  
  To this end, we consider the BRT $(T,\sigma)$ of $(\G,\sigma)$, which
  exists since $\Rbin$ is consistent and explains $(\G,\sigma)$ by
  Thm.~\ref{thm:Rbin-MAIN}.  By construction, there is a one-to-one
  correspondence between the connected components of $H$ and the children
  of the root $\rho$ of $T$.  Thus, let $v$ and $v'$ be the distinct
  children of $\rho$ such that $L(T(v))=V(C)$ and $L(T(v'))=V(C')$ and let
  $xy$ be an edge in $H[L']$.  Since $x$ and $y$ lie in the same connected
  component of $H$ and $x,y\in L'$, we can assume w.l.o.g.\ that
  $x,y\in L(T(v))$.  It suffices to show that there is a triple
  $xy|z\in\Rbin$ with $z\in L'$, since in this case, we obtain
  $xy|z\in\Rbin_{|L'}$ and thus $xy\in E(H')$.

  We assume, for contradiction, that there is no $z\in L'$ with
  $xy|z\in\Rbin$. Then, by construction of $H$ and since $xy$ is an edge 
  therein, $\Rbin$ contains a triple $xy|z$ with
  $z\in L(T(v''))$ for some $v''\in\child_{T}(\rho)\setminus\{v,v'\}$ and a
  connected component $C''$ of $H$ with $V(C'')=L(T(v''))$.
  By Eqs.~\ref{eq:informative-triples} and~\ref{eq:Rbin}, there are exactly
  two cases for such a triple:
  \begin{enumerate}[label=(\alph*), nolistsep]
    \item $xy|z=ab|b'$ (and w.l.o.g.\ $x=a, y=b$) such that\\
    $\sigma(a)\neq\sigma(b)=\sigma(b'),\,
    (a,b)\in E(\G), \text{ and }
    (a,b')\notin E(\G)$, and
    \item $xy|z=bb'|a$ (and w.l.o.g.\ $x=b, y=b'$) such that\\
    $\sigma(a)\neq\sigma(b)=\sigma(b'),\,
    b\ne b',\,
    (a,b),(a,b')\in E(\G)$.
  \end{enumerate}
  
  In Case~(a), we have $a,b\in L(T(v))$, $b'=z\in L(T(v''))$ and
  $(a,b)\in E(\G)$.  Assume, for contradiction, that there is a vertex
  $b''$ of color $\sigma(b'')=\sigma(b)$ in $L(T(v'))$. In this
  case, $\lca_{T}(a,b)\preceq_{T}v\prec_{T}\rho=\lca_{T}(a,b'')$ would
  imply that $(a,b'')\notin E(\G)$. Hence, we obtain the informative triple
  $ab|b''\in\mathscr{R}(\G,\sigma)\subseteq\Rbin$ with
  $b''\in L(T(v'))\subset L'$. By assumption, such a triple does not exist
  and thus we must have $\sigma(b)\notin\sigma(L(T(v')))$.  Hence, every
  leaf $c\in L(T(v'))\ne\emptyset$ satisfies
  $\sigma(c)\ne\sigma(b)=\sigma(b')$.  Since the color $\sigma(b)$ is not
  present in $T(v')$ and $\lca_{T}(c,b)=\lca_{T}(c,b')=\rho$, we can
  conclude that $(c,b),(c,b')\in E(\G)$.  By Eq.~\ref{eq:Rbin},
  $bb'|c\in\Rbin$ and thus, $bb'$ is an edge in $H$.  However, as argued
  above, $b$ and $b'$ lie in distinct connected components $C$ and $C''$ of
  $H$; a contradiction. 
  
  In Case~(b), we have $b,b'\in L(T(v))$, $a=z\in L(T(v''))$ and
  $(a,b),(a,b')\in E(\G)$. The latter implies that the color
  $\sigma(b)$ is not present in the subtree $T(v'')$. 
  
  Now assume, for contradiction, that $\sigma(b)$ is not present in $T(v')$ 
  either. Then, $(c,b), (c,b')\in E(\G)$ for any $c\in L(T(v'))\neq \emptyset$, 
  thus $bb'|c\in\Rbin$; a contradiction. Hence, there exists a vertex $b''\in 
  L(T(v'))$ with $\sigma(b'')=\sigma(b)$. Similarly, since $\sigma(b)\notin 
  \sigma(L(T(v'')))$, we can conclude that $(a,b), (a,b'')\in E(\G)$ and thus 
  $bb''|a\in\Rbin$. This implies that $bb''$ is an edge in $H$.  However, $b$ 
  and $b''$ lie in distinct connected components $C$ and $C'$ of $H$; a
  contradiction.
  
  In summary, we conclude that for every edge $xy$ in $H[L']$, there is a
  triple $xy|z$ with $\{x,y,z\}\subseteq L'$, and hence $xy\in E(H')$.
  Together with $V(H[L'])=V(H')=L'$ and $E(H')\subseteq E(H[L'])$, this
  implies $H'=H[L']$.  %\qed
\end{proof}

\begin{lemma}
  \label{lem:BRT-closure}
  The BRT $(T,\sigma)$ of a binary-explainable BMG $(\G,\sigma)$
  satisfies $r(T)=\cl(\Rbin(\G,\sigma))$.
\end{lemma}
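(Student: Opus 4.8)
The plan is to establish the two inclusions $\cl(\Rbin)\subseteq r(T)$ and $r(T)\subseteq\cl(\Rbin)$ separately, writing $\Rbin\coloneqq\Rbin(\G,\sigma)$ throughout. I would first dispose of the degenerate case in which the equivalent conditions of Lemma~\ref{lem:no-vertex-lost} fail: there $\Rbin=\emptyset$, the BRT is the star on $L$ (which displays no triples), and $\cl(\emptyset)=\emptyset$, so both sides are empty. Hence I may assume $L_{\Rbin}=L=L(T)$, which is exactly what makes the closure operator act on the correct leaf set.

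The inclusion $\cl(\Rbin)\subseteq r(T)$ is immediate. By correctness of \texttt{BUILD} the tree $T=\Aho(\Rbin,L)$ displays $\Rbin$, and since $L_{\Rbin}=L$ we have $T\in\Rspan{\Rbin}$. As $\cl(\Rbin)=\bigcap_{T'\in\Rspan{\Rbin}}r(T')$ is by definition the set of triples displayed by \emph{every} tree in $\Rspan{\Rbin}$, each such triple is in particular displayed by $T$.

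For the reverse inclusion I would take an arbitrary triple $ab|c\in r(T)$ and verify the criterion of Prop.~\ref{prop:triple-in-closure}, i.e.\ exhibit a subset $L'\subseteq L$ for which $[\Rbin_{|L'},L']$ has exactly two connected components, one containing $a,b$ and the other containing $c$. The candidate comes from localizing at $u\coloneqq\lca_T(\{a,b,c\})$: since $ab|c\in r(T)$, we have $\lca_T(a,b)\prec_T u$ while $\lca_T(a,c)=\lca_T(b,c)=u$, so $a,b$ lie in a common child subtree $T(w)$ of $u$ and $c$ lies in a distinct child subtree $T(w')$. I would set $L'\coloneqq L(T(w))\cup L(T(w'))$, and let $C_w,C_{w'}$ be the connected components of the Aho graph at $u$ corresponding to the children $w,w'$, so that $a,b\in V(C_w)$ and $c\in V(C_{w'})$.

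The key step is to show $[\Rbin_{|L'},L']=C_w\cupdot C_{w'}$, which then has exactly the two required components. This is precisely the content of Lemma~\ref{lem:Rbin-triple-components}, but applied at the internal node $u$ rather than at the root; justifying this transfer is where I expect the only real subtlety to lie. To carry it out I would invoke the recursive structure of \texttt{BUILD} to write $T(u)=\Aho(\Rbin_{|L(T(u))},L(T(u)))$, use Fact~\ref{obs:R-restriction} to identify $\Rbin_{|L(T(u))}=\Rbin(\G[L(T(u))],\sigma_{|L(T(u))})$, and then argue via Lemma~\ref{lem:subgraph} applied to the restriction of a binary explaining tree of $(\G,\sigma)$ that $\G[L(T(u))]$ is itself a binary-explainable BMG whose BRT is exactly $T(u)$. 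Lemma~\ref{lem:Rbin-triple-components} then applies to the Aho graph $[\Rbin_{|L(T(u))},L(T(u))]$ and its two distinct components $C_w,C_{w'}$; since $L'\subseteq L(T(u))$, transitivity of restriction gives $\Rbin_{|L'}=(\Rbin_{|L(T(u))})_{|L'}$ and hence $[\Rbin_{|L'},L']=C_w\cupdot C_{w'}$. By Prop.~\ref{prop:triple-in-closure} we conclude $ab|c\in\cl(\Rbin)$, which completes the second inclusion and the proof. Everything apart from the localization step is routine bookkeeping with restrictions of triple sets and the one-to-one correspondence between the children of a node and the components of its Aho graph.
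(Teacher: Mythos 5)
Your proposal is correct and rests on the same two pillars as the paper's proof: Lemma~\ref{lem:Rbin-triple-components} to control the Aho graph on the union of two components, and Prop.~\ref{prop:triple-in-closure} to certify membership in the closure (the easy inclusion $\cl(\Rbin)\subseteq r(T)$ is handled the same way in both). The structural difference is that the paper proceeds by induction on $|L|$, applying Lemma~\ref{lem:Rbin-triple-components} only at the root and pushing triples with deeper least common ancestors into the induction hypothesis, whereas you localize directly at $u=\lca_T(\{a,b,c\})$ and apply the lemma there. Your route is arguably more transparent, but it concentrates all the work into the transfer step you correctly identify as the crux: you must know that $T(u)$ is the BRT of $(\G[L(T(u))],\sigma_{|L(T(u))})$ and that this induced subgraph is itself a binary-explainable BMG. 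Two cautions there. First, the claim that $T(u)=\Aho(\Rbin_{|L(T(u))},L(T(u)))$ for an arbitrary internal vertex $u$ is obtained by iterating the one-level recursion of \texttt{BUILD}, which is really a small induction in disguise --- the paper's induction on $|L|$ is precisely the clean way to make this rigorous, so the two proofs are closer than they first appear. Second, to see that $\G[L(T(u))]$ is binary-explainable you should take the binary explaining tree to be a binary \emph{refinement of the BRT} (which exists and explains $(\G,\sigma)$ by Thm.~\ref{thm:Rbin-MAIN}), so that $L(T(u))$ is guaranteed to be one of its clades and Lemma~\ref{lem:subgraph} applies; an arbitrary binary tree explaining $(\G,\sigma)$ need not have $L(T(u))$ as a clade unless you invoke Thm.~\ref{thm:BRT-all-bin-trees}, which would be circular since that theorem is a consequence of the present lemma. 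With these two points made explicit, your argument goes through.
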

\begin{proof}
  First note that since $(\G,\sigma)$ is binary-explainable,
  Thm.~\ref{thm:Rbin-MAIN} ensures the consistency of
  $\Rbin\coloneqq\Rbin(\G,\sigma)$, and hence, the existence of the BRT
  $(T,\sigma)$ and $\cl(\Rbin)$.  We proceed by induction on
  $L\coloneqq V(\G)$. The statement trivially holds for $|L|\in\{1,2\}$,
  since in this case, we clearly have $r(T)=\cl(\Rbin)=\emptyset$.
  Moreover, we can assume w.l.o.g.\ that
  $L=L_{\Rbin}\coloneqq\bigcup_{t\,\in\,\Rbin} L(t)$ since otherwise
  Lemma~\ref{lem:no-vertex-lost} implies $\Rbin=\emptyset$. In this
  case, $(T,\sigma)$ is the star tree on $L$, and again
  $r(T)=\cl(\Rbin)=\emptyset$.
  
  For $|L|>2$ and $L=L_{\Rbin}$ we assume that the statement is true
  for every binary-explainable BMG with less than $|L|$ vertices. We write
  $L_v\coloneqq L(T(v))$ for the set of leaves in the subtree of
  $(T,\sigma)$ rooted at $v$.
  
  By construction of the BRT $(T,\sigma)$ from $\Rbin$, there is a
  one-to-one correspondence between the connected components of the Aho
  graph $[\Rbin,L]$ and the children $v$ of the root $\rho$ of $T$.  For
  each such vertex $v\in\child_{T}(\rho)$, the graph
  $\G(T(v), \sigma_{|L_v})$ is a BMG and, by Lemma~\ref{lem:subgraph},
  $\G(T(v), \sigma_{|L_v}) = (\G[L_v], \sigma_{|L_v})$. Moreover, we have
  $\Rbin_{|L_v} = \Rbin(\G[L_v], \sigma_{|L_v})$ by
  Obs.~\ref{obs:R-restriction}.  By the recursive construction of
  $(T,\sigma)$ via \texttt{BUILD}, we therefore conclude that
  $(T(v),\sigma_{|L_v})$ is the BRT for the BMG $(\G[L_v], \sigma_{|L_v})$.
  By induction hypothesis, we can therefore conclude
  $r(T(v))=\cl(\Rbin(\G[L_v], \sigma_{|L_v}))$.
  
  Let $ab|c\in r(T)$ and suppose first
  $\lca_{T}(\{a,b,c\})\preceq_{T}v\prec_{T}\rho$ for some
  $v\in\child_T(\rho)$.  In this case, we have $ab|c\in 
  r(T(v))=\cl(\Rbin(\G[L_v], \sigma_{|L_v}))$.  Together with
  $\Rbin(\G[L_v], \sigma_{|L_v})=\Rbin_{|L_v}\subseteq\Rbin$ and
  monotonicity of the closure it follows $ab|c\in\cl(\Rbin)$.
  
  It remains to show that, for each triple $ab|c\in r(T)$ with
  $\lca_{T}(\{a,b,c\})=\rho$, it also holds $ab|c\in\cl(\Rbin)$.
  In this case, we have $a,b\in L_{v}$ and $c\in L_{v'}$ for two distinct 
  children $v$ and $v'$ of the root $\rho$.
  As argued above, $L_{v}$ and $L_{v'}$ correspond to two distinct connected 
  components $C_{v}$ and $C_{v'}$ of $[\Rbin, L]$.
  Consider the set $L'\coloneqq L_{v}\cup L_{v'}=V(C_{v})\cup V(C_{v'})$.
  By Lemma~\ref{lem:Rbin-triple-components}, the Aho graph $[\Rbin_{|L'}, L']$ 
  consists exactly of the two connected components $C_{v}$ and $C_{v'}$, where 
  $C_{v}$ contains $a$ and $b$, and $C_{v'}$ contains $c$.
  This and the fact that $L=L_{\Rbin}$ allows us to apply
  Prop.~\ref{prop:triple-in-closure} and to conclude that $ab|c\in\cl(\Rbin)$.
  
  In summary, every triple in $ab|c\in r(T)$ satisfies $ab|c\in\cl(\Rbin)$,
  thus $r(T)\subseteq\cl(\Rbin)$. On the other hand, the fact that $T$
  displays $\Rbin$ and that $r(T)$ is closed imply
  $\cl(\Rbin)\subseteq \cl(r(T)) = r(T)$.  Therefore, $\cl(\Rbin) =
  r(T)$. %\qed
\end{proof}

\begin{figure}[t]
  \begin{center}
    \includegraphics[width=0.65\textwidth]{./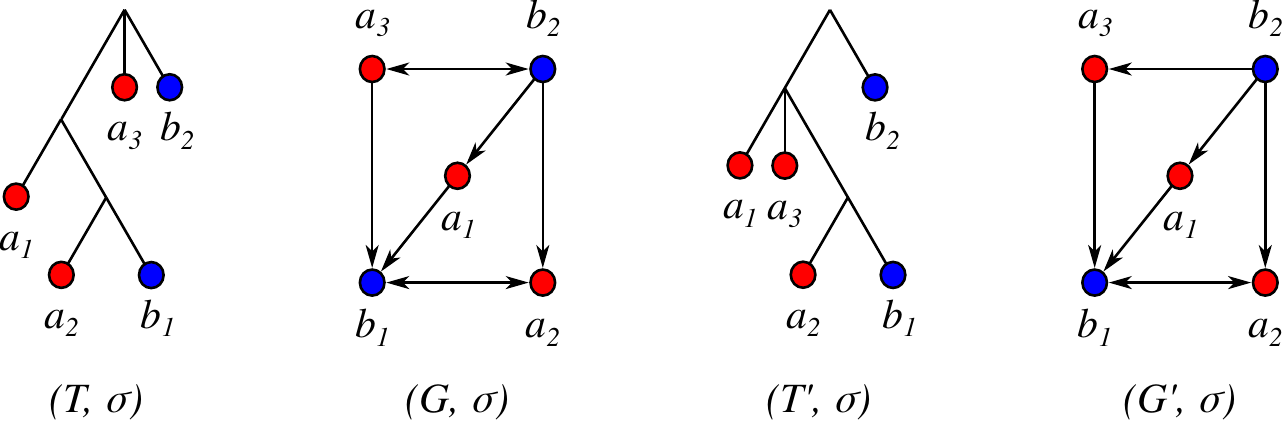}
  \end{center}
  \caption{A least resolved tree $(T,\sigma)$ explaining the BMG
    $(\G,\sigma)$ with informative triples
    $\mathscr{R}\coloneqq\mathscr{R}(\G,\sigma)= \{a_2b_1|a_1,\,
    a_2b_1|a_3,\, a_2b_1|b_2,\, a_1b_1|b_2\}$ for which
    $r(T)\ne\cl(\mathscr{R})$.  The tree $(T',\sigma)$ also displays
    $\mathscr{R}$ but $a_1a_2|a_3\in r(T)$ and $a_1a_2|a_3\notin r(T')$.
    In particular, $(T',\sigma)$ explains a different BMG $(\G',\sigma)$ in
    which the arc $(a_3,b_2)$ is missing.}
  \label{fig:LRT-cl-counterexample}
\end{figure}

No analog of Lemma~\ref{lem:BRT-closure} holds for LRTs, i.e., in general
we have $\cl(\mathscr{R}(\G,\sigma))\ne r(T)$ for the LRT $(T,\sigma)$ of a
BMG $(\G,\sigma)$. Fig.~\ref{fig:LRT-cl-counterexample} shows a
counterexample.

Following \cite{Geiss:18a,Grunewald:07}, a set of rooted triples
$\mathscr{R}$ \emph{identifies} a tree $T$ on $L$ if $T$ displays
$\mathscr{R}$ and every other tree on $L$ that displays $\mathscr{R}$ is a
refinement of $T$.
\begin{proposition} \textup{\cite[Lemma~2.1]{Grunewald:07}}
  \label{prop:cl-iff-identifies}
  Let $T$ be a phylogenetic tree and $\mathscr{R}\subseteq r(T)$.  Then
  $\cl(\mathscr{R})=r(T)$ if and only if $\mathscr{R}$ identifies $T$.
\end{proposition}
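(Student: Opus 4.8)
The plan is to unwind the definitions of the \emph{span} $\Rspan{\mathscr{R}}$, the \emph{closure} $\cl(\mathscr{R})=\bigcap_{T'\in\Rspan{\mathscr{R}}}r(T')$, and the relation ``\emph{identifies}'', and thereby to reduce the equivalence to a single classical fact about how displayed triples order trees by refinement. Throughout I take the leaf set of $T$ to coincide with $L_{\mathscr{R}}=\bigcup_{t\in\mathscr{R}}L(t)$; this is the setting in which $\Rspan{\mathscr{R}}$ and $\cl(\mathscr{R})$ are directly comparable to $r(T)$, and it holds in every application we make of this result (e.g.\ via Lemma~\ref{lem:no-vertex-lost}). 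Write $L\coloneqq L(T)=L_{\mathscr{R}}$. Since $\mathscr{R}\subseteq r(T)$, the tree $T$ displays $\mathscr{R}$, so $T\in\Rspan{\mathscr{R}}$.

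The one external ingredient I would isolate first is the standard fact (see \cite{Semple:03}): for two phylogenetic trees $T_1,T_2$ on the same leaf set $L$, one has $r(T_1)\subseteq r(T_2)$ if and only if $T_2$ is a refinement of $T_1$. The ``if'' direction is immediate, since a refinement displays every triple of the coarser tree; the ``only if'' direction is the substantive one, expressing that a tree is determined up to refinement by the triples it displays. Given this fact, both implications of the proposition are short. For ``$\mathscr{R}$ identifies $T$ $\Rightarrow$ $\cl(\mathscr{R})=r(T)$'': the inclusion $\cl(\mathscr{R})\subseteq r(T)$ follows at once from $T\in\Rspan{\mathscr{R}}$ and the definition of the closure as an intersection. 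For the reverse inclusion, take any $T'\in\Rspan{\mathscr{R}}$; as $T'$ is a tree on $L$ displaying $\mathscr{R}$ and $\mathscr{R}$ identifies $T$, the tree $T'$ is a refinement of $T$, whence $r(T)\subseteq r(T')$ by the easy direction of the isolated fact. Intersecting over all $T'\in\Rspan{\mathscr{R}}$ gives $r(T)\subseteq\cl(\mathscr{R})$, and equality follows.

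For the converse ``$\cl(\mathscr{R})=r(T)$ $\Rightarrow$ $\mathscr{R}$ identifies $T$'': the tree $T$ displays $\mathscr{R}$ because $\mathscr{R}\subseteq r(T)$. Let $T'$ be any tree on $L$ displaying $\mathscr{R}$, that is, $T'\in\Rspan{\mathscr{R}}$; then $\cl(\mathscr{R})\subseteq r(T')$ by definition of the closure, so $r(T)=\cl(\mathscr{R})\subseteq r(T')$. Now the ``only if'' (hard) direction of the isolated fact yields that $T'$ is a refinement of $T$, which is exactly the condition for $\mathscr{R}$ to identify $T$.

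I expect the only genuine obstacle to be that ``only if'' direction (triple containment forcing refinement); the rest is chasing the definitions of span and closure. That fact can be proved by induction on $|L|$ by matching the maximal clusters of $T_1$ and $T_2$ (equivalently, the connected components of the associated Aho graphs): since $T_1$'s triples are displayed by $T_2$, each maximal cluster of $T_1$ decomposes into a union of clusters of $T_2$, and recursing on each block shows $T_2$ refines $T_1$. As this is a well-established result in the phylogenetics literature, I would cite it rather than reprove it, which is consistent with the proposition itself being stated with an external reference.
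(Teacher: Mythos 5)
Your argument is correct, but note that the paper offers no proof of this statement at all: it is imported verbatim as Lemma~2.1 of the cited reference \cite{Grunewald:07}, so there is nothing internal to compare against. What you supply is exactly the definitional unwinding one would expect: both implications reduce to the single classical fact that for phylogenetic trees $T_1,T_2$ on the same leaf set, $r(T_1)\subseteq r(T_2)$ holds if and only if $T_2$ refines $T_1$, combined with $T\in\Rspan{\mathscr{R}}$ and the fact that $\cl(\mathscr{R})$ is an intersection over the span. Your explicit restriction to $L(T)=L_{\mathscr{R}}$ is the right caveat --- without it, $\Rspan{\mathscr{R}}$ and the trees quantified over in the definition of ``identifies'' live on different leaf sets and the statement would need rephrasing --- and, as you observe, this hypothesis is guaranteed in the paper's only application (via Lemma~\ref{lem:no-vertex-lost}). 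The one imprecision is in your sketch of the crux fact: from $r(T_1)\subseteq r(T_2)$ what one needs is that every cluster of $T_1$ \emph{is} a cluster of $T_2$, not merely a union of such; the standard route is to show via the Aho-graph (or an lca argument: if $a,b\in A$, $c\notin A$ for a cluster $A$ of $T_1$, then $ab|c\in r(T_2)$ forces $c\notin L(T_2(\lca_{T_2}(A)))$) that each maximal cluster of $T_1$ lies inside a maximal cluster of $T_2$ and then recurse. Since you explicitly defer to the literature for that fact rather than relying on the sketch, this does not affect the validity of your proof.
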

From Lemma~\ref{lem:BRT-closure} and Prop.~\ref{prop:cl-iff-identifies} we
immediately obtain the main result of this section:
\begin{theorem}
  \label{thm:BRT-all-bin-trees}
  Let $(\G,\sigma)$ be a binary-explainable BMG with vertex set $L$ and BRT
  $(T,\sigma)$.  Then every tree on $L$ that displays $\Rbin(\G,\sigma)$ is
  a refinement of $(T,\sigma)$.  In particular, every binary tree that
  explains $(\G,\sigma)$ is a refinement of $(T,\sigma)$.
\end{theorem}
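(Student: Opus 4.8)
The plan is to read off the result almost directly from Lemma~\ref{lem:BRT-closure} together with Prop.~\ref{prop:cl-iff-identifies}, using the notion of a triple set \emph{identifying} a tree as the bridge between ``closure equals $r(T)$'' and ``every displaying tree is a refinement''.

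First I would record the setup. Since $(\G,\sigma)$ is binary-explainable, Thm.~\ref{thm:Rbin-MAIN} guarantees that $\Rbin\coloneqq\Rbin(\G,\sigma)$ is consistent, so the BRT $(T,\sigma)=(\Aho(\Rbin,L),\sigma)$ exists and, by correctness of \texttt{BUILD}, displays $\Rbin$; hence $\Rbin\subseteq r(T)$. Lemma~\ref{lem:BRT-closure} then gives $\cl(\Rbin)=r(T)$. With $\Rbin\subseteq r(T)$ and $\cl(\Rbin)=r(T)$ in hand, I would apply Prop.~\ref{prop:cl-iff-identifies} with $\mathscr{R}=\Rbin$ to conclude that $\Rbin$ \emph{identifies} $(T,\sigma)$. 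Unwinding the definition of ``identifies'', this says precisely that $(T,\sigma)$ displays $\Rbin$ and every tree on $L$ that displays $\Rbin$ is a refinement of $(T,\sigma)$, which is the first assertion of the theorem. For the ``in particular'' clause, I would invoke Lemma~\ref{lem:bin-displays-Rbin}: any binary tree explaining $(\G,\sigma)$ displays $\Rbin(\G,\sigma)$, so it falls under the first assertion and is therefore a refinement of $(T,\sigma)$.

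The only subtle point---and the step I would treat carefully rather than as routine---is the compatibility of leaf sets. The closure operator $\cl$ is defined over trees with leaf set $L_{\Rbin}=\bigcup_{t\in\Rbin}L(t)$, whereas ``identifies'' and the theorem statement quantify over trees on all of $L$. I would dispatch the degenerate cases first using Lemma~\ref{lem:no-vertex-lost}: if $\Rbin=\emptyset$ (equivalently, all vertices share one color or are pairwise distinctly colored), then the BRT is the star on $L$, every tree on $L$ trivially refines it, and the statement holds. Otherwise Lemma~\ref{lem:no-vertex-lost} yields $L_{\Rbin}=L=L(T)$, so the leaf set underlying $\cl$ coincides with $L$ and the application of Prop.~\ref{prop:cl-iff-identifies} is legitimate. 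No genuinely hard work remains at this stage: all of the combinatorial difficulty was already absorbed into Lemma~\ref{lem:BRT-closure} (via Lemma~\ref{lem:Rbin-triple-components} and Prop.~\ref{prop:triple-in-closure}), so the present argument is essentially a bookkeeping assembly of those ingredients.
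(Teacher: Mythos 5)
Your proposal is correct and follows exactly the paper's route: the theorem is obtained immediately from Lemma~\ref{lem:BRT-closure} combined with Prop.~\ref{prop:cl-iff-identifies} (via the notion of $\Rbin$ identifying the BRT), with the ``in particular'' clause supplied by Lemma~\ref{lem:bin-displays-Rbin}. Your extra care about the leaf-set issue ($L_{\Rbin}=L$ via Lemma~\ref{lem:no-vertex-lost}, with the star-tree degenerate case handled separately) is a point the paper leaves implicit, and it is handled correctly.
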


\begin{corollary}\label{cor:binary-iff-refinement}
  If $(\G,\sigma)$ is binary-explainable with BRT $(T,\sigma)$, then a
  binary tree $(T',\sigma)$ explains $(\G,\sigma)$ if and only if it is a
  refinement of $(T,\sigma)$.
\end{corollary}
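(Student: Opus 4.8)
The plan is to observe that this corollary is essentially a repackaging of the two main structural results already established, namely Thm.~\ref{thm:Rbin-MAIN} and Thm.~\ref{thm:BRT-all-bin-trees}, one for each direction of the equivalence. Since both ingredients are in hand, the proof should be short, and the only real care needed is to match up the notation (in particular, recalling that the BRT is by definition the tree $(\Aho(\Rbin(\G,\sigma), L),\sigma)$).

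For the direction asserting that every binary refinement explains the BMG, I would proceed as follows. Since $(\G,\sigma)$ is binary-explainable, Thm.~\ref{thm:Rbin-MAIN} guarantees that $\Rbin(\G,\sigma)$ is consistent and that the BMG $(\G,\sigma)$ is explained by every refinement of $(\Aho(\Rbin(\G,\sigma), L),\sigma)$. By the definition of the BRT, this Aho tree is exactly $(T,\sigma)$. Hence any binary tree $(T',\sigma)$ that is a refinement of $(T,\sigma)$ is in particular a refinement of this Aho tree and therefore explains $(\G,\sigma)$.

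For the converse direction, I would simply invoke Thm.~\ref{thm:BRT-all-bin-trees}, whose final sentence states precisely that every binary tree explaining $(\G,\sigma)$ is a refinement of the BRT $(T,\sigma)$. Thus if $(T',\sigma)$ is a binary tree that explains $(\G,\sigma)$, it must be a refinement of $(T,\sigma)$.

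I do not anticipate any genuine obstacle here: the substance of the argument lives entirely in the preceding lemmas and theorems (most critically in Lemma~\ref{lem:BRT-closure} and its consequence Thm.~\ref{thm:BRT-all-bin-trees}), and this corollary merely records the clean two-sided characterization that falls out of combining them. The one point worth stating explicitly, to keep the argument self-contained, is the identification of the BRT with $(\Aho(\Rbin(\G,\sigma),L),\sigma)$, so that Thm.~\ref{thm:Rbin-MAIN} can be applied directly to $(T,\sigma)$ without further justification.
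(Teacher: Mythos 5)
Your proposal is correct and matches the paper's (implicit) argument exactly: the paper states this corollary without proof precisely because, as you observe, the forward direction is the final sentence of Thm.~\ref{thm:BRT-all-bin-trees} and the reverse direction is the last assertion of Thm.~\ref{thm:Rbin-MAIN} applied to the BRT $(\Aho(\Rbin(\G,\sigma),L),\sigma)$. Nothing further is needed.
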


Assuming that evolution of a gene family only progresses by bifurcations
and that the correct BMG $(\G,\sigma)$ is known,
Cor.~\ref{cor:binary-iff-refinement} implies that the true (binary) gene
tree displays the BRT of $(\G,\sigma)$. Fig.~\ref{fig:bmg-lrt-brt} shows
the LRT and BRT for the BMG $(\G,\sigma)$ in
Fig.~\ref{fig:bmg-example}C. The BRT is more finely resolved than the LRT,
see also Fig.~\ref{fig:bmg-example}D.  The difference arises from the
triple $a_2a_3|c_2\in\Rbin(\G,\sigma)\setminus\mathscr{R}(\G,\sigma)$.  The
true gene tree in Fig.~\ref{fig:bmg-example}(A,B) is a binary refinement of
the BRT (and thus also of the LRT).

\begin{figure}[t]
  \begin{center}
    \includegraphics[width=0.85\textwidth]{./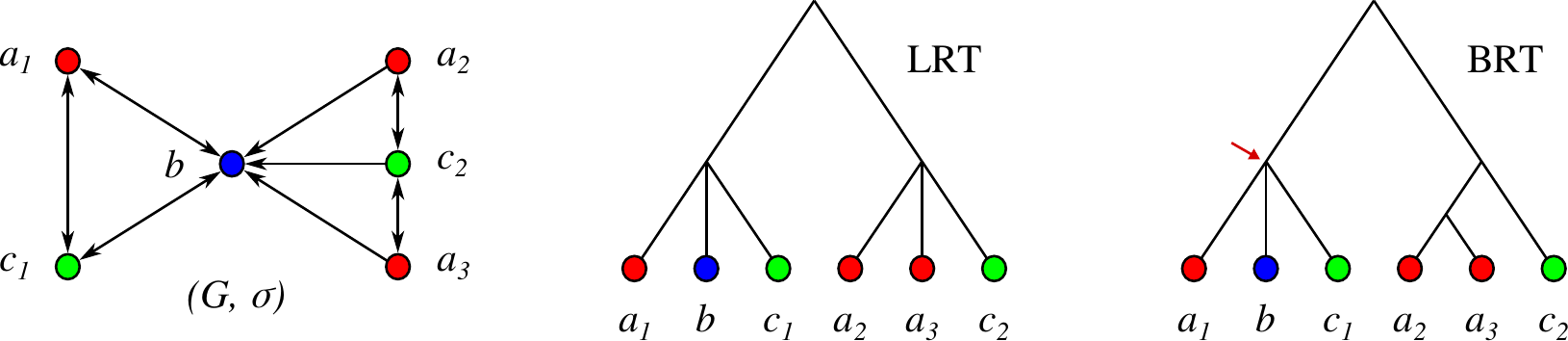}
  \end{center}
  \caption{The binary-refinable tree (BRT) of the binary-explainable BMG
    $(\G,\sigma)$ (cf.\ Fig.~\ref{fig:bmg-example}C) is better resolved
    than its LRT (cf.\ Fig.~\ref{fig:bmg-example}C).  The remaining
    polytomy in the BRT (red arrow) can be resolved arbitrarily. Out of the
    three possibilities, one results in the original binary tree (cf.\
    Fig.~\ref{fig:bmg-example}B).}
  \label{fig:bmg-lrt-brt}
\end{figure}

\section{Simulation Results}
\label{sect:sim}

Best match graphs contain valuable information on the (rooted) gene tree
topology since both their LRTs and BRTs are displayed by the latter (cf.\
\cite{Geiss:19a} and Cor.~\ref{cor:binary-iff-refinement}).  Hence, they
are of interest for the reconstruction of gene family histories.  In order
to illustrate the potential benefit of using the better resolved BRT
instead of the LRT, we simulated realistic, but idealized, evolutionary
scenarios using the library \texttt{AsymmeTree} \cite{Stadler:20a}, i.e.,
we extracted the ``true'' BMGs from the simulated gene trees.  Hence, we do
not take into account errors arising in the approximation of best matches
from sequence data.  In real-life applications, of course, factors such as
rate variation among different branches and inaccuracies in sequence
alignment need to be taken into account, see e.g.\
\cite{Schaller:20y,Stadler:20a} for a more detailed discussion of this
topic.

In brief, species trees are generated using the Innovation Model
\cite{Keller:12}. A so-called planted edge above the root is added to
account for the ancestral line, in which gene duplications may already
occur. This planted tree $S$ is then equipped with a dating function that
assigns time stamps to its vertices. Binary gene trees $\widetilde T$ are
simulated along the edges of the species tree by means of a constant-rate
birth-death process extended by additional branchings at the
speciations. For HGT events, the recipient branches are assigned at
random. An extant gene $x$ corresponds to a branch of $\widetilde T$ that
reaches present time and thus a leaf $s$ of $S$, determining
$\sigma(x)=s$. All other leaves of $\widetilde T$ correspond to losses.  To
avoid trivial cases, losses are constrained in such a way that every branch
(and in particular every leaf) of $S$ has at least one surviving gene.  The
observable part $T$ of $\widetilde T$ is obtained by removing all branches
that lead to losses only and suppressing inner vertices with a single
child. From $(T,\sigma)$, the BMG and its LRT and BRT are constructed.

We consider single leaves and the full set $L$ as trivial clades since they
appear in any phylogenetic tree $T=(V,E)$ with leaf set $L$.  We can
quantify the resolution $\res(T)$ as the fraction of non-trivial clades of
$T$ retained in the LRT or BRT, respectively, which is the same as the
fraction of inner edges that remain uncontracted.  To see this, we note
that $T$ has between $0$ and $|L|-2$ edges that are not incident with
leaves, with the maximum attained if and only if $T$ is binary. Thus $T$
has $|E|-|L|$ edges that have remained uncontracted.  On the other hand,
each vertex of $T$ that is not a leaf or the root defines a non-trivial
clade. Thus $T$ contains $|V|-1-|L|$ non-trivial clades. Since $|E|=|V|-1$
we have
\begin{equation}
  \res(T)\coloneqq \frac{|E|-|L|}{|L|-2}=\frac{|V|-|L|-1}{|L|-2}.
\end{equation}
The parameter $\res(T)$ is well-defined for $|L|>2$, which is always the
case in the simulated scenarios. It satisfies $\res(T)=0$ for a tree
consisting only of the root and leaves, and $\res(T)=1$ for binary trees.
Since the true gene tree $(T,\sigma)$ is binary, it displays both the
$\LRT$ and $\BRT$ of its BMG. Thus we have
$0\le\res(\LRT)\le\res(\BRT)\le\res(T)=1$.

\begin{figure}[t]
  \begin{center}
    \includegraphics[width=0.85\textwidth]{./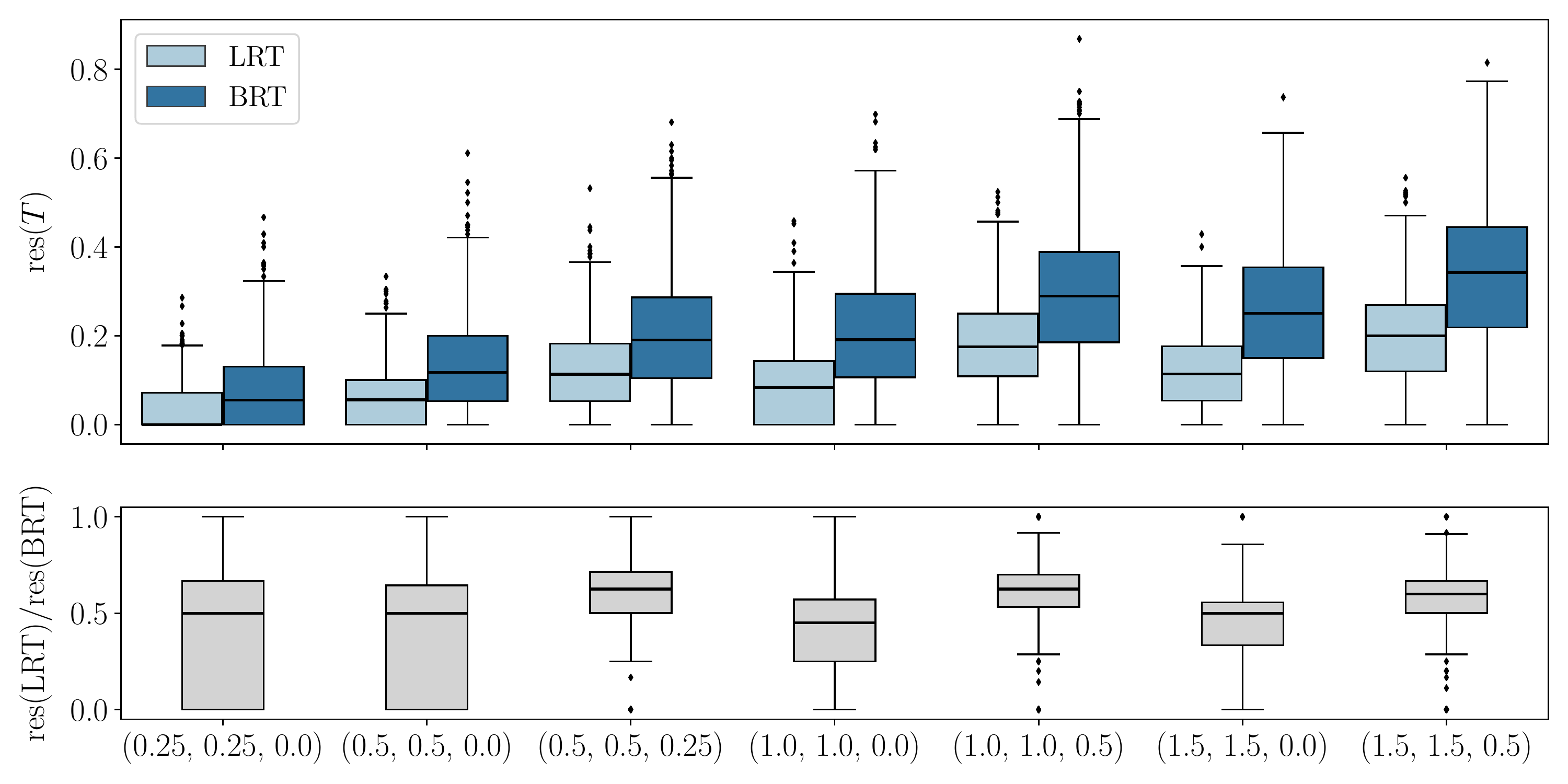}
  \end{center}
  \caption{Comparison of LRTs and BRTs of BMGs obtained from simulated
    evolutionary scenarios with $10$ to $30$ species and binary gene
    trees with different combinations of rates for gene duplications,
    gene loss, and horizontal transfer (indicated as triples on the
    horizontal axis). Top: Fraction of resolved non-trivial clades
    $\res(\LRT)$ and $\res(\LRT)$. Below: The ratio of these parameters. 
    Distributions are computed from 1000 scenarios for each
    combination of rates. The box plots show the median and inter-quartile
    range.}
  \label{fig:inner-vertices-boxplots}
\end{figure}

The results for the simulated scenarios with different rates for
duplications, losses, and horizontal transfers are summarized in
Fig.~\ref{fig:inner-vertices-boxplots}. In general, the BRT is much better
resolved than the LRT with the median values of $\res(\BRT)$ exceeding
$\res(\LRT)$ by about a factor of two (cf.\ lower panel).

\section{Modification Problems for Binary-Explainable BMGs}
\label{sec:editing-complexity}

\subsection{Complexity Results}

In this section, we address the complexity of editing a properly colored
digraph to a binary-explainable BMG with a minimal number of arc insertions
or arc deletions.  To this end, we define, for a digraph $\G=(V,E)$ and an
arc set $F\subseteq V\times V \setminus \{(v,v)\mid v\in V\}$, the digraphs
$\G\symdiff F\coloneqq (V, E \symdiff F)$, where $E \symdiff F$ denotes the
symmetric difference of the arc sets $E$ and $F$. In the following, we also 
write $\G-F\coloneqq (V, E \setminus F)$ and $\G+ F\coloneqq (V, E \cup F)$.
An optimal arc modification to a binary-explainable BMG translates to the
following decision problem:

\begin{problem}[\PROBLEM{$\ell$-BMG Editing restricted to 
    binary-explainable Graphs ($\ell$-BMG EBEG)}]\ \\
  \begin{tabular}{ll}
    \emph{Input:}    & A properly $\ell$-colored digraph $(\G =(V,E),\sigma)$
    and an integer $k$.\\
    \emph{Question:} & Is there a subset $F\subseteq V\times V \setminus
    \{(v,v)\mid v\in V\}$ such that\\
    & $|F|\leq k$ and $(\G \symdiff F,\sigma)$
    is a binary-explainable $\ell$-BMG?
  \end{tabular}
\end{problem}
\noindent
The corresponding arc deletion and arc completion problems will be denoted
by \PROBLEM{$\ell$-BMG DBEG} and \PROBLEM{$\ell$-BMG CBEG}, respectively.
In \cite{Schaller:20y}, it was already shown that \PROBLEM{$\ell$-BMG CBEG}
is NP-complete for $\ell\ge 2$.  Here, we provide NP-completeness results
for the other two arc modification problems.  To this end, we employ a
slight modification of the reduction from \PROBLEM{Exact 3-Cover} that was
used in \cite{Schaller:20y} to prove NP-hardness of the general editing and
deletion problem, i.e., the problem of obtaining a (not necessarily
binary-explainable) BMG from a properly $\ell$-colored digraph.

\begin{problem}[\PROBLEM{Exact 3-Cover (X3C)}]\ \\
  \begin{tabular}{ll}
    \emph{Input:}    & A set $\mathfrak{S}$ with $|\mathfrak{S}|=3t$
    elements and a collection $\mathcal{C}$ of 3-element\\
    & subsets of $\mathfrak{S}$.\\
    \emph{Question:} & Does $\mathcal{C}$ contain an exact cover for
    $\mathfrak{S}$, i.e., a subcollection
    $\mathcal{C}'\subseteq\mathcal{C}$\\
    & such that every element of $\mathfrak{S}$ occurs in
    exactly one member of $\mathcal{C}'$?
  \end{tabular}
\end{problem}
\noindent
In other words, $\mathcal{C}'\subseteq\mathcal{C}$ is an exact 3-cover of
$\mathfrak{S}$ with $|\mathfrak{S}|=3t$ if it is of size $|\mathcal{C}'|=t$
and satisfies $\bigcup_{C\in\mathcal{C}'}C=\mathfrak{S}$. We start from
\begin{theorem}{\cite{Karp:72}}
  \PROBLEM{X3C} is NP-complete.
  \label{thm:NPc-X3c}
\end{theorem}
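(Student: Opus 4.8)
The plan is to establish both membership in NP and NP-hardness, following the classical argument behind Karp's result. Membership in NP is immediate: given a candidate subcollection $\mathcal{C}'\subseteq\mathcal{C}$, one verifies in time polynomial in $|\mathfrak{S}|+|\mathcal{C}|$ that $|\mathcal{C}'|=t$ and that every element of $\mathfrak{S}$ lies in exactly one member of $\mathcal{C}'$. Such a $\mathcal{C}'$ thus serves as a polynomial-size certificate, so \PROBLEM{X3C}$\,\in\,$NP.

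For NP-hardness, the cleanest route is a reduction from \PROBLEM{3-Dimensional Matching (3DM)}, which is itself NP-complete. Recall that a \PROBLEM{3DM} instance consists of three pairwise disjoint sets $W$, $X$, $Y$ with $|W|=|X|=|Y|=q$ together with a collection $M\subseteq W\times X\times Y$, and asks whether $M$ contains a perfect matching, i.e.\ a subcollection $M'$ of size $q$ covering $W\cup X\cup Y$. The key observation is that every such instance is literally a special instance of \PROBLEM{X3C}: one sets $\mathfrak{S}\coloneqq W\cup X\cup Y$ (so $|\mathfrak{S}|=3q$) and lets $\mathcal{C}$ consist of the $3$-element sets $\{w,x,y\}$ for each triple $(w,x,y)\in M$. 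Under this identification, a subcollection covers $\mathfrak{S}$ exactly once if and only if it is a perfect matching, so the map is a polynomial-time (indeed near-identity) reduction, and exact covers correspond bijectively to perfect matchings.

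The only point to track is correctness of this correspondence, which is straightforward here because the tripartite structure of $M$ already forces each chosen triple to contribute one new element to each of $W$, $X$, and $Y$; there is no gadget construction to verify. The genuine combinatorial difficulty is therefore entirely absorbed into the NP-completeness of \PROBLEM{3DM}, whose standard proof reduces from \PROBLEM{3-SAT} via variable- and clause-gadgets. Since we invoke that result as a black box, no further work is required and the theorem follows. If one instead wished to prove the statement from scratch, the main obstacle would be precisely this gadget construction for \PROBLEM{3DM}, but that lies outside the scope of what is needed here.
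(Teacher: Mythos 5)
The paper offers no proof of this statement at all---it is imported verbatim as a citation to Karp (1972)---so there is nothing internal to compare against. Your argument (NP membership via the obvious certificate, NP-hardness by observing that every \PROBLEM{3DM} instance is literally an \PROBLEM{X3C} instance on $\mathfrak{S}=W\cup X\cup Y$) is the standard, correct derivation of exactly the classical result being cited.
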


We will make use of the aforementioned forbidden induced subgraph
(Fig.~\ref{fig:bmg-example}(E)) for binary-explainable BMGs:
\begin{definition}
  An \emph{hourglass} in a properly vertex-colored graph $(\G,\sigma)$,
  denoted by $[xy \hourglass x'y']$, is a subgraph $(\G[Q],\sigma_{|Q})$
  induced by a set of four pairwise distinct vertices
  $Q=\{x, x', y, y'\}\subseteq V(\G)$ such that (i)
  $\sigma(x)=\sigma(x')\ne\sigma(y)=\sigma(y')$, (ii) $(x,y),(y,x)$ and
  $(x'y'),(y',x')$ are bidirectional arcs in $\G$, (iii)
  $(x,y'),(y,x')\in E(\G)$, and (iv) $(y',x),(x',y)\notin E(\G)$.
\end{definition}
A properly vertex-colored digraph that does not contain an hourglass as an
induced subgraph is called \emph{hourglass-free.}
\begin{proposition}{\cite[Lemma~31 and Prop.~8]{Schaller:20x}}
  \label{prop:binary-iff-subtree-colors}
  For every BMG $(\G,\sigma)$ explained by a tree $(T,\sigma)$, the
  following three statements are equivalent: 
  \begin{enumerate}
    \item $(\G,\sigma)$ is binary-explainable.
    \item $(\G,\sigma)$ is hourglass-free.
    \item There is no vertex $u\in V^0(T)$ with three distinct children
    $v_1$, $v_2$, and $v_3$ and two distinct colors $r$ and $s$ satisfying
    \begin{enumerate}
      \item $r\in\sigma(L(T(v_1)))$, $r,s\in\sigma(L(T(v_2)))$, and
      $s\in\sigma(L(T(v_3)))$, and
      \item $s\notin\sigma(L(T(v_1)))$, and $r\notin\sigma(L(T(v_3)))$.
    \end{enumerate}
  \end{enumerate}
\end{proposition}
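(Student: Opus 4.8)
The plan is to establish the cycle of implications $(1)\Rightarrow(2)\Rightarrow(3)\Rightarrow(1)$, keeping the (not necessarily binary) tree $(T,\sigma)$ that explains $(\G,\sigma)$ fixed throughout, since statement~(3) refers to this specific $T$.

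For $(1)\Rightarrow(2)$ I would argue directly that no binary tree can contain an induced hourglass, so binary-explainability forces hourglass-freeness. Suppose a binary tree $(T',\sigma)$ explains $(\G,\sigma)$ and contains $[xy\hourglass x'y']$ with $\sigma(x)=\sigma(x')$ and $\sigma(y)=\sigma(y')$. Since $(x,y),(x,y')\in E(\G)$, both $y$ and $y'$ are best matches of $x$, so $\lca_{T'}(x,y)=\lca_{T'}(x,y')=:p$; symmetrically $(y,x),(y,x')\in E(\G)$ give $\lca_{T'}(x',y)=p$. From $(x',y')\in E(\G)$ but $(x',y)\notin E(\G)$ I get that $y'$, but not $y$, is a best match of $x'$, hence $\lca_{T'}(x',y')\prec_{T'} p$. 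Now binarity enters: $p$ has exactly two children $c_1,c_2$; the last inequality forces $x',y'$ into the same child, say $c_1$, while $\lca_{T'}(x,y')=\lca_{T'}(x',y)=p$ forces both $x$ and $y$ into $c_2$ --- contradicting $\lca_{T'}(x,y)=p$. (Comparability of the ancestors of a common leaf is what lets me equate the two $\preceq_{T'}$-minimal lca's in the first place.)

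For $(2)\Rightarrow(3)$ I would prove the contrapositive: a vertex $u$ with children $v_1,v_2,v_3$ and colors $r,s$ as in~(3) produces an hourglass. Pick $x\in L(T(v_1))$ of color $r$ and $y\in L(T(v_3))$ of color $s$. To obtain the bidirectional pair inside $v_2$, I use that $(\G[L(T(v_2))],\sigma)$ is the BMG of $(T(v_2),\sigma)$ (Lemma~\ref{lem:subgraph}) and that, since both colors $r,s$ occur in $T(v_2)$, any pair $(x',y')$ with $\sigma(x')=r$, $\sigma(y')=s$ in $L(T(v_2))$ whose lca is $\preceq_T$-minimal among all such pairs is necessarily a reciprocal best match, giving $(x',y'),(y',x')\in E(\G)$. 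The color conditions $s\notin\sigma(L(T(v_1)))$ and $r\notin\sigma(L(T(v_3)))$ make every cross-pair's lca equal to $u$ and hence a best match, so $(x,y),(y,x),(x,y'),(y,x')\in E(\G)$; whereas the closer color-$s$ leaf $y'$ (resp.\ color-$r$ leaf $x'$) inside $v_2$ kills $(x',y)$ and $(y',x)$. Verifying conditions (i)--(iv) then exhibits $[xy\hourglass x'y']$.

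The main work, and the step I expect to be the real obstacle, is $(3)\Rightarrow(1)$, which I would carry out by refining $(T,\sigma)$ into a binary tree that still explains $(\G,\sigma)$. It suffices to refine each polytomy $u$ with children $v_1,\dots,v_k$ independently into a binary tree $B$ on $\{v_1,\dots,v_k\}$, because such a refinement only enlarges $r(T)$ (so $\mathscr{R}(\G,\sigma)$ stays displayed) and, by Prop.~\ref{prop:BMG-charac}, the refined tree explains $(\G,\sigma)$ as soon as it still displays no triple of $\mathscr{F}(\G,\sigma)$. A newly displayed forbidden triple $ab|b'$ must have $a,b,b'$ in three distinct children of some $u$; using that, for $a\in L(T(v_i))$ and $b$ in another child of $u$, one has $(a,b)\in E(\G)\iff\sigma(b)\notin\sigma(L(T(v_i)))$ (together with Observation~\ref{obs:forb-triple-pairs}), such a triple corresponds exactly to a grouping of $B$ that separates a child from a color it shares with a grouped child. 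Writing $S_c$ for the set of children whose subtree contains color $c$, avoiding all such groupings is possible iff there is a tree on $\{v_1,\dots,v_k\}$ in which every $S_c$ is a clade, i.e.\ iff the family $\{S_c\}$ is laminar. The crux is then the combinatorial identity that $\{S_c\}$ failing to be laminar --- two crossing sets $S_r,S_s$ --- is precisely the configuration excluded in~(3), namely a child in $S_r\setminus S_s$, one in $S_s\setminus S_r$, and one in $S_r\cap S_s$. Hence the absence of the configuration in~(3) yields laminarity, a clade-respecting binary $B$ at every $u$, and thus a binary tree explaining $(\G,\sigma)$. The delicate points to get right are the exact correspondence between newly displayed forbidden triples and color-splitting groupings of $B$, and the verification that a clade-respecting $B$ never displays such a triple --- which itself relies on laminarity to preclude two incompatible demands on a single triple of children.
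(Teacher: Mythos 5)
The paper does not prove this proposition at all---it is imported verbatim from \cite[Lemma~31 and Prop.~8]{Schaller:20x}---so there is no in-paper argument to compare against. Your blind proof is, as far as I can check, a correct self-contained derivation of the cited result. The cycle $(1)\Rightarrow(2)\Rightarrow(3)\Rightarrow(1)$ is sound: in $(1)\Rightarrow(2)$ the four lca identities forced by the hourglass arcs do pin $x',y'$ into one child of $p=\lca_{T'}(x,y)$ and $x,y$ into the other, contradicting $\lca_{T'}(x,y)=p$; in $\neg(3)\Rightarrow\neg(2)$ the choice of an $(r,s)$-pair in $L(T(v_2))$ with $\preceq_T$-minimal lca does yield the reciprocal pair, and the color-absence conditions (b) give exactly the remaining arcs and non-arcs of the hourglass. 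The substantive step $(3)\Rightarrow(1)$ is also right: newly displayed triples of a refinement live on three distinct children of a polytomy, the criterion $(a,b)\in E(\G)\iff\sigma(b)\notin\sigma(L(T(v_i)))$ for $a\in L(T(v_i))$ identifies the forbidden triples among these as precisely the configurations $v_i\notin S_c$, $v_j,v_l\in S_c$, and avoiding both members of each forbidden pair (Obs.~\ref{obs:forb-triple-pairs}) in a binary local tree forces $v_jv_l|v_i$, i.e.\ each $S_c$ must be a clade of the local refinement; laminarity of $\{S_c\}$ is exactly the negation of the configuration in~(3), and a clade-respecting binary refinement then displays no forbidden triple, so Prop.~\ref{prop:BMG-charac} gives a binary explaining tree. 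The only points I would ask you to write out fully are the two you already flag as delicate: the equivalence ``$S_c$ is a clade of $B$'' $\iff$ ``$B$ displays $v_jv_l|v_i$ for all $v_j,v_l\in S_c$, $v_i\notin S_c$'' (both directions), and the standard fact that a laminar family extends to a binary tree having all its members as clades. Neither is a gap in substance.
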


\begin{figure}[t]
  \begin{center}
    \includegraphics[width=0.85\textwidth]{./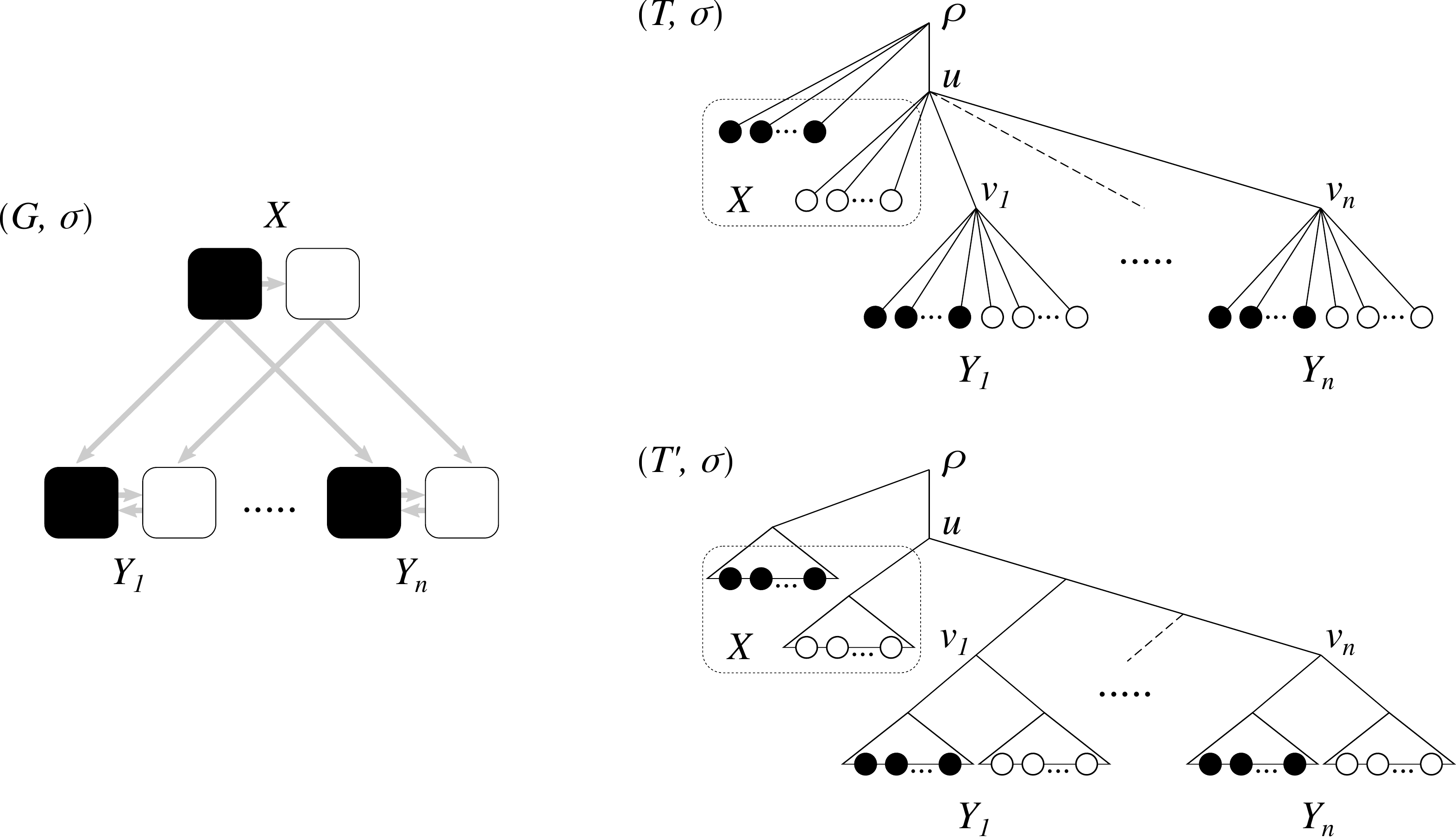}
  \end{center}
  \caption{Illustration of the graph $(\G,\sigma)$ constructed in
    Lemma~\ref{lem:CCs-are-beBMGs} and its (least resolved) tree
    $(T,\sigma)$.  The boxes represent the set of black and white vertices
    contained in the sets $X, Y_1, \dots, Y_n$.  The thick gray arrows
    indicate that all arcs in that direction exist between the respective
    sets.  The tree $(T',\sigma)$ (with the triangles representing an
    arbitrary binary subtrees) is a possible binary refinement of
    $(T,\sigma)$ that explains $(\G,\sigma)$.}
  \label{fig:CCs-are-beBMGs}
\end{figure}

Furthermore, we will make use of properly 2-colored (sub-)graphs that
contain all possible arcs.  More precisely, we will call a subset of
vertices $C\subseteq V(\G)$ of a colored digraph $(\G,\sigma)$ a
\emph{bi-clique} if (i) $|\sigma(C)|=2$ and (ii) $(x,y)\in E(\G[C])$ if and
only if $\sigma(x)\ne\sigma(y)$ for all $x,y\in C$.

We continue with an analogue of Lemma~5.3 in \cite{Schaller:20y}:
\begin{lemma}\label{lem:CCs-are-beBMGs}
  Let $(\G=(V,E),\sigma)$ be a 2-colored digraph obtained as follows: Set
  $V\coloneqq X \cupdot Y_1 \cupdot \dots \cupdot Y_n$ where each set in
  $\mathfrak{C}\coloneqq\{X, Y_1,\dots, Y_n\}$, $n\ge 1$, consists of at
  least one $\Black$ and at least one $\White$ vertex.  For the (initially
  empty) arc set $E$, add
  \begin{description}
    \item[(i)] all arcs \emph{from} the black vertices in $X$ \emph{to} the 
    white
    vertices in $X$,
    \item[(ii)] all arcs $(x,y)$ with $x\in X$ and $y\in V\setminus X$ for
    which $\sigma(x)\neq\sigma(y)$, and
    \item[(iii)] all arcs $(y_1,y_2)$ such that $y_1$ and $y_2$ are contained
    in the same set $Y_i\in \{Y_1,\dots, Y_n\}$ and 
    $\sigma(y_1)\neq\sigma(y_2)$.
  \end{description}
  Then $(\G,\sigma)$ is a binary-explainable BMG.  Moreover, the disjoint
  union of such graphs (with the same two colors) is a binary-explainable
  BMG.
\end{lemma}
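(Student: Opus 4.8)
The plan is to exhibit a single tree $(T,\sigma)$ that explains $(\G,\sigma)$ and then invoke Prop.~\ref{prop:binary-iff-subtree-colors} to upgrade ``BMG'' to ``binary-explainable BMG'', thereby avoiding the need to construct a binary tree by hand. Write $X_{\Black},X_{\White}$ for the black and white vertices of $X$, and likewise $Y_{i,\Black},Y_{i,\White}$ for each $Y_i$. I would take $(T,\sigma)$ to be the ``caterpillar with pendant bicliques'' sketched in Fig.~\ref{fig:CCs-are-beBMGs}: its root $\rho$ has as children all leaves of $X_{\Black}$ together with a vertex $u$; the vertex $u$ has as children all leaves of $X_{\White}$ together with a vertex $z$; the vertex $z$ has $n$ children $c_1,\dots,c_n$; and each $c_i$ has exactly two children, an arbitrary tree on $Y_{i,\Black}$ and an arbitrary tree on $Y_{i,\White}$, so that the subtree $T(c_i)$ realizes the biclique $Y_i$.

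First I would verify $\G(T,\sigma)=(\G,\sigma)$ by computing, for each of the four vertex types, which opposite-colored leaves minimize the relevant $\lca_T$. The only values that occur are the nested ancestors $c_i\prec_T z\prec_T u\prec_T\rho$ together with the biclique-internal vertices, so the computations are short: a black $b\in X$ has $\lca_T(b,w)=\rho$ for \emph{every} white $w$, giving exactly arcs (i) and (ii); a white $w\in X$ has $\lca_T(w,b)=u$ for $b\notin X$ but $\lca_T(w,b')=\rho$ for $b'\in X_{\Black}$, so its best matches are precisely the black vertices outside $X$, matching (ii) and the absence of white-to-black arcs inside $X$; and for $y\in Y_i$ the minimal $\lca_T$ with any opposite color is attained inside $Y_i$ (at $c_i$ or below), forcing the $Y_i$ into pairwise incomparable clades and yielding exactly the biclique arcs (iii). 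The substance of this step is checking that best matches are \emph{exactly} the listed arcs, i.e.\ that no spurious arc appears; this is where the separation of the $Y_i$ into distinct children of $z$, and the placement of $X_{\Black}$ strictly above $X_{\White}$, are genuinely used.

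Having shown that $(T,\sigma)$ explains the BMG $(\G,\sigma)$, I would apply Prop.~\ref{prop:binary-iff-subtree-colors}. As only two colors occur, its condition~(3) requires a single internal vertex to carry three distinct children whose subtrees are, respectively, $\Black$-monochromatic, $\White$-monochromatic, and bichromatic. I would rule this out vertex by vertex: $\rho$ has $\Black$-monochromatic children (from $X_{\Black}$) and a bichromatic child $u$, but no $\White$-monochromatic child; $u$ has $\White$-monochromatic children (from $X_{\White}$) and a bichromatic child $z$, but no $\Black$-monochromatic child; every child of $z$ is bichromatic; and $c_i$ together with the biclique-internal vertices either have only two children or involve a single color. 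Hence no internal vertex realizes the configuration, and $(\G,\sigma)$ is binary-explainable.

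Finally, for the disjoint union I would join the component trees $T^{(1)},\dots,T^{(m)}$ under a fresh root $\rho^{\ast}$ whose children are the old component roots. Because each component is $2$-colored with both colors present, it is sink-free and all of its within-component $\lca_T$-values lie strictly below $\rho^{\ast}$; thus a vertex's best matches never escape its own component, the join explains the disjoint union, and it is therefore a BMG. Applying Prop.~\ref{prop:binary-iff-subtree-colors} once more, $\rho^{\ast}$ is harmless because \emph{every} child subtree is bichromatic, so no monochromatic child is available, while all deeper internal vertices were already handled; hence the disjoint union is binary-explainable as well. I expect the main obstacle throughout to be the ``no spurious arc'' direction of the best-match verification in the second paragraph, together with pinning down the exact relative heights of $X_{\Black}$, $X_{\White}$, and the $Y_i$-clades.
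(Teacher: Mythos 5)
Your proposal is correct and follows essentially the same route as the paper: exhibit an explicit explaining tree (yours differs from the paper's only in inserting an extra vertex $z$ above the $Y_i$-clades and realizing each $Y_i$ as a cherry of two monochromatic subtrees rather than as a star of leaves), verify the arcs by the same $\lca_T$ computations, and then rule out the three-children configuration of Prop.~\ref{prop:binary-iff-subtree-colors} vertex by vertex. The only substantive deviation is the disjoint-union step, where you join the component trees under a fresh root and re-verify directly, whereas the paper instead cites the closure of BMGs under disjoint union and the fact that hourglasses are connected; both arguments are sound.
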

\begin{proof}
  The construction of the graph $(\G,\sigma)$ is illustrated on the l.h.s.\
  of Fig.~\ref{fig:CCs-are-beBMGs}.  To show that $(\G,\sigma)$ is a BMG,
  it suffices to verify that the tree $(T,\sigma)$ in
  Fig.~\ref{fig:CCs-are-beBMGs} explains $(\G,\sigma)$.  For all $\Black$
  vertices in $X$, we have $\lca_{T}(x,y)=\rho$ for all $\White$ vertices
  $y\in V$. Hence, every $\White$ vertex $y\in V$ is a best match of every
  $\Black$ vertex in $X$.  All $\White$ vertices in $x\in X$ are children
  of $u$ and $\lca_{T}(x,y)=u$ for all vertices $y\in V\setminus X$. Taken
  together these two facts imply that every $\Black$ vertex
  $y\in V\setminus X$ is a best match of $x$. Since $n\ge1$, there is at
  least one such $\Black$ vertex $y\in V\setminus X$ and
  $\lca_{T}(x,y)=u\prec_{T}\rho =\lca_{T}(x,y')$ holds for every $\Black$
  vertex $y'\in X$. Therefore, none of the $\Black$ vertices in $X$ is a
  best match of any $\White$ vertex in $X$.  If
  $x,y\in Y_i \in \mathfrak{C}\setminus \{X\}$ and $x',y'\in L(T)$ are all
  distinct, we have $\lca_T(x,y)=v_i \preceq_T \lca_T(x,y'),
  \lca(x',y)$. Hence every $Y_i \in \mathfrak{C}\setminus \{X\}$ is a
  bi-clique. Furthermore, if $x'\in L(T)\setminus Y_i$ or
  $y'\in L(T)\setminus Y_i$, we have
  $\lca_T(x,y)=v_i \prec_T \lca(x',y)\in \{u,\rho\}$ and
  $\lca_T(x,y)=v_i \prec_T \lca(x,y')\in \{u,\rho\}$, resp., and therefore
  there are no arcs from vertices in $Y_i$ to vertices in $X$ and no arcs
  between distinct vertex sets $Y_i, Y_j \in \mathfrak{C}\setminus
  \{X\}$. Therefore, $\G(T,\sigma) = (\G,\sigma)$, and thus $(\G,\sigma)$
  is a BMG.
  
  It is now an easy task to verify that none of the inner vertices of
  $(T,\sigma)$ satisfies Condition~(a) and~(b) in
  Prop.~\ref{prop:binary-iff-subtree-colors}.  Since $(T,\sigma)$ explains
  $(\G,\sigma)$, Prop.~\ref{prop:binary-iff-subtree-colors} implies that
  $(\G,\sigma)$ is also binary-explainable.
  
  It remains to show that the disjoint union $(\G',\sigma')$ of such graphs
  $(\G_i,\sigma_i)$ with the same two colors is a binary-explainable BMG.
  Since all $(\G_i,\sigma_i)$ are in particular BMGs, Prop.~1 in
  \cite{Geiss:19a} implies that $(\G',\sigma')$ is a BMG.  By
  Prop.~\ref{prop:binary-iff-subtree-colors} and since every
  $(\G_i,\sigma_i)$ is binary-explainable, every $(\G_i,\sigma_i)$ is
  hourglass-free.  Since hourglasses are connected, their disjoint union
  $(\G',\sigma')$ is also hourglass-free.  Applying
  Prop.~\ref{prop:binary-iff-subtree-colors} again, we conclude that
  $(\G',\sigma')$ is a binary-explainable BMG. %\qed
\end{proof}
We note that the LRT used in the proof of Lemma~\ref{lem:CCs-are-beBMGs} is
in general not binary. As argued above, this does not imply that its BMG
$(\G,\sigma)$ is not binary-explainable. The tree $(T',\sigma)$ in
Fig.~\ref{fig:CCs-are-beBMGs} shows a possible binary refinement of the LRT
$(T,\sigma)$.

\begin{theorem}
  \label{thm:2BMG-EBEG-NP}
  \PROBLEM{$2$-BMG EBEG} is NP-complete.
\end{theorem}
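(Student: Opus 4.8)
The plan is to prove NP-completeness of \PROBLEM{$2$-BMG EBEG} via the standard two-step approach: first establish membership in NP, then give a polynomial-time reduction from \PROBLEM{X3C}, which is NP-complete by Thm.~\ref{thm:NPc-X3c}. Membership in NP should be routine: given a certificate $F$ with $|F|\le k$, we can verify in polynomial time that $(\G\symdiff F,\sigma)$ is a binary-explainable $2$-BMG. Indeed, by Thm.~\ref{thm:Rbin-MAIN} this amounts to checking that $(\G\symdiff F,\sigma)$ is sf-colored and that $\Rbin(\G\symdiff F,\sigma)$ is consistent; the former is an $O(|V|^2)$ check, and the latter can be decided in polynomial time via \texttt{BUILD}/\texttt{MTT} on the $O(|V|^3)$-size triple set.

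The heart of the proof is the reduction. Following the blueprint of \cite{Schaller:20y} (their Lemma~5.3 and the surrounding construction, whose binary-explainable analogue we have just supplied in Lemma~\ref{lem:CCs-are-beBMGs}), I would, given an \PROBLEM{X3C} instance $(\mathfrak{S},\mathcal{C})$ with $|\mathfrak{S}|=3t$, construct a properly $2$-colored digraph $(\G,\sigma)$ together with a budget $k$ so that $(\G,\sigma)$ admits an editing of size at most $k$ to a binary-explainable $2$-BMG if and only if $\mathcal{C}$ contains an exact cover. The natural design is to encode each element of $\mathfrak{S}$ and each set $C\in\mathcal{C}$ by small gadgets of \Black\ and \White\ vertices, wiring the arcs so that a cheapest edit is forced to ``select'' a subcollection $\mathcal{C}'\subseteq\mathcal{C}$, with the selection corresponding to merging the per-set gadgets into the bi-clique-like connected components of the form guaranteed to be binary-explainable by Lemma~\ref{lem:CCs-are-beBMGs}. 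The budget $k$ should be calibrated so that the minimum number of arc modifications equals a fixed baseline exactly when every element is covered precisely once, i.e.\ when $|\mathcal{C}'|=t$; over-covering or under-covering an element must strictly increase the edit cost.

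The two directions of correctness then have to be verified. For the forward direction, given an exact cover $\mathcal{C}'$, I would exhibit an explicit edit set $F$ of size $k$ and invoke Lemma~\ref{lem:CCs-are-beBMGs} to certify that each resulting connected component is a binary-explainable $2$-BMG, hence (since the disjoint union of such graphs with the same two colors is again binary-explainable, by the same lemma) so is $(\G\symdiff F,\sigma)$. For the converse, I would argue that any edit set $F$ with $|F|\le k$ must have a prescribed ``canonical'' structure: using the hourglass characterization (Prop.~\ref{prop:binary-iff-subtree-colors}), the sf-colored requirement, and a counting argument on the gadget arcs, I would show that staying within budget forces the edit to partition the element-gadgets into groups of exactly three, each group matching one $C\in\mathcal{C}$, thereby reading off an exact cover.

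The main obstacle I anticipate is the converse direction, specifically the calibration and rigidity of the construction: I must ensure that the budget $k$ is tight enough that no ``cheating'' edit---one that does not correspond to a clean exact cover (e.g.\ splitting an element across several sets, or introducing spurious arcs to save modifications elsewhere)---can come in at or under budget. This requires a careful accounting of the exact editing cost of every possible local configuration and an argument that deviations from the intended structure are strictly more expensive. Because we target \emph{binary}-explainable BMGs rather than arbitrary ones, the hourglass-freeness constraint of Prop.~\ref{prop:binary-iff-subtree-colors} adds forbidden configurations beyond those in the general \PROBLEM{BMG Editing} reduction of \cite{Schaller:20y}; I would need to check that these extra constraints do not accidentally admit a cheaper non-cover edit, and ideally that they can be exploited to make the gadget even more rigid than in the general case.
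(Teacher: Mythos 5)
Your overall strategy is the paper's: membership in NP via polynomial-time recognition of binary-explainable BMGs (Cor.~\ref{cor:BRT-constr-complexity}), and hardness via a reduction from \PROBLEM{X3C} that adapts the construction of \cite{Schaller:20y} and certifies the target graphs with Lemma~\ref{lem:CCs-are-beBMGs}. The gap is that the reduction itself is never constructed: you describe ``small gadgets'' and a budget ``calibrated'' so that only exact covers fit, but the entire content of the proof lives in the specific choices --- a bi-clique $S$ on $2n$ vertices, sets $X_i$ with $r=18t^2$ vertices of each color, sets $Y_i$ with $q=3\,[6r(m-t)+r-18t]$ vertices of each color, and budget $k=6r(m-t)+r-18t$. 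The size of $Y_i$ is chosen so that $2k<q$, guaranteeing unedited witnesses in every $Y_i$; these witnesses are what make the forbidden-subgraph arguments (Claims~1--4 in the Appendix, based on the F1-/F3-graph characterization of 2-BMGs, Prop.~\ref{prop:2-BMG-forb-subgraph-char}) go through, and the arithmetic then forces exactly $6r(m-t)$ deletions between the $X_i$ and $S$ plus $r-18t$ deletions inside $S$. Without exhibiting these quantities and verifying the counting, the ``rigidity'' you correctly identify as the crux remains an open promise rather than a proof.

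A second, more conceptual issue: your anticipated obstacle for the converse direction points the wrong way. Binary-explainable BMGs are a subclass of BMGs, so restricting the target class can only \emph{increase} the optimal edit cost; there is no danger that hourglass-freeness ``accidentally admits a cheaper non-cover edit.'' Accordingly, the paper's converse direction is a near-verbatim copy of the general-case argument and never invokes Prop.~\ref{prop:binary-iff-subtree-colors} --- it only needs that the edited graph is a 2-BMG at all. The place where the restriction actually bites is the \emph{forward} direction: the graph reached by the intended edit must now itself be binary-explainable. This is precisely why the construction differs from \cite{Schaller:20y} in omitting the arcs from white to black vertices inside each $X_i$ (a bi-clique $X_i$ with out-arcs to $Y_i$ would create hourglasses $[xy \hourglass x'y']$), and why Lemma~\ref{lem:CCs-are-beBMGs} must be proved for the modified components in place of the general-case analogue. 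Your plan does cite that lemma for the forward direction, so the ingredients are in view, but the effort you budget for each direction should be swapped.
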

\begin{proof}
  Since binary-explainable BMGs can be recognized in polynomial time by
  Cor.~\ref{cor:BRT-constr-complexity}, the \PROBLEM{$2$-BMG EBEG} problem
  is clearly contained in NP.  To show the NP-hardness, we use a slightly
  modified version of the reduction from \PROBLEM{X3C} presented in
  \cite{Schaller:20y} that makes the resulting edited graphs
  binary-explainable BMGs: 
  
  \begin{figure}[ht]
    \begin{center}
      \includegraphics[width=0.75\textwidth]{./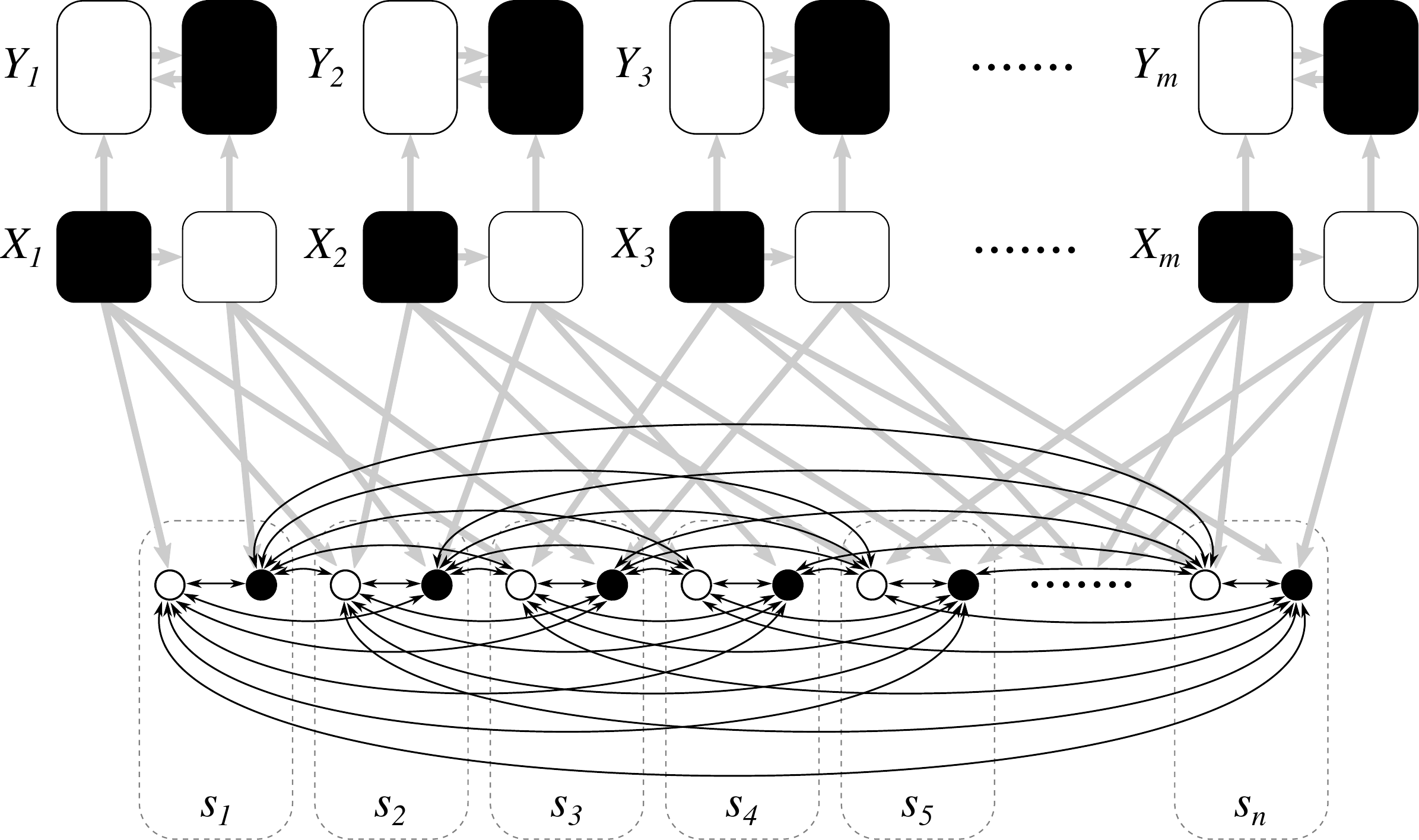}
    \end{center}
    \caption{Illustration of the reduction from \PROBLEM{Exact 3-Cover}.
      The thick gray arrows indicate that all arcs from that set to another
      set/vertex exist.  The only difference to the reduction used in
      \cite{Schaller:20y} is that, here, no arcs are inserted from the
      white vertices to the black vertices in each $X_i$.}
    \label{fig:reduction-beBMG}
  \end{figure}
  
  From an instance of \PROBLEM{X3C} with $|\mathfrak{S}|=n=3t$ and
  $\mathcal{C}=\{C_1,\dots,C_m\}$, we construct an instance
  $(\G=(V,E),\sigma,k)$ of \PROBLEM{$2$-BMG EBEG}, where $(\G,\sigma)$ is
  colored with the two colors $\Black$ and $\White$. Its vertex set
  consists of a bi-clique $S$ with a $\Black$ vertex $s_{b}$ and a $\White$
  vertex $s_{w}$ for every $s\in \mathfrak{S}$, and two vertex sets $X_i$,
  comprising $r\coloneqq 18t^2$ $\Black$ and $r$ $\White$ vertices, and
  $Y_i$ comprising $q:=3\times[6r(m-t)+r-18t]$ $\Black$ and $q$ $\White$
  vertices for each of the $m$ subsets $C_i$ in $\mathcal{C}$.
  Arcs are inserted \emph{from} the black vertices
  \emph{to} the white vertices in $X_i$, and all $Y_i$ are completed to
  bi-cliques. In addition, we insert
  \begin{description}[noitemsep,nolistsep]
    \item[--] $(x,y)$ for every $x\in X_i$ and $y\in Y_i$ with
    $\sigma(x)\ne\sigma(y)$ (note $(y,x)\notin E$),
    \item[--] $(x,s_{b})$ for every $\White$ vertex $x\in X_i$ and
    every element $s\in C_i$, and,
    \item[--] $(x,s_{w})$ for every $\Black$ vertex $x\in X_i$ and every
    element $s\in C_i$.
  \end{description}
  The graph $(\G,\sigma)$ is illustrated in Fig.~\ref{fig:reduction-beBMG}. It
  is obviously properly 2-colored and can be constructed in polynomial
  time.
  
  The proof proceeds by showing that for $k\coloneqq 6r(m-t)+r-18t$ there
  is a $t$-element subset $\mathcal{C}'$ of $\mathcal{C}$ that is a
  solution of \PROBLEM{X3C} if and only \PROBLEM{$2$-BMG EBEG} with input
  $(\G,\sigma,k)$ has a yes-answer. It is an almost verbatim copy of
  the proof of Thm.~5.4 in \cite{Schaller:20y} and requires quite a bit of
  additional notation. The full details are given in the Appendix.
  %\qed
\end{proof}

A detailed inspection of the proof of Thm.~\ref{thm:2BMG-EBEG-NP} in
the Appendix shows that, as in the proof of
\cite[Thm.~5.4]{Schaller:20y}, the arc modification set $F$ contains only
arc deletions. Therefore, we also obtain
\begin{corollary}
  \label{cor:2BMG-DBEG-NP}
  \PROBLEM{$2$-BMG DBEG} is NP-complete.
\end{corollary}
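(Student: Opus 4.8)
The plan is to reuse verbatim the polynomial-time reduction from \PROBLEM{X3C} and the instance $(\G,\sigma,k)$ built in the proof of Thm.~\ref{thm:2BMG-EBEG-NP}, and to argue that this single reduction already certifies NP-hardness of the deletion variant. Membership in NP is immediate: given a candidate arc-deletion set $F\subseteq E$ with $|F|\le k$, one verifies in polynomial time that $(\G-F,\sigma)$ is a binary-explainable $2$-BMG by invoking the recognition algorithm of Cor.~\ref{cor:BRT-constr-complexity}. Hence it remains only to transfer the NP-hardness argument from the editing problem to the deletion problem on the very same instance.

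For NP-hardness, I would establish the equivalence between yes-instances of \PROBLEM{X3C} and yes-instances of \PROBLEM{$2$-BMG DBEG} by sandwiching the deletion problem inside the editing problem in both directions. If \PROBLEM{$2$-BMG DBEG} has a solution $F$, then $F\subseteq E$ is in particular a valid \emph{editing} set, since $(\G\symdiff F,\sigma)=(\G-F,\sigma)$ is a binary-explainable $2$-BMG with $|F|\le k$; by the equivalence established in the proof of Thm.~\ref{thm:2BMG-EBEG-NP} this forces \PROBLEM{X3C} to have a yes-answer. Conversely, if \PROBLEM{X3C} has a yes-answer, the proof of Thm.~\ref{thm:2BMG-EBEG-NP} produces an optimal editing set $F$ of size at most $k$, and the crucial structural property—inherited from \cite[Thm.~5.4]{Schaller:20y} and confirmed in the Appendix—is that this $F$ consists solely of arc deletions, i.e.\ $F\subseteq E$. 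Thus $F$ is itself a valid deletion set and certifies a yes-answer for \PROBLEM{$2$-BMG DBEG}. The two implications together show that the reduction used for \PROBLEM{$2$-BMG EBEG} simultaneously reduces \PROBLEM{X3C} to \PROBLEM{$2$-BMG DBEG}.

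The only point requiring genuine care is the assertion that the editing set produced in the forward direction never inserts an arc; everything else is a routine transfer of the already-proved equivalence. Consequently, the main obstacle is not conceptual but verificational: one must confirm, case by case through the Appendix proof, that every arc placed into $F$ is an arc already present in $\G$, so that $F\subseteq E$ throughout. Since this deletion-only property is precisely what the remark preceding the corollary asserts, no new combinatorial construction is needed, and the corollary follows directly.
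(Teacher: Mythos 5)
Your argument is correct and matches the paper's own (very terse) justification: the paper likewise observes that the modification set $F$ constructed in Lemma~\ref{lem:2BMG-EBEG-NP-other-direction} consists solely of arc deletions (so $\G\symdiff F=\G-F$), while the reverse direction is free because any deletion set is in particular an editing set. Your write-up just makes explicit the two-way sandwich that the paper leaves implicit.
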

Finally, Thm.~\ref{thm:2BMG-EBEG-NP} and Cor.~\ref{cor:2BMG-DBEG-NP}
together with Remark~6.3 in \cite{Schaller:20y} immediately imply
\begin{theorem}
  \label{thm:ell-BMG-EBEG-DBEG}
  \PROBLEM{$\ell$-BMG EBEG} and \PROBLEM{$2$-BMG DBEG} are NP-complete for
  all $\ell\geq 2$.
\end{theorem}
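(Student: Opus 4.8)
The plan is to treat membership and hardness separately, and for hardness to reduce the general $\ell\geq 2$ problems to the two-colored cases already in hand. Membership in NP is immediate for every fixed $\ell$: given a candidate modification set $F$, one checks $|F|\leq k$ in linear time and then uses the near-cubic recognition algorithm of Cor.~\ref{cor:BRT-constr-complexity} to decide whether $(\G\symdiff F,\sigma)$ (respectively $(\G-F,\sigma)$ in the deletion variant) is a binary-explainable $\ell$-BMG. For the base of the hardness argument, Thm.~\ref{thm:2BMG-EBEG-NP} and Cor.~\ref{cor:2BMG-DBEG-NP} already establish NP-completeness of \PROBLEM{$2$-BMG EBEG} and \PROBLEM{$2$-BMG DBEG}, so it only remains to lift these results from two colors to an arbitrary number $\ell\geq 2$ of colors.

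The lifting is supplied by the color-augmentation construction of Remark~6.3 in \cite{Schaller:20y}. Concretely, I would take an arbitrary two-colored instance $(\G,\sigma,k)$ produced by the reduction of Thm.~\ref{thm:2BMG-EBEG-NP} and adjoin, for each of the $\ell-2$ extra colors, a small gadget of new-color vertices together with all arcs needed to keep the graph sf-colored. The gadget is chosen so that it is itself a binary-explainable BMG and so that, in any explaining tree, its vertices occupy a pendant position whose least common ancestor with every original vertex is the root. This decoupling guarantees that no inner vertex mixing new and old colors can satisfy Conditions~(a) and~(b) of Prop.~\ref{prop:binary-iff-subtree-colors}, so the extended graph is binary-explainable exactly when its two-colored part is. The crucial accounting step is to show that no optimal modification ever touches an arc incident to a new-color vertex: since those vertices already realize the unique correct best-match pattern, any such edit either raises the cost or introduces an hourglass. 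Hence an optimal edit set for the $\ell$-colored instance restricts to an optimal edit set for the underlying two-colored instance and conversely, giving a cost-preserving correspondence under the same threshold $k$. Because the augmentation inserts no arcs that later need to be deleted and forces no insertions, the deletion-only character of the two-colored reduction is inherited, so the same construction handles \PROBLEM{$\ell$-BMG EBEG} and \PROBLEM{$\ell$-BMG DBEG} simultaneously.

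The main obstacle is the accounting invariant of the previous paragraph, namely that new-color arcs are never edited in an optimal solution. Establishing this requires verifying that the new-color gadget is hourglass-free in isolation and remains hourglass-free when combined with any binary-explainable realization of the two-colored part, and that editing a new-color arc can be locally undone without increasing $|F|$ while restoring binary-explainability. Once this invariant is in place, together with the sf-coloredness of the extended graph, the theorem follows immediately from the $\ell=2$ results and Remark~6.3.
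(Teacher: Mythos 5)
Your proposal takes essentially the same route as the paper: NP membership via the polynomial-time recognition of binary-explainable BMGs, and hardness by combining Thm.~\ref{thm:2BMG-EBEG-NP} and Cor.~\ref{cor:2BMG-DBEG-NP} with the color-augmentation lifting of Remark~6.3 in \cite{Schaller:20y}. The paper treats that lifting as an immediate consequence of the cited remark and gives no further detail, whereas you sketch its internal mechanics (the gadget, the accounting invariant); this is a reasonable elaboration and does not change the argument.
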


\subsection{Formulation as Integer Linear Program}

In \cite{Schaller:20y}, integer linear programming (ILP) solutions were
presented for the problem of modifying an $\ell$-colored graph
$(\G,\sigma)$ to a BMG. Here we implement the restriction to binary-explainable 
BMGs. As a result, we obtain ILP formulations that solve the
\PROBLEM{$\ell$-BMG EBEG/DBEG/CBEG} problems.  We encode the arcs of an
input digraph $(\G=(V,E),\sigma)$ by the binary constants
\begin{align*}
  E_{xy}=1 \text{ if and only if } (x,y)\in E
\end{align*}
for all pairs $(x,y)\in V\times V$, $x\neq y$, and the vertex coloring by
the binary constants
\begin{align*}
  \varsigma_{x,s}=1  \text{ if and only if } \sigma(x)=s
\end{align*}
for all pairs $(x,s) \in V \times \sigma(V)$.  We will denote the modified
graph as $(\G^*,\sigma)$ and encode its arcs by binary variables
$\epsilon_{xy}$, i.e., $\epsilon_{xy} = 1$ if and only if
$(x,y)\in E(\G^*)$.  The minimization of the symmetric difference of the
two arc sets is modeled by the objective function
\begin{align}
  &\min \sum_{(x,y)\in V\times V} (1-\epsilon_{xy})E_{xy} +
  \sum_{(x,y)\in V\times V} (1-E_{xy})\epsilon_{xy}.
\end{align}
with the first sum counting arc deletions and the second counting
insertions.  For the completion and deletion problems, we add either
\begin{align}
  &E_{xy}\leq \epsilon_{xy} \text{ for all } (x,y)\in V\times V, \textrm{ or} 
  \label{ilp:add}\\
  &\epsilon_{xy} \leq E_{xy} \text{ for all } (x,y)\in V\times V, 
  \textrm{ respectively,}
  \label{ilp:del}
\end{align}
neither of which is required for the editing problem.  To ensure that
$(\G^*,\sigma)$ is properly vertex-colored, i.e., only arcs between
vertices of distinct colors exist, we add the constraints
\begin{align}
  \epsilon_{xy}=0 \text{ for all } (x,y)\in V\times V \text{ with }
  \sigma(x)=\sigma(y).\label{eq:proper-color}
\end{align}

We can employ Thm.~\ref{thm:Rbin-MAIN}, which states that the properly
vertex-colored graph $(\G^*,\sigma)$ is a binary-explainable BMG if and
only if (i) $(\G,\sigma)$ is sf-colored, and (ii) $\Rbin(\G^*,\sigma)$ is
consistent.  Condition~(i) translates to the constraint
\begin{align}
  &\sum_{y\neq x} \epsilon_{xy}\cdot\varsigma_{y,s} 
  >0\label{ilp:all-colors}
\end{align}
for all $x\in V$ and $s\ne\sigma(x)$.  To implement Condition~(ii), i.e.,
consistency of $\Rbin(\G^*,\sigma)$, we follow the approach of
\cite{Hellmuth:15}.  No distinction is made between two triples
$ba|c$ and $ab|c$.  In order to avoid superfluous variables and symmetry
conditions connecting them, we assume that the first two indices in triple
variables are ordered, i.e., there are three triple variables $t_{ab|c}$,
$t_{ac|b}$ and $t_{bc|a}$ for any three distinct $a, b, c \in V$. We add
constraints such that $t_{ab|c}=1$ if $ab|c\in\Rbin(\G^*,\sigma)$ (cf.\
Eq.~\eqref{eq:Rbin}), i.e.,
\begin{align}
  & \epsilon_{xy}+(1-\epsilon_{xy'}) - t_{xy|y'} \leq 
  1\label{ilp:triple-infer1}\\
  & \epsilon_{xy}+\epsilon_{xy'} - t_{yy'|x} \leq 1
  \label{ilp:triple-infer2}
\end{align}
for all ordered $(x,y,y')\in V^3$ with three pairwise distinct vertices
$x,y,y'$ and $\sigma(x)\neq \sigma(y)=\sigma(y')$.
Eq.~\eqref{ilp:triple-infer1} ensures that if $xy|y'$ is an informative
triple, i.e., $(x,y)$ is an arc ($\epsilon_{xy}=1$) and $(x,y')$ is not an
arc ($\epsilon_{xy'}=0$) in the edited graph, then $t_{xy|y'}=1$.
Similarly, Eq.~\eqref{ilp:triple-infer2} ensures that if $xy|y'$ and
$xy'|y$ are forbidden triples, i.e., $\epsilon_{xy}=1$ and
$\epsilon_{xy'}=1$, then $t_{yy'|x}=1$.  These constraints allow some
degree of freedom for the choice of the binary triple variables. For
example, we may put $t_{xy|y'}=1$ also in case $(x,y)$ is not an arc.
However, by Lemma~\ref{lem:strictdense}, for every consistent set of
triples $\mathscr{R}$ on $V$, there is a strictly dense consistent set of
triples $\mathscr{R}'$ with $\mathscr{R}\subseteq \mathscr{R}'$. We
therefore add the constraint
\begin{align}
  &t_{ab|c} + t_{ac|b} +t_{bc|a} = 1 \text{ for all } \{a,b,c\}\in 
  \binom{V}{3}
  \label{ilp:sd2}
\end{align}
that ensures that precisely one of the binary variables representing one of
the three possible triples on three leaves is set to 1. The final set
$\mathscr{R}'$ of triples obtained in this manner contains all informative
triples but could be larger than $\Rbin(\G,\sigma)$.  This reflects
Thm.~\ref{thm:BRT-all-bin-trees} which states that every binary tree
$(T',\sigma')$ explaining a BMG $(\G',\sigma')$ is a refinement of the BRT
$(T,\sigma')$.  Moreover, note that the triple set $r(T')$ of the binary
tree $T'$ is clearly strictly dense.  In particular, we have
$\Rbin(\G',\sigma')\subseteq r(T)\subseteq r(T')$.

To ensure consistency of the triple set, we employ Thm.~1, Lemma~4, and
ILP~5 from \cite{Hellmuth:15}, which are based on so-called 2-order
inference rules and add
\begin{align}
  &2t_{ab|c}+2t_{ad|b}-t_{bd|c}-t_{ad|c} \leq 2 \text{ for all }
  \{a,b,c,d\}\in \binom{V}{4}.
  \label{ilp:2orderInfRule}
\end{align}
In summary, we require $O(|V|^3)$ variables and $O(|V|^4)$ conditions,
where the most expensive parts are the triple variables $t_{ab|c}$ and the
2-order inference rules in Eq.~\eqref{ilp:2orderInfRule}.  For comparison,
the general approach in \cite{Schaller:20y} requires $O(|V|^4)$ variables
and conditions.  Due to the lower number of possible choices for the
variables, we expect the ILP solution for the binary-explainable-restricted
case to run (at least moderately) faster.

\section{Concluding Remarks}

We have shown here that binary-explainable BMGs are explained by a unique
binary-resolvable tree (BRT), which displays the also unique least resolved
tree (LRT). In general, the BRT differs from the LRT.  All binary
explanations are obtained by resolving the multifurcations in the BRT in an
arbitrary manner. The constructive characterization of binary-explainable
BMGs given here can be computed in near-cubic time, improving the
quartic-time non-constructive characterization in \cite{Schaller:20x},
which is based on the hourglass being a forbidden induced subgraph.  We
note that binarizing a leaf-colored tree $(T,\sigma)$ does not affect its
``biological feasibility'', i.e., the existence of a reconciliation map
$\mu:T\to S$ to a given or unknown species tree $S$, since every gene tree
$(T,\sigma)$ can be reconciled with any species tree on the set
$\sigma(L(T))$, see e.g.\ \cite{Geiss:20b}. We can therefore safely use the
additional information contained in the BRT compared to the LRT. As
discussed in \cite{Schaller:20x}, poor resolution of the LRT is often the
consequence of consecutive speciations without intervening gene
duplications. The same argumentation applies to the BRT, which we have seen
in Sec.~\ref{sect:sim} to be much better resolved than the LRT. Still BRTs
usually are not binary. We can expect that the combination of the BRT with
\textit{a priori} knowledge on the species tree $S$ can be used to
unambiguously resolve most of the remaining multifurcations in the BRT.
The efficient computation of BRTs is therefore of potentially practical
relevance whenever evolutionary scenarios are essentially free from
multifurcations, an assumption that is commonly made in phylogenetics but
may not always reflect the reality \cite{Slowinski:01}.

The arc modification problems for binary-explainable BMGs are very
similar to the general case: We have shown here that \PROBLEM{$\ell$-BMG
  EBEG} and \PROBLEM{$\ell$-BMG DBEG} are NP-complete for all
$\ell\geq 2$ using a reduction from \PROBLEM{X3C} that is very similar to
the general case of \PROBLEM{$\ell$-BMG Editing} and \PROBLEM{$\ell$-BMG
  Deletion} \cite{Schaller:20y}. The corresponding arc completion
problems \PROBLEM{$\ell$-BMG Completion} and \PROBLEM{$\ell$-BMG CBEG}
were already shown to be NP-complete in \cite{Schaller:20y}.
The BMG editing problems with and without the restriction
to binary-explainable BMG also take a very similar form when phrased as
integer linear programs. We shall see elsewhere \cite{Schaller:21z} that
they also can be tackled by very similar heuristics.

\bigskip

\textbf{Acknowledgments.} 
This work was supported in part by the Austrian Federal Ministries
BMK and BMDW and the Province of Upper Austria in the frame of the COMET
Programme managed by FFG.

\begin{appendix}
  \section*{Appendix}
  \subsection*{Proof of Theorem~\ref{thm:2BMG-EBEG-NP}}
  
  The proofs of Lemma~\ref{lem:2BMG-EBEG-NP-other-direction}
  and~\ref{lem:2BMG-EBEG-NP-one-direction} below are nearly verbatim copies
  of the corresponding parts in the proof of Thm.~5.4 in \cite{Schaller:20y}.
  The only difference is that the graphs constructed in the reduction from
  \PROBLEM{X3C} are subtly different, see Fig.~\ref{fig:reduction-beBMG}, and
  the use of Lemma~\ref{lem:CCs-are-beBMGs} instead on an analogous result
  \cite[Lem.~5.3]{Schaller:20y} for the editing problem of unrestricted
  2-BMGs.
  
  \begin{lemma}
    \label{lem:2BMG-EBEG-NP-other-direction}
    Let $\mathfrak{S}$ with $|\mathfrak{S}|=n=3t$ and
    $\mathcal{C}=\{C_1,\dots,C_m\}$ be an instance of \PROBLEM{X3C}, and
    $(\G,\sigma,k)$ be the corresponding input for \PROBLEM{$2$-BMG EBEG} as
    constructed in the proof of Thm.~\ref{thm:2BMG-EBEG-NP}.  If
    \PROBLEM{X3C} with input $\mathfrak{S}$ and $\mathcal{C}$ has a
    yes-answer, then \PROBLEM{$2$-BMG EBEG} with input $(\G,\sigma,k)$ has a
    yes-answer.
  \end{lemma}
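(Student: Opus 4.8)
The plan is to convert a given exact $3$-cover into an edit set $F$ consisting solely of arc deletions, chosen so that $(\G\symdiff F,\sigma)=(\G-F,\sigma)$ splits into connected components, each of which is an instance of the structure analyzed in Lemma~\ref{lem:CCs-are-beBMGs}. So let $\mathcal{C}'\subseteq\mathcal{C}$ with $|\mathcal{C}'|=t$ be an exact cover, and write $S_i\coloneqq\{s_b,s_w\mid s\in C_i\}$ for the block of $S$ indexed by $C_i$. The exactness of the cover means the blocks $S_i$ with $C_i\in\mathcal{C}'$ partition $S$, which is exactly the property I will exploit. I would set $F$ to be the union of (1) for every $C_i\notin\mathcal{C}'$, all arcs from $X_i$ into $S$ (the arcs $(x,s_b)$ for white $x\in X_i$, $s\in C_i$, and $(x,s_w)$ for black $x\in X_i$, $s\in C_i$), and (2) all arcs of the bi-clique $S$ that run between $S_i$ and $S_j$ for two distinct $C_i,C_j\in\mathcal{C}'$.

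Next I would verify that $\G-F$ is the disjoint union of the subgraph on $X_i\cupdot Y_i$ for each $C_i\notin\mathcal{C}'$ and the subgraph on $X_i\cupdot Y_i\cupdot S_i$ for each $C_i\in\mathcal{C}'$. The non-cover pieces match Lemma~\ref{lem:CCs-are-beBMGs} with hub $X=X_i$ and the single satellite set $Y_i$; the cover pieces match it with hub $X=X_i$ and the two satellite sets $Y_i$ and $S_i$, each of which is a bi-clique (for $S_i$ all internal arcs survive, since they connect vertices of the same cover block). The remaining arc types---black-to-white inside each $X_i$, and all cross-colored arcs from $X_i$ to $Y_i$ and from $X_i$ to $S_i$---are precisely types (i) and (ii) of the lemma, while there are by construction no arcs between distinct $X_i$, between distinct $Y_i$, or from a cover $X_i$ to any $S_j$ with $j\ne i$.

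The quantitative core is that $|F|$ must equal $k=6r(m-t)+r-18t$. Each set outside the cover contributes $3r$ arcs $(x,s_b)$ and $3r$ arcs $(x,s_w)$, i.e.\ $6r$ deletions, giving $6r(m-t)$ from part~(1). For part~(2), the bi-clique $S$ on $3t$ black and $3t$ white vertices has $18t^2=r$ arcs; the arcs internal to the blocks number $18t$ (each $S_i$ is a bi-clique on $3$ black and $3$ white vertices with $18$ arcs, over $t$ blocks), so exactly $r-18t$ crossing arcs are deleted. Hence $|F|=6r(m-t)+(r-18t)=k$.

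Finally I would invoke Lemma~\ref{lem:CCs-are-beBMGs}: every component is a binary-explainable $2$-BMG, and by its ``moreover'' clause the disjoint union $(\G-F,\sigma)$ is again a binary-explainable $2$-BMG; as $|F|\le k$, this yields the desired yes-answer. I expect the main obstacle to be the bookkeeping of the second and third paragraphs: confirming that \emph{no} inter-component arc survives---which is where exactness of the cover enters, through the blocks $S_i$ partitioning $S$ and a cover $X_i$ meeting $S$ only in $S_i$---and checking that the crossing-arc count is exactly $r-18t$, the value that pins down the budget $k$.
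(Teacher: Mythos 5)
Your proposal is correct and follows essentially the same route as the paper's proof: the same deletion-only edit set $F$ (removing all $X_i\to S$ arcs for non-cover sets and all inter-block arcs of the bi-clique $S$ with respect to the exact cover), the same count $|F|=6r(m-t)+(r-18t)=k$, the same decomposition of $\G-F$ into components of the form $X_i\cupdot Y_i$ or $X_i\cupdot Y_i\cupdot S_i$, and the same appeal to Lemma~\ref{lem:CCs-are-beBMGs} and its closing statement on disjoint unions. Your accounting of the $r-18t$ crossing arcs inside $S$ is in fact slightly more explicit than the paper's.
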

  \begin{proof}
    We emphasize that the coloring $\sigma$ remains unchanged throughout the
    proof. 
    
    Since \PROBLEM{X3C} with input $\mathfrak{S}$ and $\mathcal{C}$ has a
    yes-answer, there is a $t$-element subset $\mathcal{C}'$ of $\mathcal{C}$
    such that $\bigcup_{C\in\mathcal{C}'}C=\mathfrak{S}$.  We construct a set
    $F$ and add, for all $C_i\in\mathcal{C}\setminus\mathcal{C}'$ and all
    $s\in C_i$, the arcs $(x,s_{w})$ for every $\Black$ vertex $x\in X_i$ and
    the arcs $(x,s_{b})$ for every $\White$ vertex $x\in X_i$.  Since
    $|C_i|=3$ for every $C_i\in \mathcal{C}$ and
    $|\mathcal{C}\setminus\mathcal{C}'|=m-t$, the set $F$ contains exactly
    $6r(m-t)$ arcs, so far. Now, we add to $F$ all arcs $(s_b,s'_w)$ and
    $(s_w,s'_b)$ whenever the corresponding elements $s$ and $s'$ belong to
    distinct elements in $\mathcal{C}'$, i.e., there is no $C\in\mathcal{C}'$
    with $\{s, s'\}\subset C$.  Therefore, the subgraph of $\G-F$ induced by
    $\mathfrak{S}$ is the disjoint union of $t$ bi-cliques, each consisting
    of exactly $3$ $\Black$ vertices, $3$ $\White$ vertices, and $18$ arcs.
    Hence, $F$ contains, in addition to the $6r(m-t)$ arcs, further $r-18t$
    arcs. Thus $|F|=k$. This completes the construction of $F$.
    
    Since $F$ contains only arcs but no non-arcs of $\G$, we have
    $\G\symdiff F = \G-F$.  It remains to show that $\G\symdiff F$ is a
    BMG. To this end observe that $\G\symdiff F$ has precisely $m$ connected
    components that are either induced by $X_i\cup Y_i$ (in case
    $C_i\in\mathcal{C}\setminus\mathcal{C}'$ ) or $X_i\cup Y_i\cup S'$ where
    $S'$ is a bi-clique containing the six vertices corresponding to the
    elements in $C_i\in\mathcal{C}'$. In particular, each of these components
    corresponds to the subgraph as specified in
    Lemma~\ref{lem:CCs-are-beBMGs}.  To see this, note that the arcs in each
    connected component are given by (i) all arcs from the black to the white
    vertices in $X\coloneqq X_i$, (ii) all arcs $(x,y)$ with $x\in X$ and
    $y\in Y_i$ (or $y\in Y_i\cup S'$, respectively), and (iii) all arcs
    $(y_1, y_2)$ such that $y_1$ and $y_2$ are both contained in $Y_i$ (or in
    the same set in $\{Y_i,S'\}$, respectively).  In particular,
    Lemma~\ref{lem:CCs-are-beBMGs} implies that the disjoint union, i.e.\
    $(\G\symdiff F,\sigma)$, is a binary-explainable BMG.
    %\qed
  \end{proof}
  
  To demonstrate the other direction, we require the characterization of
  2-colored (but not necessarily binary-explainable) BMGs in terms of three
  (classes of) forbidden induced subgraphs \cite{Schaller:20y}:
  \begin{definition}[F1-, F2-, and F3-graphs]\par\noindent
    \begin{itemize}
      \item[\AX{(F1)}] A properly 2-colored graph on four distinct vertices
      $V=\{x_1,x_2,y_1,y_2\}$ with coloring
      $\sigma(x_1)=\sigma(x_2)\ne\sigma(y_1)=\sigma(y_2)$ is an
      \emph{F1-graph} if $(x_1,y_1),(y_2,x_2),(y_1,x_2)\in
      E$ and $(x_1,y_2),(y_2,x_1)\notin E$.
      \item[\AX{(F2)}] A properly 2-colored graph on four distinct vertices
      $V=\{x_1,x_2,y_1,y_2\}$ with coloring
      $\sigma(x_1)=\sigma(x_2)\ne\sigma(y_1)=\sigma(y_2)$ is an
      \emph{F2-graph} if $(x_1,y_1),(y_1,x_2),(x_2,y_2)\in E$ and
      $(x_1,y_2)\notin E$.
      \item[\AX{(F3)}] A properly 2-colored graph on five distinct vertices
      $V=\{x_1,x_2,y_1,y_2,y_3\}$ with coloring
      $\sigma(x_1)=\sigma(x_2)\ne\sigma(y_1)=\sigma(y_2)=\sigma(y_3)$ is an
      \emph{F3-graph} if $(x_1,y_1),(x_2,y_2),(x_1,y_3),(x_2,y_3)\in E$ and
      $(x_1,y_2),(x_2,y_1)\notin E$.
    \end{itemize}
    \label{def:forbidden-subgraphs}
  \end{definition}
  
  \begin{proposition}{\cite[Thm.~4.4]{Schaller:20y}}
    \label{prop:2-BMG-forb-subgraph-char}
    A properly 2-colored graph is a BMG if and only if it is sink-free and
    does not contain an induced F1-, F2-, or F3-graph.
  \end{proposition}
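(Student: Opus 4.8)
The plan is to reduce everything to the triple characterization of Prop.~\ref{prop:BMG-charac}. For a properly $2$-colored graph the property of being sink-free coincides with being sf-colored, so Prop.~\ref{prop:BMG-charac} tells us that a properly $2$-colored sink-free graph $(\G,\sigma)$ is a BMG if and only if the pair $(\mathscr{R}(\G,\sigma),\mathscr{F}(\G,\sigma))$ is consistent. It therefore suffices to prove, for a properly $2$-colored sink-free graph, that $(\mathscr{R}(\G,\sigma),\mathscr{F}(\G,\sigma))$ is consistent if and only if $(\G,\sigma)$ contains no induced F1-, F2-, or F3-graph; the sink-freeness half of the stated equivalence is then immediate, being forced by Prop.~\ref{prop:BMG-charac} in one direction and assumed in the other.

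For the ``only if'' direction I would use that the triple pair of an induced subgraph of a BMG is again consistent: if a tree $T$ agrees with $(\mathscr{R}(\G,\sigma),\mathscr{F}(\G,\sigma))$ (Prop.~\ref{prop:BMG-charac}), then for any vertex set $Q$ the restriction $T_{|Q}$ displays a triple on $Q$ exactly when $T$ does, hence it agrees with $(\mathscr{R}(\G,\sigma)_{|Q},\mathscr{F}(\G,\sigma)_{|Q})$, which equals $(\mathscr{R}(\G[Q],\sigma_{|Q}),\mathscr{F}(\G[Q],\sigma_{|Q}))$ by Obs.~\ref{obs:R-restriction}. It then suffices to check that each graph of Def.~\ref{def:forbidden-subgraphs} has an \emph{in}consistent triple pair, a finite local computation: for an F2-graph the informative triples already make the Aho graph on its four vertices connected, while for F1- and F3-graphs the forbidden triples are essential --- for instance an F3-graph forces $\lca_{T}(x_1,y_1)=\lca_{T}(x_1,y_3)$ and $\lca_{T}(x_2,y_2)=\lca_{T}(x_2,y_3)$, which contradicts the informative triples $x_1y_3|y_2$ and $x_2y_3|y_1$ --- after a small case distinction on the arcs left unspecified by the definitions.

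The ``if'' direction is the substantial one, and I would prove its contrapositive: if $(\mathscr{R},\mathscr{F})\coloneqq(\mathscr{R}(\G,\sigma),\mathscr{F}(\G,\sigma))$ is inconsistent, then $(\G,\sigma)$ contains an induced F1-, F2-, or F3-graph. The tool is \texttt{MTT}, which builds a tree agreeing with $(\mathscr{R},\mathscr{F})$ top-down and fails precisely when, on some vertex set $W$ with $|W|>1$ reached in the recursion, the graph it maintains on $W$ --- the Aho graph of $\mathscr{R}_{|W}$ augmented by the merges that $\mathscr{F}_{|W}$ forces in order to block forbidden triples --- is connected, so that $W$ cannot be split. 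From such a minimal failing $W$ I would extract a constant-size witness: a short connectivity path together with the arcs and non-arcs responsible for its triples pins down four or five vertices, sink-freeness supplies any missing best match, and one checks that their induced subgraph is forced into one of the patterns of Def.~\ref{def:forbidden-subgraphs}. To keep the bookkeeping manageable I would run the argument as an induction on $|V(\G)|$ that peels off the connected components of the top-level Aho graph, each of which is again sink-free and free of induced F1-, F2-, and F3-graphs by Obs.~\ref{obs:R-restriction}. The main obstacle is exactly this extraction step: turning a \emph{global} failure of \texttt{MTT} into one of the prescribed four- or five-vertex induced subgraphs requires a careful analysis of how forbidden triples obstruct the recursion, together with a complete case distinction guaranteeing that the witness is precisely an F1-, F2-, or F3-graph rather than some larger configuration.
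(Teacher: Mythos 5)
This proposition is not proved in the paper at all: it is imported verbatim from \cite[Thm.~4.4]{Schaller:20y}, so there is no in-paper argument to compare yours against. Judged on its own terms, your reduction is sound as far as it goes. The identification of sink-freeness with sf-coloredness for properly $2$-colored graphs is correct, so Prop.~\ref{prop:BMG-charac} does reduce the statement to: for sink-free properly $2$-colored $(\G,\sigma)$, the pair $(\mathscr{R}(\G,\sigma),\mathscr{F}(\G,\sigma))$ is consistent iff there is no induced F1-, F2-, or F3-graph. Your ``only if'' direction also works: restricting an agreeing tree to a leaf subset $Q$ preserves displayed and non-displayed triples, Obs.~\ref{obs:R-restriction} identifies the restricted pair with the pair of the induced subgraph, and the finite verification that every completion of an F1-, F2-, or F3-pattern has an inconsistent pair goes through (one caveat: your parenthetical that for F2 the informative triples alone connect the Aho graph is false for the completions in which $(y_1,x_1)$ and $(x_2,y_1)$ are arcs; there the forbidden triples are needed, exactly as for F1 and F3, so the case distinction you defer is genuinely required).

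The gap is the ``if'' direction, which is the entire content of the theorem, and your proposal does not prove it --- it only names the difficulty. The plan ``a minimal failing set $W$ in the \texttt{MTT} recursion yields a four- or five-vertex witness that is forced to be an F1-, F2-, or F3-graph'' is precisely the statement to be established, and nothing in the sketch makes it plausible that the witness is bounded: the obstruction to splitting $W$ is connectivity of an auxiliary graph in which forbidden triples force vertex merges, and a connected component of that graph can a priori require a long alternating chain of informative and forbidden triples, each contributed by different vertices. Collapsing such a chain to a constant-size induced subgraph of one of exactly three prescribed shapes is the real combinatorial work (in the literature this direction is done by an explicit tree construction from the neighborhood structure of the $2$-colored graph rather than by analyzing \texttt{MTT} failures, which suggests the extraction you propose is not the path of least resistance). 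As it stands, the hard half of the equivalence is a research plan, not a proof.
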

  We now have everything in place to provide a self-contained proof for the 
  reconstruction of a solution for \PROBLEM{X3C} from the graph constructed in 
  the proof of Thm.~\ref{thm:2BMG-EBEG-NP}:
  
  \begin{lemma}
    \label{lem:2BMG-EBEG-NP-one-direction}
    Let $\mathfrak{S}$ with $|\mathfrak{S}|=n=3t$ and 
    $\mathcal{C}=\{C_1,\dots,C_m\}$ be an instance of \PROBLEM{X3C}, and 
    $(\G,\sigma,k)$ be the corresponding input for \PROBLEM{$2$-BMG EBEG} 
    as constructed in the proof of Thm.~\ref{thm:2BMG-EBEG-NP}.
    If \PROBLEM{$2$-BMG EBEG} with input $(\G,\sigma,k)$ has a yes-answer, 
    then \PROBLEM{X3C} with input $\mathfrak{S}$ and $\mathcal{C}$ has a 
    yes-answer.
  \end{lemma}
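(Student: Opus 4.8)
The plan is to reconstruct an exact cover from the "active" part of the edited graph, exploiting the calibration of $k$. So suppose $F$ with $|F|\le k$ is such that $(\G^*,\sigma)\coloneqq(\G\symdiff F,\sigma)$ is a binary-explainable $2$-BMG. In particular $(\G^*,\sigma)$ is a $2$-BMG, so by Prop.~\ref{prop:2-BMG-forb-subgraph-char} it is sink-free and contains no induced F1-, F2-, or F3-graph (Def.~\ref{def:forbidden-subgraphs}); these are the only properties I will use, so the stronger binary-explainability hypothesis is harmless. Note also that the construction differs from the one in \cite{Schaller:20y} only by omitting the $\White\to\Black$ arcs inside each $X_i$ (cf.\ Fig.~\ref{fig:reduction-beBMG}); since this only removes arcs, it cannot create new forbidden induced subgraphs, so the accounting below is inherited from there.

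First I would establish the rigidity of the gadgets. Each $Y_i$ is a bi-clique on $q=3k$ $\Black$ and $q$ $\White$ vertices into which \emph{all} of $X_i$ sends arcs exactly as in Lemma~\ref{lem:CCs-are-beBMGs}; because $q$ far exceeds the entire budget $k$, no modification set of size $\le k$ can disturb the internal bi-cliques of the $Y_i$ or the $X_i\to Y_i$ arcs (this is the step taken over essentially verbatim from \cite[Thm.~5.4]{Schaller:20y}). Consequently each $X_i$ is anchored tightly above its $Y_i$, all $\Black$ (resp.\ $\White$) vertices of $X_i$ become twins in $\G^*$, and served $S$-vertices cluster within their server's region of the explaining tree. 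In particular, for every incidence $(i,s)$ with $s\in C_i$, the $r$ arcs from the $\Black$ vertices of $X_i$ to $s_w$ are \emph{all} retained or \emph{all} deleted, and likewise the $r$ arcs from the $\White$ vertices of $X_i$ to $s_b$. This twin/anchoring step is the main obstacle; everything afterwards is bookkeeping.

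Next I would bound how often an element can be served. Say that $X_i$ \emph{$\Black$-serves} $s\in C_i$ if some (hence, by the previous step, every) $\Black$ vertex of $X_i$ retains its arc to $s_w$ in $\G^*$, and define $\White$-service symmetrically via $s_b$. The key structural constraint is that each element is served at most once per side. Indeed, suppose distinct $C_i,C_j$ both $\Black$-serve a common $s$. Choosing $\Black$ vertices $x_i\in X_i$, $x_j\in X_j$ and $\White$ witnesses $y_i\in Y_i$, $y_j\in Y_j$, we have $(x_i,s_w),(x_j,s_w),(x_i,y_i),(x_j,y_j)\in E(\G^*)$ while $(x_i,y_j),(x_j,y_i)\notin E(\G^*)$, since no arcs run between distinct gadgets. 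Hence $\{x_i,x_j,y_i,y_j,s_w\}$ induces an F3-graph (identifying $s_w$ with $y_3$), contradicting Prop.~\ref{prop:2-BMG-forb-subgraph-char}. Thus every $s\in\mathfrak{S}$ is $\Black$-served by at most one set and, symmetrically, $\White$-served by at most one set.

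Finally I would close by counting. Writing $B$ and $W$ for the numbers of $\Black$-served and $\White$-served incidences $(i,s)$, the previous paragraph gives $B,W\le|\mathfrak{S}|=3t$, and the deleted $X$--$S$ arcs number exactly $r\,(6m-B-W)\ge 6r(m-t)$. Since $r=18t^2$ we have $k-6r(m-t)=r-18t<r$, so any deficiency $B+W<6t$ would already force more than $k$ deletions among the $X$--$S$ arcs alone; hence $B=W=3t$, every element is served exactly once on each side, and exactly $r-18t$ of the budget remains for the other edits. As elements with different servers lie in different tree-regions, their mutual bi-clique arcs inside $S$ cannot all survive; a short computation (as in \cite{Schaller:20y}) shows that at most $18t$ of the $S$-internal arcs can survive, so at least $r-18t=18t(t-1)$ of them must be deleted, with equality forcing $S$ to split into $t$ bi-cliques, each spanning the six $S$-vertices of three elements (one group per server, with each element served on both sides by the same set). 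Since both lower bounds are simultaneously tight, these $t$ triples are precisely the element-sets of $t$ members of $\mathcal{C}$, each serving all three of its elements on both sides. The resulting subfamily $\mathcal{C}'\subseteq\mathcal{C}$ therefore has $|\mathcal{C}'|=t$ pairwise disjoint members with $\bigcup_{C\in\mathcal{C}'}C=\mathfrak{S}$, i.e.\ $\mathcal{C}'$ is an exact $3$-cover and \PROBLEM{X3C} has a yes-answer.
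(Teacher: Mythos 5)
Your overall strategy is the same as the paper's (forbidden-subgraph arguments plus a two-stage counting argument calibrated by $k$), and your F3-graph argument showing that each element of $\mathfrak{S}$ receives in-arcs from at most one gadget per side is exactly the paper's central Claim. However, there are genuine gaps. First, your ``rigidity'' step is false as stated: a modification set of size $\le k$ \emph{can} delete arcs inside the $Y_i$ or between $X_i$ and $Y_i$ (there are more than $k$ of them, but $F$ may touch up to $k$). What the large value of $q$ actually buys --- and all the paper uses --- is that at most $2k<q$ vertices are incident to modified arcs, so each $Y_i$ retains at least one $\Black$ and one $\White$ vertex \emph{unaffected} by $F$; these serve as the witnesses $b_i,w_i$ in the forbidden-subgraph arguments (and they are also what guarantees your non-arcs $(x_i,y_j),(x_j,y_i)\notin E(\G^*)$, since $F$ could otherwise have \emph{inserted} such arcs). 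The derived ``twins''/``all retained or all deleted'' property, which you yourself flag as the main obstacle, is neither proved nor needed: if you define $B$ and $W$ via ``at least one arc retained,'' the count becomes the one-sided inequality $\text{deleted}\ge r(6m-B-W)$ with $B,W\le 3t$, which is all the budget argument requires. As written, your ``exactly $r(6m-B-W)$'' rests on an unproved claim.

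Second, the assertion that at most $18t$ of the $S$-internal arcs can survive is justified only by the vague phrase ``elements with different servers lie in different tree-regions.'' The paper needs two further forbidden-subgraph arguments here: (i) $F$ inserts no arcs between distinct $X_i$ and $X_j$ (an F1-graph argument using the unaffected $Y_j$-witnesses), and (ii) consequently, if $s_1$ and $s_2$ retain in-arcs from distinct gadgets, then \emph{no} arc between $s_1$ and $s_2$ survives (another F1-graph argument on $\{x_1,x_2,s_1,s_2\}$). Only then does $(\G\symdiff F)[S]$ decompose into components of at most six vertices, giving the bound of $18t$ surviving arcs and forcing the remaining $r-18t$ deletions. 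Deferring this to ``a short computation as in \cite{Schaller:20y}'' is not enough, since these two steps are where the structure of the exact cover is actually extracted; you should spell out both F1-graph arguments (or at least argument (i), without which (ii) fails because an inserted arc between $X_i$ and $X_j$ would destroy the required non-arcs of the F1-graph).
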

  \begin{proof}
    Since \PROBLEM{$2$-BMG EBEG} with input $(\G=(V,E),\sigma,k)$ has a
    yes-answer, there is a set $F$ with $|F|\le k=6r(m-t)+r-18t$ such that
    $(\G\symdiff F,\sigma)$ is a binary-explainable BMG.  Recall the
    construction of $(\G,\sigma)$ from $\mathfrak{S}$ and $\mathcal{C}$ and
    the proof of the converse (see Thm.~\ref{thm:2BMG-EBEG-NP} in the main
    text). We will show that we have to delete an arc set similar to the one
    as constructed for proving the converse statement. The following
    arguments are the same as in \cite{Schaller:20y}.
    
    First note that the number of vertices affected by $F$, i.e. vertices
    incident to inserted/deleted arcs, is at most $2k$.  Since
    $2k<q=|\{y\in Y_i\mid \sigma(y)=\Black\}|=|\{y\in Y_i\mid
    \sigma(y)=\White\}|$ for every $1\le i\le m$, we have at least one
    $\Black$ vertex $b_i\in Y_i$ and at least one $\White$ vertex
    $w_i\in Y_i$ that are unaffected by $F$.  Recall that $S$ is the
    bi-clique that we have constructed from a $\Black$ vertex $s_{b}$ and a
    $\White$ vertex $s_{w}$ for every $s\in \mathfrak{S}$.  We continue by
    proving
    \par\noindent
    \begin{inner-claim}
      \label{clm:at-most-one-Xi}
      Every vertex $s\in S$ has in-arcs from at most one $X_i$ in
      $\G\symdiff F$.
    \end{inner-claim}
    \begin{claim-proof}
      Assume w.l.o.g.\ that $s$ is $\Black$ and, for contradiction, that there
      are two distinct vertices $x_1\in X_i$ and $x_2\in X_j$ with $i\ne j$
      and $(x_1,s), (x_2,s)\in E\symdiff F$.  Clearly, both $x_1$ and $x_2$
      are $\White$.  As argued above, there are two (distinct) $\Black$ vertices
      $b_1\in Y_i$ and $b_2\in Y_j$ that are not affected by $F$.  Thus,
      $(x_1,b_1)$ and $(x_2,b_2)$ remain arcs in $\G\symdiff F$, whereas
      $(x_1,b_2)$ and $(x_2,b_1)$ are not arcs in $\G\symdiff F$, since they
      do not form arcs in $\G$.  In summary, we have five distinct vertices
      $x_1,x_2,b_1,b_2,s$ with
      $\sigma(x_1)=\sigma(x_2)\ne\sigma(b_1)=\sigma(b_2)=\sigma(s)$, arcs
      $(x_1,b_1),(x_2,b_2),(x_1,s),(x_2,s)$ and non-arcs
      $(x_1,b_2),(x_2,b_1)$.  Thus $(\G\symdiff F,\sigma)$ contains an
      induced F3-graph.  By Prop.~\ref{prop:2-BMG-forb-subgraph-char},
      $(\G\symdiff F,\sigma)$ is not a BMG; a contradiction.
    \end{claim-proof}
    
    By Claim~\ref{clm:at-most-one-Xi}, every vertex in $S$ has in-arcs from
    at most one $X_i$.  Note each $X_i$ has $r$ $\Black$ and $r$ $\White$
    vertices.  Since each element in $S$ is either $\White$ or $\Black$, each
    single element in $S$ has at most $r$ in-arcs.  Since $|S| = 2n$, we
    obtain at most $2rn = 2r(3t)=6rt$ such arcs in $\G\symdiff F$.  In
    $\G$, there are in total $6rm$ arcs from the vertices in all $X_i$ to the
    vertices in $S$.  By Claim \ref{clm:at-most-one-Xi}, $F$ contains at
    least $6r(m-t)$ deletions.  It remains to specify the other at most
    $r-18t$ arc modifications.  To this end, we show first
    \begin{inner-claim}
      \label{clm:at-presisely-Xi}
      Every vertex $s\in S$ has in-arcs from precisely one $X_i$ in
      $\G\symdiff F$.
    \end{inner-claim}
    \begin{claim-proof}
      Assume that there is a vertex $s\in S$ that has no in-arc from any
      $X_i$.  Hence, to the aforementioned $6r(m-t)$ deletions we must add
      $r$ further deletions.  However, at most $r-18t$ further edits are
      allowed; a contradiction.
    \end{claim-proof}
    So far, $F$ contains only arc deletions.  For the next arguments, we
    need the following two statements:
    \begin{inner-claim}
      \label{clm:no-insertion-XiXj}
      The modification set $F$ does not insert any arcs between $X_i$ and
      $X_j$ with $i\ne j$.
    \end{inner-claim}
    \begin{claim-proof}
      Assume for contradiction that $F$, and thus $\G\symdiff F$,
      contains an arc $(x_1,x_2)$ with $x_1\in X_i$, $x_2\in X_j$ and
      $i\ne j$.  W.l.o.g.\ assume that $x_1$ is $\White$ and $x_2$ is
      $\Black$.  As argued above there are $\Black$, resp., $\White$ vertices
      $b, w\in Y_j$ that are unaffected by $F$.  Therefore, $(x_2,w)$ and
      $(b,w)$ remain arcs in $\G\symdiff F$, whereas $(x_1,b)$ and $(b,x_1)$
      are not arcs in $\G\symdiff F$ since they do not form arcs in $\G$. In
      summary, $(x_1,x_2),(b,w),(x_2,w)$ are arcs in $\G\symdiff F$ while
      $(x_1,b),(b,x_1)$ are not arcs in $\G\symdiff F$.  Since moreover
      $\sigma(x_1)=\sigma(w)\ne\sigma(b)=\sigma(x_2)$,
      $(\G\symdiff F,\sigma)$ contains an induced F1-graph.  By
      Prop.~\ref{prop:2-BMG-forb-subgraph-char}, $(\G\symdiff F,\sigma)$ is not 
      a
      BMG; a contradiction.
    \end{claim-proof}
    \par\noindent
    \begin{inner-claim}
      \label{clm:deletions-in-S}
      Let $s_1,s_2\in S$ be vertices with in-arcs $(x_1,s_1)$, resp.,
      $(x_2,s_2)$ in $\G\symdiff F$ for some $x_1\in X_i$ and $x_2 \in X_j$
      with $i\ne j$.  Then $(s_1,s_2)$ and $(s_2,s_1)$ cannot be arcs in
      $\G\symdiff F$.
    \end{inner-claim}
    \begin{claim-proof}
      Assume w.l.o.g. that $(s_1,s_2)$ is an arc in $\G\symdiff F$ and that
      $s_1$ is $\Black$.  It follows that $x_1$ and $s_2$ are $\White$ and
      $x_2$ is $\Black$.  By construction of $\G$ and by
      Claim~\ref{clm:no-insertion-XiXj}, we clearly have
      $(x_1,x_2),(x_2,x_1)\notin E\symdiff F$.  In summary, we have four
      distinct vertices $x_1,x_2,s_1,s_2$ with
      $\sigma(x_1)=\sigma(s_2)\ne\sigma(s_1)=\sigma(x_2)$, arcs
      $(x_1,s_1),(x_2,s_2),(s_1,s_2)$ and non-arcs $(x_1,x_2),(x_2,x_1)$ in
      $\G\symdiff F$.  Thus $(\G\symdiff F,\sigma)$ contains an induced
      F1-graph.  By Prop.~\ref{prop:2-BMG-forb-subgraph-char},
      $(\G\symdiff F,\sigma)$ is not a BMG; a contradiction.
    \end{claim-proof}
    In summary, $\G\symdiff F$ has the following property: Every $s\in S$ has
    in-arcs from exactly one $X_i$, and there are no arcs between two
    distinct vertices $s_1$ and $s_2$ in $S$ that have in-arcs from two
    different sets $X_i$ and $X_j$,  respectively. Since $|C_i|=3$ for
    every $C_i\in\mathcal{C}$, $(\G\symdiff F)[S]$ contains connected
    components of size at most $6$, i.e., the $\Black$ and $\White$ vertex
    for each of the three elements in $C_i$. Hence, the maximum number of
    arcs in $(\G\symdiff F)[S]$ is obtained when each of its connected
    components contains exactly these $6$ vertices and they form a
    bi-clique. In this case, $(\G\symdiff F)[S]$ contains $18t$ arcs. We
    conclude that $F$ contains at least another $r-18t$ deletion arcs for
    $S$. Together with the at least $6r(m-t)$ deletions between the $X_i$ and
    the elements of $S$, we have at least $6r(m-t)+r-18t=k\geq |F|$
    arc deletions in $F$.  Since $|F|\le k$ by assumption, we obtain $|F|=k$.
    
    As argued above, the subgraph induced by $S$ is a disjoint union of $t$
    bi-cliques of $3$ $\White$ and $3$ $\Black$ vertices each.  Since all
    vertices of such a bi-clique have in-arcs from the same $X_i$ and these
    in-arcs are also in $\G$, we readily obtain the desired partition
    $\mathcal{C}'\subset\mathcal{C}$ of $\mathfrak{S}$.  In other words, the
    $C_i$ corresponding to the $X_i$ having out-arcs to vertices in $S$ in
    the edited graph $\G\symdiff F$ induce an exact cover of $\mathfrak{S}$.
    %\qed
  \end{proof}
  
\end{appendix}

%\clearpage
\bibliographystyle{plain}
\bibliography{preprint2-binbmg}

\begin{thebibliography}{10}

\bibitem{Aho:81}
A.~V. Aho, Y.~Sagiv, T.~G. Szymanski, and J.~D. Ullman.
\newblock Inferring a tree from lowest common ancestors with an application to
  the optimization of relational expressions.
\newblock {\em SIAM J Comput}, 10:405--421, 1981.

\bibitem{Bryant:97}
David Bryant.
\newblock {\em Building Trees, Hunting for Trees, and Comparing Trees: Theory
  and Methods in Phylogenetic Analysis}.
\newblock Dissertation, University of Canterbury, {Canterbury, NZ}, 1997.

\bibitem{Bryant:95}
David Bryant and Mike Steel.
\newblock Extension operations on sets of leaf-labeled trees.
\newblock {\em Adv. Appl. Math.}, 16(4):425--453, 1995.

\bibitem{Deng:18}
Yun Deng and David Fern{\'a}ndez-Baca.
\newblock Fast compatibility testing for rooted phylogenetic trees.
\newblock {\em Algorithmica}, 80:2453--2477, 2018.

\bibitem{Geiss:18a}
Manuela Gei{\ss}, John Anders, Peter~F. Stadler, Nicolas Wieseke, and Marc
  Hellmuth.
\newblock Reconstructing gene trees from {Fitch}'s xenology relation.
\newblock {\em J. Math. Biol.}, 77:1459--1491, 2018.

\bibitem{Geiss:19a}
Manuela Gei{\ss}, Edgar Ch{\'a}vez, Marcos Gonz{\'a}lez~Laffitte, Alitzel
  L{\'o}pez~S{\'a}nchez, B{\"a}rbel M~R Stadler, Dulce~I. Valdivia, Marc
  Hellmuth, Maribel Hern{\'a}ndez~Rosales, and Peter~F Stadler.
\newblock Best match graphs.
\newblock {\em J. Math. Biol.}, 78:2015--2057, 2019.

\bibitem{Geiss:20b}
Manuela Gei{\ss}, Marcos~E. Gonz{\'a}lez~Laffitte, Alitzel
  L{\'o}pez~S{\'a}nchez, Dulce~I. Valdivia, Marc Hellmuth, Maribel
  Hern{\'a}ndez~Rosales, and Peter~F. Stadler.
\newblock Best match graphs and reconciliation of gene trees with species
  trees.
\newblock {\em J. Math. Biol.}, 80:1459--1495, 2020.

\bibitem{Grunewald:07}
Stefan Gr{\"u}newald, Mike Steel, and M.~Shel Swenson.
\newblock Closure operations in phylogenetics.
\newblock {\em Math. Biosci.}, 208:521--537, 2007.

\bibitem{He:06}
Ying-Jun He, Trinh N~D Huynh, Jesper Jansson, and Wing-Kin Sung.
\newblock Inferring phylogenetic relationships avoiding forbidden rooted
  triplets.
\newblock {\em J. Bioinf. Comp. Biol.}, 4:59--74, 2006.

\bibitem{Hellmuth:15}
Marc Hellmuth, Nicolas Wieseke, Marcus Lechner, Hans-Peter Lenhof, Martin
  Middendorf, and Peter~F. Stadler.
\newblock Phylogenomics with paralogs.
\newblock {\em Proc Natl Acad Sci USA}, 112:2058--2063, 2015.

\bibitem{Henzinger:99}
M.~R. Henzinger, V.~King, and T.~Warnow.
\newblock Constructing a tree from homeomorphic subtrees, with applications to
  computational evolutionary biology.
\newblock {\em Algorithmica}, 24:1--13, 1999.

\bibitem{Karp:72}
Richard~M. Karp.
\newblock Reducibility among combinatorial problems.
\newblock In Raymond~E. Miller, James~W. Thatcher, and Jean~D. Bohlinger,
  editors, {\em Complexity of Computer Computations: Proceedings of a symposium
  on the Complexity of Computer Computations}, pages 85--103. Springer, Boston,
  MA, 1972.

\bibitem{Keller:12}
Stephanie Keller-Schmidt and Konstantin Klemm.
\newblock A model of macroevolution as a branching process based on
  innovations.
\newblock {\em Adv. Complex Syst.}, 15:1250043, 2012.

\bibitem{Nichio:17}
Bruno T.~L. Nichio, Jeroniza~Nunes Marchaukoski, and Roberto~Tadeu Raittz.
\newblock New tools in orthology analysis: A brief review of promising
  perspectives.
\newblock {\em Front Genet.}, 8:165, 2017.

\bibitem{Rusin:14}
L.~Y. Rusin, E.~Lyubetskaya, K.~Y. Gorbunov, and V.~Lyubetsky.
\newblock Reconciliation of gene and species trees.
\newblock {\em BioMed Res Int.}, 2014:642089, 2014.

\bibitem{BMG-corrigendum}
David Schaller, Manuela Gei{\ss}, Edgar Ch{\'a}vez, Marcos
  Gonz{\'a}lez~Laffitte, Alitzel L{\'o}pez~S{\'a}nchez, B{\"a}rbel M.~R.
  Stadler, Dulce~I. Valdivia, Marc Hellmuth, Maribel Hern{\'a}ndez~Rosales, and
  Peter~F. Stadler.
\newblock Corrigendum to ``{Best Match Graphs}''.
\newblock {\em J. Math. Biol.}, 2021.
\newblock in revision, arxiv.org/1803.10989v4.

\bibitem{Schaller:21z}
David Schaller, Manuela Gei{\ss}, Marc Hellmuth, and Peter~F. Stadler.
\newblock Heuristic algorithms for best match graph editing.
\newblock Technical report, Leipzig University, 2021.

\bibitem{Schaller:20x}
David Schaller, Manuela Gei{\ss}, Peter~F. Stadler, and Marc Hellmuth.
\newblock Complete characterization of incorrect orthology assignments in best
  match graphs.
\newblock {\em J. Math. Biol.}, 82:20, 2021.

\bibitem{Schaller:20y}
David Schaller, Peter~F Stadler, and Marc Hellmuth.
\newblock Complexity of modification problems for best match graphs.
\newblock {\em Theor. Comp. Sci.}, 2021.
\newblock in press; arxiv: 2006.16909v2.

\bibitem{Seemann:18}
Carsten~R. Seemann and Marc Hellmuth.
\newblock The matroid structure of representative triple sets and
  triple-closure computation.
\newblock {\em Eur. J. Comb.}, 70:384--407, 2018.

\bibitem{Semple:03a}
Charles Semple.
\newblock Reconstructing minimal rooted trees.
\newblock {\em Discr. Appl. Math.}, 127:489--503, 2003.

\bibitem{Semple:03}
Charles Semple and Mike Steel.
\newblock {\em Phylogenetics}.
\newblock Oxford University Press, Oxford UK, 2003.

\bibitem{Setubal:18a}
Jo{\~a}o~C. Setubal and Peter~F. Stadler.
\newblock Gene phyologenies and orthologous groups.
\newblock In Jo{\~a}o~C. Setubal, Peter~F. Stadler, and Jens Stoye, editors,
  {\em Comparative Genomics}, volume 1704, pages 1--28. Springer, Heidelberg,
  2018.

\bibitem{Slowinski:01}
J~B Slowinski.
\newblock Molecular polytomies.
\newblock {\em Mol Phylogenet Evol}, 19(1):114--120, 2001.

\bibitem{Stadler:20a}
Peter~F Stadler, Manuela Gei{\ss}, David Schaller, Alitzel L{\'o}pez, Marcos
  Gonzalez~Laffitte, Dulce Valdivia, Marc Hellmuth, and Maribel
  Hernandez~Rosales.
\newblock From pairs of most similar sequences to phylogenetic best matches.
\newblock {\em Alg. Mol. Biol.}, 15:5, 2020.

\end{thebibliography}

%\bibliographystyle{splncs04}
%\bibliography{binbmg}

\end{document}